\documentclass{article} \usepackage[T1]{fontenc}
\DeclareMathAlphabet{\monbb}{U}{bbold}{m}{n}
\usepackage{amsmath,amsfonts,amssymb, amsthm,url}
\usepackage{mathrsfs}
\usepackage{stmaryrd} 
\usepackage{graphics} \usepackage{color}
\usepackage{tikz}
\usepackage{enumitem}
\usepackage{multicol}
\usepackage{color,booktabs}

\usepackage{marvosym,fancybox}

%
%

\renewcommand{\epsilon}{\varepsilon}

%
%
\newcommand{\N}{\mathbb{N}} 
\newcommand{\Z}{\mathbb{Z}} \newcommand{\Q}{\mathbb{Q}}

%
%

\newcommand{\W}{\mathbb{W}} \newcommand{\A}{\mathcal{A}}
\newcommand{\B}{\mathcal{B}}

%
%

%
%

%
%



\newcommand{\Lmu}{L_{\mu}}
\newcommand{\mul}{\Lambda_{\mu}}


\newcommand\subsubsubsection[1]{\paragraph{#1}}

\newcommand{\st}{\ |\ }

\theoremstyle{plain} \newtheorem{theorem}{Theorem}[section]
\newtheorem{prop}[theorem]{Proposition}
\newtheorem{corollary}[theorem]{Corollary}
\newtheorem{lemma}[theorem]{Lemma} 
\newtheorem{claim}[theorem]{Claim}

\newenvironment{claimproof}[1]{\par\noindent\textit{Proof of the claim:}\space#1}{\leavevmode\unskip\penalty9999 \hbox{}\nobreak\hfill\quad\hbox{$\blacksquare$}}

\theoremstyle{definition} 
\newtheorem{definition}{Definition}[section]

\theoremstyle{remark} 
\newtheorem{remark}{Remark}[section]
\newtheorem{ex}{Example}[section] 

\newcounter{factcount}[theorem]

\newcommand{\bprf}[1][Proof:]{\begin{list}{}    {\setlength{\leftmargin}{0.5em} \setlength{\rightmargin}{0em}  \setlength{\listparindent}{1em}}   \item {\em \hspace{-1em}  #1  }}
\newcommand{\eprf}{\end{list}}


\title{$\mu$-Limit Sets of Cellular Automata from a Computational Complexity Perspective}

\author{{Laurent Boyer},{Martin Delacourt},{Victor Poupet},{Mathieu Sablik},\\{Guillaume Theyssier}}
\date{}

\begin{document}
\maketitle
\footnote{Research partially supported by the FONDECYT Postdoctorado Proyecto 3130496 and by grant 'Agence Nationale de la Recherche ANR-09-BLAN-0164'}

  \begin{abstract}
    This paper concerns $\mu$-limit sets of cellular automata: sets of
    configurations made of words whose probability to appear does not
    vanish with time, starting from an initial $\mu$-random
    configuration. More precisely, we investigate the computational
    complexity of these sets and of related decision problems. Main
    results: first, $\mu$-limit sets can have a $\Sigma_3^0$-hard
    language, second, they can contain only $\alpha$-complex
    configurations, third, any non-trivial property concerning them is
    at least $\Pi_3^0$-hard. We prove complexity upper bounds, study
    restrictions of these questions to particular classes of CA, and
    different types of (non-)convergence of the measure of a word
    during the evolution.
  \end{abstract}

\section{Introduction}

A cellular automaton (CA) is a complex system defined by a local rule which acts synchronously and uniformly on the configuration space. These simple models have a wide variety of different dynamical behaviors, in particular interesting asymptotic behaviors. 

In the dynamical systems context, it is natural to study the limit set of a cellular automaton: it is defined as the set of configurations that can appear arbitrarily far in time. This set captures the longterm behavior of the CA and has been widely studied since the end of the 1980s. Given a cellular automaton, it is difficult to determine its limit set. Indeed it is undecidable to know if it contains only one configuration~\cite{Kari-1992} and more generally, any nontrivial property of limit sets is undecidable~\cite{Kari-1994}. Another problem is to characterize which subshift can be obtained as limit set of a cellular automaton. This was first studied in detail by Lyman Hurd~\cite{Hurd-1987}, and significant progress have been made since~\cite{Maass-1995,Formenti-Kurka-2007} but there is still no characterization. The notion of limit set can be refined if we consider the notion of attractor~\cite{Hurley-1990-1,Kur}. 

However, these topological notions do not correspond to the empirical point of view where the initial configuration is chosen randomly, that is to say chosen according a measure $\mu$. That's why the notion of $\mu$-attractor is introduced by~\cite{Hurley-1990-2}. Like it is discussed in~\cite{km-2000} with a lot of examples, this notion is not satisfactory empirically and the authors introduce the notion of $\mu$-limit set. A $\mu$-limit set is a subshift whose forbidden patterns are exactly those, whose probabilities tend to zero as time tends to infinity. This set corresponds to the configurations which are observed when a random configuration is iterated.

As for limit sets, it is difficult to determine the $\mu$-limit set of a given cellular automaton, indeed it is already undecidable to know if it contains only one configuration~\cite{bpt-2006}, and as for limit sets, every nontrivial property of $\mu$-limit sets is undecidable~\cite{Delacourt-2011}. In \cite{Boyer-Delacourt-Sablik-2010}, it was shown that large classes of subshifts such as transitive sofic subshifts can be realized as $\mu$-limit sets. 

This paper aims at pushing techniques already used in \cite{Boyer-Delacourt-Sablik-2010,Delacourt-2011} to their limits in order to characterize the complexity of $\mu$-limit sets and associated decision problems. The main contribution is to show that the complexity of $\mu$-limit sets can be much higher than that of limit sets. This fact may seem counter-intuitive given that limit sets take into acount worst-case initial conditions whereas $\mu$-limit sets restrict to $\mu$-typical initial configurations, thus excluding possibly complex behaviors. However our proofs show that: first, some self-organization can be achieved from random initial configurations in order to initiate more or less arbitrarily chosen computations; second, the probabilistic conditions involved in the definition of $\mu$-limit sets allow in fact to encode more complexity in the decision problem of whether a word is accepted in the $\mu$-limit language or not.

 This article, after a section dedicated to definitions, is organized
 as follows:
\begin{itemize}
\item in Section~\ref{sec:constr} we give the detail of a generic
  construction we will use many times. It is similar to the ones
  in~\cite{Boyer-Delacourt-Sablik-2010,Delacourt-2011} but presented
  here as a ready-to-use tool (see Theorem~\ref{thm:maintheorem}).
\item  in Section~\ref{sec:complex} we give bounds on the complexity of the language of the $\mu$-limit set, which in general case is $\Sigma_3$-hard, then we show that this bound can be reached. We also give a cellular automaton whose $\mu$-limit set contains only $\alpha$-complex configurations.
\item in Section~\ref{sec:properties}, we deal with properties of $\mu$-limit sets. First we show that every nontrivial property is at least $\Pi_3$-hard. Then we investigate the complexity of $\mu$-nilpotency for different classes of CA.
\item in Section~\ref{sec:convergence} we discuss convergence issues. In particular the type of convergence: general limsup, Cesaro mean limit, simple convergence. We also show evidence of some late (non-recursive) convergence phenomena.
\end{itemize}

In the recent work~\cite{Hellouin-Sablik-2013}, similar constructions (with fairly different implementation details) are used, mainly to prove reachability results concerning limit probability measures obtained by iterating a CA from simple initial measures. Among other results, the set of measures that can be obtained as a simple limit is completely characterized, and moreover, it is proven that any set of measures following a necessary computability  condition and a natural ``topological'' condition can be achieved as a set of limit points of a sequence of measures obtained by iteration of a CA from a simple initial measure. This gives an interesting complementary point of view to the one adopted in the present paper, the link being that the $\mu$-limit set is the closure of the union of supports of limits points of the sequence of measures obtained by iterations. However, the translation of these results into the setting of $\mu$-limit sets is somewhat artificial, and, in any case, it does not give the complexity lower bounds established in this paper.

\section{Definitions}

\subsection{Words and Density}
\label{sub:words}
For a finite set $Q$ called an \emph{alphabet}, denote $Q^{\ast}= \bigcup_{n\in\N} Q^n$ the set of all finite words over $Q$. The \emph{length} of $u = u_0u_1\dots u_{n-1}$ is $|u| = n$. 
 We denote $Q^{\Z}$ the set of \emph{configurations} over $Q$, which are mappings from $\Z$ to $Q$, and  for $c\in Q^{\Z}$, we denote $c_z$ the image of $z\in \Z$ by $c$. Denote $\sigma$ the shift map, i.e. the translation over the space of configurations: $\forall c\in Q^{\Z}, \forall z \in \Z, \sigma(c)_z=c_{z+1}$. For $u\in Q^{\ast}$ and $0\leq i\leq j<|u|$, define the \emph{subword} $u_{[i,j]} =u_iu_{i+1}\dots u_j$; this definition can be extended to a configuration $c\in Q^{\Z}$ as $c_{[i,j]} =c_ic_{i+1}\dots c_j$ for $i,j\in\Z$ with $i\leq j$. The \emph{language} of a configuration $c\in Q^{\Z}$ is defined by $$L(c)=\{ u \in Q^{\ast}: \exists i \in\Z \textrm{ such that } u = c_{[ i , i + | u | - 1 ] }\}.$$ This notion extends naturally to any set of configuration ${S\subseteq Q^\Z}$ by taking the union. An important category of sets of configurations is that of \textit{subshift}. A subshift is a set of configuration which is translation invariant and closed for the product topology on $Q^\Z$. Equivalently, they are sets defined by languages; a set $S\subseteq Q^\Z$  is a subshift if there is a language $L$ of \textit{forbidden words} defining $S$, \textit{i.e.}
\[S = \{c : L(c)\cap L=\emptyset\}.\]

Subshifts are the core objects of symbolic dynamics~\cite{Lind-Marcus-1995}. Among the different kinds of subshifts, we will consider \textit{effective subshifts}, \textit{i.e.} those such that the forbidden language can be chosen recursively enumerable.\\

For every $u \in Q^{\ast}$ and $i\in\Z$, define the \emph{cylinder} $[u]_i$ as the set of configurations containing the word $u$ in position $i$ that is to say $[u]_i = \{c\in Q^{\Z} : c_{[i,i+|u|-1]} = u\}$. If the cylinder is at the position $0$, we just denote it by $[u]$.

For all $u,v\in Q^{\ast}$ define $|v|_u$  the \emph{number of occurences} of $u$ in $v$ as:
$$|v|_u=\textrm{card}\{i\in[0,|v|-|u|] : v_{[i,i+|u|-1]}=u\}$$
(in particular ${|v|_u=0}$ as soon as ${|u|>|v|}$).

For finite words $u,v\in Q^{\ast}$, if $|u|<|v|$, the density of $u$ in $v$ is defined as $d_v(u)=\frac{|v|_u}{|v|-|u|}$. For a configuration $c\in Q^{\Z}$, the \emph{density} $d_c(v)$ of a finite word $v$ is:
\begin{displaymath}
d_c(v)=\limsup_{n\to +\infty} \frac{|c_{[-n,n]}|_v}{2n+1-|v|}.
\end{displaymath}

These definitions can be generalized for a set of words $W\subset
Q^{\ast}$, we write $|u|_W$ and $d_c(W)$. We can give similar
definitions for semi-configurations (indexed by $\N$) too.

We will also use the classical notion of density of a subset
${X\subseteq\Z}$ of integers and denote it simply $d$:
\[d(X) = \limsup_{n\to +\infty} \frac{|X\cap\{-n,\ldots,n\}|}{2n+1}\]

\begin{definition}[Growing computable sequence]
  
  A sequence $w=(w_i)_{i\in\N}$ of finite words on the alphabet $Q$ is
  a \emph{growing computable sequence} when:
  \begin{itemize}
  \item $\lim_{i\to\infty}|w_i|=\infty$;
  \item there exists a Turing machine that computes $w_i$ when given the input $i$.
  \end{itemize}
  Denote $\W(Q)$ the set of growing computable sequences on
  alphabet $Q$. For any $w\in\W(Q)$, we define the associated language
  of persistent words : 
  \[L_w=\{u\in Q^*, d_{w_i}(u)\not\to_{i\to\infty} 0\}.\]
\end{definition}

The following lemma shows that we can produce the persistent language
of any given growing computable sequence by another growing
computable sequence where we have a precise control on time and space
resource needed for the computation of each word of the sequence.

\begin{lemma}
  \label{lem:incrseq}
  Let $T$ and $S$ be computable functions from $\N$ to itself which have the following
  properties:
  \begin{itemize}
  \item $T(i)>>i$ and $i>>S(i)>>\log(i)$;
  \item the time complexity of both $T$ and $S$ are $o(T)$;
  \item the space complexity of $T$ is at most $S$ and that of $S$ is $o(S)$.
  \end{itemize}
  Consider any growing computable sequence $w=(w_i)_{i\in\N}$. Then
  there exists another growing computable sequence
  $w'=(w'_i)_{i\in\N}$ and a Turing machine $\phi$ (with possibly several heads
  and tapes) such that:
  \begin{itemize}
  \item $L_w=L_{w'}$;
  \item $\phi$ computes $w'_i$ on input $i$ in time at most $T(i)$ and
    space at most $S(i)$ (for large enough $i$).
  \end{itemize}
\end{lemma}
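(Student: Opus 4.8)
The plan is to reindex the original sequence $w=(w_i)_i$ so that each word becomes available at a stage where there is enough time and space to output it, and to pad with repetitions in between so that densities are preserved in the limit. Concretely, I would define a computable function $r:\N\to\N$ with $r(i)\to\infty$ slowly enough that computing $w_{r(i)}$ (including the time to simulate the Turing machine for $w$ and to write down its output, whose length is itself bounded by the running time) fits in time $T(i)$ and space $S(i)$; such an $r$ exists because $T(i)\gg i$, $S(i)\gg\log i$, and both $T,S$ are themselves computable within the allotted resources (their time complexity is $o(T)$ and the space complexity of $S$ is $o(S)$, of $T$ at most $S$). The first key point is to check that $r$ can be chosen \emph{non-decreasing} and \emph{unbounded}: run a clock of length $T(i)$, and inside it try to compute $w_j$ for the largest $j\le i$ for which the simulation has been seen to halt within the budget at some earlier stage; standard dovetailing makes this monotone and, since each fixed $j$ is eventually reachable, unbounded.

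Then I would set $w'_i = w_{r(i)}$. The second key point is that $L_{w'}=L_w$. For the inclusion $L_{w'}\subseteq L_w$: if $u\notin L_w$ then $d_{w_i}(u)\to 0$, and since $r(i)\to\infty$, $d_{w'_i}(u)=d_{w_{r(i)}}(u)\to 0$, so $u\notin L_{w'}$. For the reverse inclusion $L_w\subseteq L_{w'}$: if $d_{w_j}(u)\not\to 0$, pick $\e>0$ and infinitely many $j$ with $d_{w_j}(u)>\e$; because $r$ is non-decreasing and unbounded, each such $j$ equals $r(i)$ for some $i$ (in fact for a whole block of consecutive $i$'s), hence $d_{w'_i}(u)>\e$ for infinitely many $i$ and $u\in L_{w'}$. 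The only subtlety is surjectivity of a monotone unbounded $r$ onto $\N$: if $r$ skips some value $j$ it jumps from $<j$ to $>j$, and one must ensure the skipped $j$'s do not carry all the density of $u$; this is handled by additionally forcing $r$ to increase by at most $1$ at each step (wait at stage $i$ until the clock affords $w_{r(i-1)+1}$), so $r$ hits every natural number. I also need $w'$ to be a \emph{growing} sequence, i.e. $|w'_i|\to\infty$: since $r(i)\to\infty$ and $|w_j|\to\infty$, this is immediate.

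Finally, $\phi$ is the machine that on input $i$ runs the clock/dovetailing described above and outputs $w_{r(i)}$; by construction it halts within time $T(i)$ and space $S(i)$ for all large $i$. I expect the main obstacle to be the bookkeeping that makes $r$ simultaneously (a) computable within the tight $T(i)$ time and $S(i)$ space bounds — in particular, one cannot afford to recompute the whole history, so the clock-and-threshold mechanism must be arranged to use only $o(S)$ auxiliary space beyond writing the output — and (b) non-decreasing with unit increments and unbounded, so that no density is lost. Once $r$ has these properties, the equality $L_w=L_{w'}$ and the resource bounds on $\phi$ follow directly.
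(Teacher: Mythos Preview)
Your proposal is correct and is essentially the paper's approach: the paper also sets $w'_i = w_{r(i)}$ for a non-decreasing, eventually surjective $r$ obtained by running the original machine $\phi_0$ on successive inputs under a time/space budget depending on $i$ and outputting the last completed $w_j$, then deduces $L_w=L_{w'}$ exactly as you do. For the obstacle you single out (unit increments of $r$ computable from $i$ alone, without storing $r(i-1)$), the paper's device is to first slow $\phi_0$ so that $T_0(j{+}1)\ge 2T_0(j)$ and to use two thresholds $\lambda(i{-}1)<\lambda(i)$ with $\lambda(i{+}1)<2\lambda(i{-}1)$; these two inequalities together force $r(i{+}1)\in\{r(i),r(i){+}1\}$ purely from the value of $i$.
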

\begin{proof}
  Let $\phi_0$ be a Turing machine producing $w_j$ on input $j$.  We
  can suppose without loss of generality that the time ${T_0(j)}$
  spent by $\phi_0$ to produce $w_j$ on input $j$ verifies:
  \[T_0(j+1)\geq 2\cdot T_0(j)\]
  This can be obtained by artificially slowing $\phi_0$ if necessary
  (on input $j$, recompute 2 times step $j-1$ before doing the real work to
  produce $w_j$). 
  We now sketch the behavior of $\phi$ on input $i$:
  \begin{itemize}
  \item $\phi$ has an output tape initialized with the empty word;
  \item it also initializes a space marker at position $S(i)$ and
    precomputes $\lambda(i-1)$ and $\lambda(i)$ ($\lambda$ is a function to be
    precised later, but smaller than $T$ and as easy to compute);
  \item it simulates $\phi_0$ on each successive entry $j\leq i$
  \item at each step of $\phi_0$ it increments some step counter and
    check that it is less than $\lambda(i)$  and that everything still fits within space
    $S(i)$:
    \begin{itemize}
    \item if not it stops with the current output written on the
      output tape;
    \item if it is OK, it goes on;
    \end{itemize}
  \item when $\phi_0$ reaches an halting state on input $j$, it copies the output produced
    by $\phi_0$ on the output tape and then check that the step
    counter is less than $\lambda(i-1)$:
    \begin{itemize}
    \item if it is the case it cleans the working tape of $\phi_0$,
      and start a new simulation of $\phi_0$ on input $j+1$;
    \item if it is not the case it stops and such outputs $w_j=\phi_0(j)$.
    \end{itemize}
  \end{itemize}
  Both the counter incrementation routine and the halting state
  routine above take time $o(i)$ because it is just a matter of doing
  a constant number of erasing/copying/comparing/incrementing words of
  length at most $S(i)$. So if we take ${\lambda(i)}$ smaller than
  ${(T(i)-i)/i}$ we are guaranteed that $\phi$ halts in time at most
  $T(i)$ using space at most $S(i)$ (for $i$ large enough). We also
  want $\lambda(i)$ to grow slowly, precisely such that:
  \[\lambda(i+1)<2\cdot\lambda(i-1).\]

  Then, by construction, $\phi$ always outputs some $w_i$ or the empty
  word. First, if $\phi(i)$ produces $w_j$ then by construction
  $\phi(i+1)$ produces either $w_j$ or $w_{j+1}$. Indeed if $\phi$
  produces $w_j$ on input $i$ it is because:
  \begin{itemize}
  \item either ${\lambda(i-1)<T_0(j)}$ and therefore
    ${\lambda(i+1)<2\cdot T_0(j)\leq T_0(j+1)}$ so that $\phi(i+1)$
    also produces $w_j$;
  \item or ${\lambda(i)<T_0(j+1)}$ and therefore $\phi(i+1)$ produces
    either $w_j$ (in the case where ${\lambda(i+1)<T_0(j+1)}$) or $w_{j+1}$
    (if ${\lambda(i+1)\geq T_0(j+1)}$).
  \end{itemize}
 because $\phi(i)$
  produced $w_j$ with step counter at most $\lambda(i)$, but could not
  produce $w_{j+1}$, so $\phi(i+1)$ either has time to produce
  $w_{j+1}$ and outputs that (by the halting conditions) or is short
  of time and keeps the previous successfull output which is $w_j$. 

Second, for any $j$,
  there must be some large enough $i$ such that $\phi(i)$ produces
  some $w_{j'}$ with ${j'\geq j}$ (precisely, if $i$ is large enough so that $\phi_0(j)$
  halts in less than $\lambda(i)$ steps).  Therefore, the sequence
  $w'$ produced by $\phi$ is, after some finite prefix of empty words,
  of the form:
  \[\underbrace{w_j,\ldots,w_j}_{\text{finite$>0$}},\underbrace{w_{j+1},\ldots,w_{j+1}}_{\text{finite$>0$}},\underbrace{w_{j+2},\ldots,w_{j+2}}_{\text{finite$>0$}},\ldots\]
  We deduce that $L_w=L_{w'}$.
\end{proof}

\subsection{Cellular Automata}

\begin{definition} [Cellular automaton]

A \emph{cellular automaton (CA)} is a triple $\A=(Q_{\A},r_{\A},\delta_{\A})$ where $Q_{\A}$ is a finite set called \emph{set of states} or \emph{alphabet}, $r_{\A}\in \N$ is the \emph{radius} of the automaton, and $\delta_{\A}:Q_{\A}^{2r_{\A}+1}\to Q_{\A}$ is the \emph{local rule}.\\
 
 The configurations of a cellular automaton are the configurations over $Q_{\A}$.
 A global behavior is induced and we will denote $\A(c)$ the image of a configuration $c$ given by: $\forall z\in \Z, \A(c)_z=\delta_{\A}(c_{z-r},\dots,c_z,\dots,c_{z+r})$. Studying the dynamic of $\A$ is studying the iterations of a configuration by the map $\A:Q_{\A}^{\Z}\to Q_{\A}^{\Z}$.
\end{definition}

When there is no ambiguity, we will write $Q$, $r$ and $\delta$ for $Q_{\A}$, $r_{\A}$, $\delta_{\A}$.

\newcommand\allalphabets{\mathcal{Q}}
In this paper, to avoid artificial set-theoretical technicalities, we
fix some countable set ${\allalphabets=\{q_0,q_1,q_2,\ldots\}}$ and adopt the convention that
all cellular automaton alphabets we consider are subsets of
$\allalphabets$. This allows us to speak about the set of all
cellular automata, or the set of all sets of configurations.

A state $a\in Q_{\A}$ is said to be \emph{permanent} for a CA $\A$ if for any $u,v\in Q_{\A}^r$, $\delta (uav)=a$. It is said to be \emph{quiescent} if $\delta (a^{2r+1})=a$.

\subsection{Measures}
\newcommand\cyl[2]{\ensuremath{[{#1}]_{#2}}}
\newcommand\Mes[1]{\ensuremath{\mathcal{M}({#1})}}

We denote by $\Mes{Q^\Z}$ the set of Borel probability measures on $Q^\Z$. By Carath\'eodory extension theorem, Borel probability measures are characterized by their value on cylinders. A measure is given by a function $\mu$ from cylinders to the real interval $[0,1]$ such that $\mu(Q^\Z) = 1$ and
\[\forall u\in Q^*, \forall z\in\Z, \quad \mu(\cyl{u}{z}) = \sum_{q\in Q}\mu(\cyl{uq}{z}) = \sum_{q\in Q}\mu(\cyl{qu}{z-1})\]

A measure $\mu$ is said to be \emph{translation invariant} or \emph{$\sigma$-invariant} if for any measurable set $E$ we have $\mu(E)=\mu(\sigma(E))$.

In addition, $\mu$ is \emph{$\sigma$-ergodic} if for any $\sigma$-invariant measurable set $E$ we have $\mu(E)=0$ or $\mu(E)=1$.
Finally, we say $\mu$ \emph{has full support} if $\mu([u])>0$ for any word $u$.

A $\sigma$-invariant measure $\mu$ is \emph{computable} if there exists some computable ${f:Q^\ast\times\Q\rightarrow\Q}$ (where $Q$ is the set of states) with
\[\forall \epsilon>0,\forall u\in Q^\ast,\ \bigl|\mu([u])-f(u,\epsilon)\bigr|\leq \epsilon\]

The simplest and most natural class of computable and
$\sigma$-invariant measures is that of
\emph{Bernoulli measures}: they correspond to the case where each cell
of a configuration is chosen independently according to a common fixed
probability law over the alphabet.
\begin{definition}[Bernoulli measure]
For an alphabet $Q$, a \emph{Bernoulli measure} is a measure $\mu$
such that:
\[\forall u\in Q^\ast, \forall i\in\Z, \mu([u]_i)=\prod_{q\in
  Q}\mu([q]_0)^{|u|_q}.\]
The state probabilities $\mu([q]_0)$ are called the
\emph{coefficients} of $\mu$. $\mu$ has full support if all coefficients are non-null.

The \emph{uniform Bernoulli measure} $\mu_0$ is the Bernoulli measure
whose coefficients are all equal, equivalentely it is defined by: $$\forall u\in Q^*, i\in \Z, \mu_0([u]_i)=\frac{1}{|Q|^{|u|}}$$
\end{definition}

For a CA $\A=(Q,r,\delta)$ and $u\in Q^*$, we denote for all $t\in \N$, $\A^t\mu([u])=\mu\left(\A^{-t}([u])\right)$.

\begin{definition}[Generic configuration]
A configuration $c$ is said to be \emph{weakly generic} for an alphabet $Q$  and a measure $\mu$ if there exists a constant $M$ such that, for any  word $u\in Q^*$, $\frac{1}{M}\mu([u])\leq d_c(u)\leq M\mu([u])$.  If, moreover, any word has density $\mu([u])$, the configuration is said to be \emph{generic}.
\end{definition}

\begin{remark}
 The set of weakly generic configurations has measure $1$ in $Q^{\Z}$. Which means that a configuration that is randomly generated according to measure $\mu$ is a generic configuration.
\end{remark}

\subsection{$\mu$-Limit Sets}\label{sec:mulimites}

A $\mu$-limit set is a subshift associated to a cellular automaton and a probability measure \cite{km-2000}. It is defined by their language as follows.

\begin{definition}[Persistent set]
For a CA $\A$, define the \emph{persistent set} $L_{\mu}(\A)\subseteq Q^*$ by: $\forall u \in Q^*$:
 $$u\notin L_{\mu}(\A) \Longleftrightarrow \lim_{t\rightarrow\infty}\A^t\mu([u]_0)=0.$$
 
 Then the \emph{$\mu$-limit set} of $\A$ is $\Lambda_{\mu}(\A)=\left\{c\in Q^{\Z}:L(c)\subseteq L_{\mu}(\A) \right\}$.
\end{definition}

\begin{remark}
Two $\mu$-limit sets are therefore equal if and only if their languages are equal.
\end{remark}

\begin{definition}[$\mu$-nilpotency]
A CA $\A$ is said to be $\mu$-nilpotent if $\Lambda_{\mu}(\A)=\{a^{\Z}\}$ for some $a\in Q_{\A}$ or equivalently $L_{\mu}(\A)=a^*$.
\end{definition}

The question of the $\mu$-nilpotency of a cellular automaton is proved undecidable in \cite{bpt-2006}. The problem is still undecidable with CA of radius $1$ and with a permanent state. 

\begin{definition}[Set of predecessors]
Define the set of predecessors at time $t$ of a finite word $u$ for a CA $\A$ as $P^t_{\A}(u)=\left\{v\in Q^{|u|+2rt}: \A^t([v]_{-rt})\subseteq [u]_0\right\}$.
\end{definition}

The following lemma translates the belonging to the $\mu$-limit set in terms of density in images of a weakly generic configuration. 

\begin{lemma}
\label{lem:generic}
Given a CA $\A$, a $\sigma$-invariant measure $\mu\in\Mes{Q^\Z}$ and a finite word $u$, for any weakly generic configuration $c$:\\
\begin{center}$u\notin L_{\mu}(\A)$ $\Longleftrightarrow$  $\lim\limits_{t\to +\infty}d_{\A^t(c)}(u) = 0$
\end{center}
\end{lemma}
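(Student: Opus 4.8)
The plan is to exploit two facts: the equivalence is really about comparing the measure $\A^t\mu([u]_0)$ with the empirical density $d_{\A^t(c)}(u)$, and weak genericity gives exactly a two-sided multiplicative comparison of $d_{c}(v)$ with $\mu([v])$ for \emph{every} word $v$. First I would observe that for any word $u$ and any time $t$, the density of $u$ in $\A^t(c)$ can be rewritten by grouping cells according to the word they read at time $0$ under $\A^t$: a position $i$ in $\A^t(c)$ carries an occurrence of $u$ starting there precisely when $c_{[i-rt,\,i+|u|-1+rt]}\in P^t_{\A}(u)$, i.e. is one of the finitely many predecessor words $v\in Q^{|u|+2rt}$. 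Hence
\[
d_{\A^t(c)}(u)=d_c\bigl(P^t_{\A}(u)\bigr)=\sum_{v\in P^t_{\A}(u)}d_c(v),
\]
the sum being finite (there are at most $|Q|^{|u|+2rt}$ terms) so the $\limsup$ defining the density behaves well under it.

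Next I would translate the measure side similarly: by definition $\A^t\mu([u]_0)=\mu(\A^{-t}([u]_0))=\sum_{v\in P^t_{\A}(u)}\mu([v]_{-rt})$, again a finite sum. Now apply weak genericity of $c$ with its constant $M$: summing the inequalities $\frac1M\mu([v])\le d_c(v)\le M\mu([v])$ over $v\in P^t_{\A}(u)$ (and using $\sigma$-invariance of $\mu$ to drop the position subscript $-rt$) yields
\[
\frac1M\,\A^t\mu([u]_0)\;\le\;d_{\A^t(c)}(u)\;\le\;M\,\A^t\mu([u]_0)
\]
for every $t$. From this sandwich the equivalence is immediate: $\A^t\mu([u]_0)\to 0$ forces $d_{\A^t(c)}(u)\to 0$ via the upper bound, and conversely $d_{\A^t(c)}(u)\to 0$ forces $\A^t\mu([u]_0)\to 0$ via the lower bound; combined with the definition of $L_\mu(\A)$ this is exactly the claimed statement.

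The one point deserving care — and what I expect to be the main technical obstacle — is the manipulation of the $\limsup$ in $d_c$ across the finite sum and the finite ``window enlargement'' by $rt$ cells on each side: one must check that $d_c\bigl(P^t_{\A}(u)\bigr)$, defined as the $\limsup$ of the frequency of \emph{any} predecessor word in $c_{[-n,n]}$, really equals the frequency of $u$ in $\A^t(c)_{[-n',n']}$ up to boundary terms of size $O(rt/n)\to 0$, and that summing finitely many $\limsup$'s is legitimate here because the index set $P^t_\A(u)$ is fixed once $t$ is fixed (so for a fixed $t$ we are taking a finite sum of sequences, where $\limsup$ of the sum need not equal the sum of $\limsup$'s in general, but the two-sided bound above only needs the inequalities $\limsup\sum\le\sum\limsup$ is false in general — instead one uses that weak genericity already controls each $d_c(v)$, and one bounds $d_{\A^t(c)}(u)$ directly). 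Concretely, the clean way is: fix $t$; the frequency of $u$ in $\A^t(c)$ over a window equals the frequency of the set $P^t_\A(u)$ in $c$ over a slightly larger window, and $d_c$ of a finite set of words is itself controlled two-sidedly by $\sum_v\mu([v])$ through weak genericity applied to the union, giving the sandwich without ever swapping $\limsup$ with an infinite operation. Once that bookkeeping is done, the proof closes in a couple of lines.
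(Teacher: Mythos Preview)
Your proposal is correct and follows essentially the same route as the paper: rewrite $d_{\A^t(c)}(u)$ as $d_c\bigl(P^t_\A(u)\bigr)$, apply the weak-genericity bounds termwise over the finite predecessor set, and recognize the resulting sum of measures as $\A^t\mu([u])$ to obtain the two-sided sandwich $\frac{1}{M}\A^t\mu([u])\le d_{\A^t(c)}(u)\le M\,\A^t\mu([u])$. The paper simply asserts the equality $d_c(P^t_\A(u))=\sum_{v\in P^t_\A(u)}d_c(v)$ and proceeds; you are more careful in flagging the $\limsup$-versus-finite-sum issue and in noting that the sandwich can be obtained directly from weak genericity applied to the (finite) union without ever interchanging $\limsup$ with a sum --- this extra care is warranted but does not change the argument's structure.
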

\begin{proof}
Let $M$ be such that, for any  word $u\in Q^*$, $\frac{1}{M}\mu([u])\leq d_c(u)\leq M\mu([u])$.

$$d_{\A^t(c)}(u)  =  d_c(P^t_{\A}(u))= \sum_{v\in P^t_{\A}(u)}d_c(v)$$
\begin{eqnarray*}
\sum_{v\in P^t_{\A}(u)}\frac{1}{M}\mu([v]) & \leq\  d_{\A^t(c)}(u)\  \leq & \sum_{v\in P^t_{\A}(u)}M\mu([v])\\
\frac{1}{M}\sum_{v\in P^t_{\A}(u)}\mu([v]) & \leq\  d_{\A^t(c)}(u)\  \leq & M\sum_{v\in P^t_{\A}(u)}\mu([v])\\
\frac{1}{M}\mu(\A^{-t}([u])) & \leq\  d_{\A^t(c)}(u)\  \leq & M\mu(\A^{-t}([u]))\\
\frac{1}{M} \A^{t}\mu([u])& \leq\  d_{\A^t(c)}(u)\  \leq & M\A^{t}\mu([u])
\end{eqnarray*}
This concludes the proof.

\end{proof}

Other definitions could be considered for $\mu$-limit sets, in particular the Cesaro mean could be used.

\begin{definition}[Cesaro-persistent set]
For a CA $\A$, we define the \emph{Cesaro-persistent set} $C_{\mu}(\A)\subseteq Q^*$ by: $\forall u \in Q^*$:
 $$u\notin C_{\mu}(\A) \Longleftrightarrow \lim_{n\rightarrow\infty}\frac{1}{n}\sum_{k\leq n}\A^k\mu([u]_0)=0.$$
 
 Then the \emph{$\mu$-Cesaro-limit set} of $\A$ is $\Lambda C_{\mu}(\A)=\left\{c\in Q^{\Z}:L(c)\subseteq C_{\mu}(\A) \right\}$.
\end{definition}

We then get a lemma equivalent to Lemma~\ref{lem:generic} but for the
Cesaro-persistent set. Its proof is the same. 

\begin{lemma}
\label{lem:cesarogeneric}
Given a CA $\A$, a $\sigma$-invariant measure $\mu\in\Mes{Q^\Z}$ and a finite word $u$, for any weakly generic configuration $c$:\\
\begin{center}$u\notin C_{\mu}(\A)$ $\Longleftrightarrow$  $\lim\limits_{t\to +\infty}\frac{1}{t}\sum\limits_{\tau=0}^{t}d_{\A^{\tau}(c)}(u) = 0$
\end{center}
\end{lemma}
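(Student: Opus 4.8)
The plan is to observe that this lemma is an immediate adaptation of Lemma~\ref{lem:generic}, so I would essentially replay that proof verbatim, only replacing the time-$t$ quantities by their Cesaro averages. The key point already established in the proof of Lemma~\ref{lem:generic} is the two-sided bound, valid for every $t$ and every weakly generic $c$ with constant $M$,
\[
\frac{1}{M}\,\A^{t}\mu([u]) \;\leq\; d_{\A^t(c)}(u) \;\leq\; M\,\A^{t}\mu([u]),
\]
which comes from writing $d_{\A^t(c)}(u)=d_c(P^t_{\A}(u))=\sum_{v\in P^t_{\A}(u)}d_c(v)$, squeezing each $d_c(v)$ between $\tfrac1M\mu([v])$ and $M\mu([v])$, and recognizing $\sum_{v\in P^t_{\A}(u)}\mu([v])=\mu(\A^{-t}([u]))=\A^t\mu([u])$.

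First I would fix a weakly generic configuration $c$ and the associated constant $M$. Then, since the displayed inequality holds for each $\tau$ in $\{0,\dots,t\}$, I would sum over $\tau$ and divide by $t+1$ (or $t$; the discrepancy is harmless in the limit), obtaining
\[
\frac{1}{M}\cdot\frac{1}{t}\sum_{\tau=0}^{t}\A^{\tau}\mu([u]) \;\leq\; \frac{1}{t}\sum_{\tau=0}^{t} d_{\A^{\tau}(c)}(u) \;\leq\; M\cdot\frac{1}{t}\sum_{\tau=0}^{t}\A^{\tau}\mu([u]).
\]
Because all three quantities are nonnegative and the outer two differ only by the fixed multiplicative factor $M$, the middle Cesaro average tends to $0$ as $t\to\infty$ if and only if the Cesaro average $\frac{1}{t}\sum_{\tau\leq t}\A^{\tau}\mu([u])$ does. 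By the definition of the Cesaro-persistent set $C_\mu(\A)$, the latter is exactly the condition $u\notin C_\mu(\A)$, which gives the desired equivalence.

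There is really no hard part here: the only thing to be slightly careful about is the index bookkeeping between $\frac1t\sum_{\tau=0}^t$ and $\frac1n\sum_{k\leq n}$ in the definition, and the fact that $\A^t\mu([u])\leq 1$ uniformly, so adding or removing finitely many terms or shifting the normalization between $t$ and $t+1$ does not affect whether the average vanishes. Hence I would simply note that the proof is identical to that of Lemma~\ref{lem:generic} after averaging both sides of the bound over $\tau\in\{0,\dots,t\}$, and conclude.
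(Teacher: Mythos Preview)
Your proposal is correct and matches the paper's approach exactly: the paper simply states that the proof is the same as that of Lemma~\ref{lem:generic}, and your argument is precisely the Cesaro-averaged version of that proof, using the same two-sided bound $\frac{1}{M}\A^{t}\mu([u])\leq d_{\A^t(c)}(u)\leq M\A^{t}\mu([u])$ summed over $\tau$.
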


\begin{ex}
  \label{ex:max}
We consider here the ``max'' automaton $\A_M$: the alphabet contains only two states $0$ and $1$. The radius is $1$ and $\delta_{\A_M}(x,y,z)=\max(x,y,z)$.

 The probability to have a $0$ at position $0$ time $t$ is the
 probability to have $0^{2t+1}$ centered on position $0$ in the initial configuration, which tends to $0$ when $t\to \infty$ for the uniform Bernoulli measure, so $0$ does not appear in the $\mu$-limit set. And finally $\Lambda_{\mu}(\A_M)=\{^{\omega}1^{\omega}\}$.

 The limit set of a cellular automaton is defined as
 $\Lambda(\A)=\bigcap_{i\in\N}\A^i(Q^{\Z})$, so
 $\Lambda(\A_M)=(^{\omega}10^*1^{\omega})\cup
 (^{\omega}0^{\omega})\cup (^{\omega}10^{\omega})\cup
 (^{\omega}01^{\omega})$. Actually, we can prove that this limit set
 is an example of limit set that cannot be a $\mu$-limit set \cite{Boyer-Delacourt-Sablik-2010}.

\end{ex}

\begin{ex}
  Consider any CA $\A$ over alphabet $Q$ and add to it a spreading
  state ${s\not\in Q}$: if a cell sees $s$ in its neighborhood it
  becomes $s$, otherwise it behaves according to $\A$. By the same
  reasoning as above, the new CA $\A_s$ obtained this way has a
  trivial $\mu$-limit set (as soon as $\mu$ gives some weight to $s$):
  the singleton made of configuration ${^{\omega}s^{\omega}}$. On the
  other hand, its limit set is as complex as the one from $\A$,
  precisely: its intersection with ${Q^{\Z}}$ is exactly the limit set
  of $\A$.
\end{ex}

In~\cite{bpt-2006}, it is shown that the $\mu$-limit set of the
elementary CA 184 is exactly the pair of configurations
${\bigl\{^{\omega}(01)^{\omega},^{\omega}(10)^{\omega}\bigr\}}$ when
$\mu$ is the uniform measure. It is interresting to note that, on the
contrary, a limit set must be a singleton when it is finite.

Other examples are studied in detail in~\cite{km-2000}.

\section{Construction Toolbox}
\label{sec:constr}

This section is dedicated to the proof of the following theorem.

\begin{theorem}
  \label{thm:maintheorem}
  Given a finite alphabet $Q_0$:
  \begin{enumerate}
  \item for any growing computable sequence $w\in\W(Q_0)$, there
    exists a CA $\A$ over alphabet $Q\supseteq Q_0$ such that, for any
    full-support Bernoulli measure $\mu$ over $Q$, $\Lmu(\A)=L_w$.
  \item for any growing computable sequences $w,w'\in\W(Q_0)$, there
    exists a CA $\A$ over alphabet $Q\supseteq Q_0$ such that, for any
    full-support Bernoulli measure $\mu$ over $Q$, $\left\{\begin{array}{l}\Lmu(\A)= L_w\cup L_{w'}\\ C_{\mu}(\A)=L_{w'}\end{array}\right.$.
  \end{enumerate}
\end{theorem}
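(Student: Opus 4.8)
The plan is to realise $L_w$ (and, in item~2, the pair $L_w\cup L_{w'}$ and $L_{w'}$) by a cellular automaton that, starting from a $\mu$-typical configuration, self-organises into a growing family of independent ``computing segments'', in the spirit of the constructions of~\cite{Boyer-Delacourt-Sablik-2010,Delacourt-2011} but with the extra bookkeeping needed to control the Cesàro limit. The first step is to normalise the input: apply Lemma~\ref{lem:incrseq} with, say, $S(i)=\lceil i/\log i\rceil$ and $T(i)=2^i$ to replace $w$ (and likewise $w'$) by a growing computable sequence with the \emph{same} persistent language, produced by a machine $\phi$ that outputs $w_i$ on input $i$ in time $\le T(i)$ and space $\le S(i)$. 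This is what allows a segment of a given length to know exactly which indices it can afford to compute, and guarantees the largest attainable index grows to infinity with the segment length.

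Next I would describe the CA over an alphabet $Q\supseteq Q_0$ containing a distinguished ``nucleation'' state $*\notin Q_0$, as a product of a bounded number of layers, with the following dynamics. \textbf{Nucleation:} at time $0$ every cell not in state $*$ is treated as garbage; each occurrence of $*$ emits a (non‑spreading, directional) cleaning signal that erases garbage up to the next $*$ on its right, turning each interval between consecutive $*$'s into a fresh \emph{segment} with a private workspace. Since $\mu$ has full support, $\mu([*])>0$, so for a generic configuration the state $*$ has positive density and, after a bounded transient, the whole line is partitioned into finite segments. \textbf{Computation:} inside a stable segment of length $\ell$ a head simulates $\phi$, computing $w_1,w_2,\dots$ in turn and each time tiling the base ($Q_0$) layer of the segment with copies of the current word, attempting index $i$ only when $S(i)\le\ell$ and the running time fits the current budget. \textbf{Competition / eras:} each segment carries a generation counter; when two segments are adjacent the one with the smaller counter is absorbed, its territory appended to the larger segment which re‑initialises it and continues. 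Generations only increase, so every position eventually lies in arbitrarily long segments that have been computing arbitrarily long. One then organises this into \emph{eras} $k=1,2,\dots$: a mature era-$k$ segment (length roughly in $[\ell_k,\ell_{k+1})$, with $\ell_{k+1}\le 2\ell_k$ so that two adjacent era-$k$ segments that have completed their era-$k$ task merge into an era-$(k+1)$ one) displays $w_k$, which by the normalisation is affordable once $\ell_k$ is large.

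For the verification I would use Lemma~\ref{lem:generic}: it suffices to fix a generic configuration $c$ and track $d_{\A^t(c)}(u)$. The inclusion $\Lmu(\A)\subseteq L_w$ follows if the density of cells that are borders, cleaning signals, or inside a still‑computing (as opposed to displaying) segment tends to $0$ as $t\to\infty$ — borders because segments grow without bound, computing phases because they are short relative to display phases by choice of the time budget — so that $d_{\A^t(c)}(u)$ is, up to $o(1)$, a convex combination of the $d_{w_i}(u)$ for indices $i\to\infty$, hence $\to0$ whenever $d_{w_i}(u)\to0$, i.e.\ whenever $u\notin L_w$. Conversely, if $u\in L_w$ then $\limsup_i d_{w_i}(u)=:2\delta>0$; for each of the infinitely many $i$ with $d_{w_i}(u)>\delta$, pick a time $t$ well inside the display sub‑phase of era $i$, by which point any fixed window is covered by era-$i$ displaying segments up to a density-$\epsilon_i\to0$ remainder, giving $d_{\A^t(c)}(u)\ge(1-\epsilon_i)\delta-o(1)\ge\delta/2$, hence $d_{\A^t(c)}(u)\not\to0$. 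For item~2, run the computations of $w$ and $w'$ side by side and split era $k$ into a short $w$-sub‑phase (compute and display $w_k$) followed by a long $w'$-sub‑phase (compute and display $w'_k$), choosing the sub‑phase lengths $\tau^w_k\ll\tau^{w'}_{k-1}\ll\tau^{w'}_k$ to grow fast enough that at every time $t$ the Cesàro average $\frac1t\sum_{\tau\le t}d_{\A^\tau(c)}(u)$ is, up to $o(1)$, the average over the current $w'$-sub‑phase alone (earlier eras diluting away, all $w$-sub‑phases being negligible), so that $C_\mu(\A)=L_{w'}$ by the same $\limsup$/$\liminf$ bookkeeping, while the $w$-sub‑phases still recur and display each $w_k$, giving $\Lmu(\A)=L_w\cup L_{w'}$.

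The main obstacle — and what the construction section must implement carefully — is the ``persistence'' direction: arranging the self-organisation so that for \emph{every} index $i$ there is a genuine time at which a density-$(1-o(1))$ portion of the line \emph{simultaneously} displays $w_i$ (the delicate point being that eras must not be locally skipped during merging, i.e.\ one needs a ``global era, eventually locally synchronised'' behaviour), while keeping the density of structural and transient cells provably vanishing. In item~2 the additional subtlety is the precise growth rate of the phase lengths, which must make the $w$-sub‑phases Cesàro-negligible yet each long enough to actually exhibit $w_i$ over a fully synchronised window.
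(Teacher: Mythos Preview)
Your outline has the right three-layer architecture (nucleation, merging, compute-and-display) and you correctly invoke Lemma~\ref{lem:incrseq} to normalise the sequences. You also correctly name the main obstacle. But you do not resolve it, and the merging mechanism you sketch (``the one with the smaller generation counter is absorbed'') cannot give the global synchronisation you need: segments born between two nearby $*$'s and segments born between two distant $*$'s complete era $k$ at very different times, and absorbing a slow neighbour either forces the absorbed territory to skip eras (killing the persistence direction) or forces the absorber to restart (killing the ``transient cells have vanishing density'' argument). Nothing in your scheme produces a time at which a $(1-o(1))$-fraction of the line is \emph{simultaneously} displaying $w_i$. Your nucleation is also underspecified: with a full-support Bernoulli measure the initial configuration already contains every state --- fake cleaning signals, fake delimiters, fake heads, fake clocks --- and a single rightward eraser does not obviously prevent such garbage from surviving into, or corrupting, the computation layer.

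The paper solves both problems with one idea your proposal is missing: \emph{every segment knows the global time $t$}. The nucleation state $\sti$ launches, on each side, a pair of signals at speeds $1/4$ and $1/5$ whose separation encodes the age in unary; when two opposing counters collide the younger one survives (so genuine counters, born at $t=0$, are minimal and always beat garbage), and when two counters of exactly equal age meet, a delimiter $\sts$ is written together with the current value of $t$ in base $K$. Hence all segments share a common clock. Mergings are then scheduled at deterministic global times $t_i=\lceil K^{\sqrt{i}}\rceil$: at $t_i$ every segment of length $<i$ merges with a neighbour, which yields a lower bound $i$ and (via a density argument, Lemma~\ref{lem:toobigseg}) a typical upper bound $K_i=i^3$ on segment size, and --- crucially --- puts every well-sized segment in era $i$ exactly on $[t_i,t_{i+1})$. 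The display schedule is then purely a function of $t$: compute on $[t_i,t_{i,1}]$, display $w_i$ on a short window ending at $t_{i,2}$, then overwrite with $w'_i$ until $t_{i+1}$. Persistence is immediate (at time $t_{i,2}$ all good segments display $w_i$), and the Cesàro claim follows because $t_{i,2}-t_{i,1}=O(\sqrt{i}\,K_i)=o(t_{i+1}-t_i)$.
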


It will mainly be used as a tool but it has an immediate corollary
that gives an interesting hint on what is the set of all possible
$\mu$-limit sets (recall that we fixed a global set $\allalphabets$
from which we take any finite alphabet, hence the set of cellular
automata or the set of $\mu$-limit sets is well-defined).

\begin{corollary}
 \begin{align*}
    &\left\{\Lmu(\A),\A\text{ is a CA and $\mu$ the uniform measure
      }\right\}\\
=&\left\{C_\mu(\A),\A\text{ is a
        CA and $\mu$ the uniform measure}\right\}\\
=&\left\{L_{w},w\in\W(Q)\text{ and $Q$ is a finite alphabet}\right\}
  \end{align*}
\end{corollary}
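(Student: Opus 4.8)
The plan is to establish the two cyclic chains of inclusions
$\{L_w\}\subseteq\{\Lmu(\A)\}\subseteq\{L_w\}$ and $\{L_w\}\subseteq\{C_\mu(\A)\}\subseteq\{L_w\}$, all $\mu$'s being uniform. The first inclusion in each chain is immediate from Theorem~\ref{thm:maintheorem}: given $w\in\W(Q_0)$, part~1 yields a CA $\A$ over some $Q\supseteq Q_0$ with $\Lmu(\A)=L_w$ for every full-support Bernoulli measure on $Q$ — in particular for the uniform measure $\mu_0$ on $Q$, whose coefficients $1/|Q|$ are all non-null; and applying part~2 with $w'=w$ gives a CA with $C_{\mu_0}(\A)=L_w$. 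So all the work is in the reverse inclusions: starting from an arbitrary CA $\A$ over alphabet $Q$ with radius $r$, and $\mu_0$ the uniform measure on $Q$, one must produce $w,w'\in\W(Q)$ with $L_w=\Lmu(\A)$ and $L_{w'}=C_{\mu_0}(\A)$.

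For the reverse inclusion concerning $\Lmu$, the point is that $\A^t\mu_0([u]_0)=\lvert P^t_{\A}(u)\rvert\cdot\lvert Q\rvert^{-(|u|+2rt)}$ is a computable rational, and that it is essentially realized by a computable finite word. Let $B_m$ be a de Bruijn word of order $m$ over $Q$ (length $\lvert Q\rvert^{m}$, each length-$m$ word occurring exactly once cyclically), so that in the periodic configuration $B_m^{\Z}$ every word $v$ with $\lvert v\rvert\le m$ has density exactly $\lvert Q\rvert^{-|v|}$. I would set $w_i$ to be one period of $\A^{i}\!\left(B_{m(i)}^{\Z}\right)$ for a computable, strictly increasing choice $m(i)$ with $m(i)-2ri\to\infty$. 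Since $d_{\A^{i}(c)}(u)=d_c(P^i_{\A}(u))=\sum_{v\in P^i_{\A}(u)}d_c(v)$ (as in Lemma~\ref{lem:generic}), we get, for every fixed $u$ and all $i$ with $m(i)-2ri\ge|u|$, that $d_{w_i}(u)$ equals $\A^{i}\mu_0([u]_0)$ up to an $O(\lvert u\rvert/\lvert w_i\rvert)$ boundary error (from reading one period of a periodic configuration linearly), which vanishes because $\lvert w_i\rvert=\lvert Q\rvert^{m(i)}\to\infty$. Hence $d_{w_i}(u)\not\to 0\iff\A^{i}\mu_0([u]_0)\not\to 0\iff u\in\Lmu(\A)$, so $L_w=\Lmu(\A)$ and $w\in\W(Q)$.

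For the Cesaro statement I would reuse these $w_i$ and define $w'_i$ by concatenating, for $k=0,\dots,i$, the block $w_k$ repeated $R_{k,i}=\lceil M_i/\lvert w_k\rvert\rceil$ times, where $M_i$ is a computable parameter (e.g. $M_i=\lvert w_i\rvert$), so each block has length in $[M_i,M_i+\lvert w_i\rvert)$ and $\lvert w'_i\rvert\to\infty$. Reading $w'_i$ linearly, occurrences of a fixed word $u$ are $\sum_{k=0}^{i}R_{k,i}\lvert w_k\rvert_u$ up to wrap-around/junction errors bounded by $O\!\left(M_i\lvert u\rvert\sum_{k\ge 0}\lvert w_k\rvert^{-1}\right)$, and $\sum_k\lvert w_k\rvert^{-1}$ converges (arrange $m(k)\ge k$), so this error is negligible against the total length $\approx(i+1)M_i$. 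Thus $d_{w'_i}(u)=\frac{1}{i+1}\sum_{k=0}^{i}d_{w_k}(u)+o(1)=\frac{1}{i+1}\sum_{k=0}^{i}\A^{k}\mu_0([u]_0)+o(1)$ as $i\to\infty$ (the finitely many ``young'' indices $k$ with $m(k)-2rk<|u|$ contribute $o(1)$ to the average), whence $L_{w'}=C_{\mu_0}(\A)$ and $w'\in\W(Q)$. Combining the four inclusions closes both chains.

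The main obstacle is exactly these reverse inclusions — turning a CA back into a growing computable sequence. The de Bruijn trick is what makes the density profile of $w_i$ coincide \emph{exactly} with $\A^{i}\mu_0$ on the relevant range of words; once it is in place the remaining difficulties are bookkeeping: controlling boundary effects, and checking that the finitely many small indices $k$ spoil neither the plain limit nor the Cesaro average. Computability of $w$ and $w'$ is routine, since de Bruijn words, the local rule of $\A$, and the repetition counts $R_{k,i}$ are all computable.
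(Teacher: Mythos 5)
Your proof is correct, and while the easy half (from $L_w$ to a CA) is the same appeal to Theorem~\ref{thm:maintheorem} as in the paper, your treatment of the substantive reverse inclusions is genuinely different. The paper takes a \emph{computable generic} configuration $c$, defines $w_t$ as the central word of length $t$ of $\A^t(c)$ (and, for the Cesaro case, concatenates the central length-$t$ words of $\A^0(c),\dots,\A^t(c)$, which automatically weights all time steps equally), and then invokes Lemmas~\ref{lem:generic} and~\ref{lem:cesarogeneric}. You instead iterate $\A$ on the periodic configuration $B_{m(i)}^{\Z}$ built from a de Bruijn word with $m(i)-2ri\to\infty$, so that every predecessor word of length $|u|+2ri$ has density exactly $|Q|^{-(|u|+2ri)}$ and hence $d_{w_i}(u)$ equals $\A^i\mu_0([u]_0)$ up to an explicit $O(|u|/|w_i|)$ wrap-around term; for the Cesaro statement you equalize the weight of each time step by \emph{repeating} blocks to a common length rather than by truncating to a common length, and you control the junction errors via the convergence of $\sum_k|w_k|^{-1}$. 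What your route buys is that no genericity is needed: the identity between $d_{w_i}(u)$ and $\A^i\mu_0([u]_0)$ is exact and finitary, whereas the paper's one-line assertion that $d_{w_t}(u)$ approximates $d_{\A^t(c)}(u)$ implicitly relies on a computable generic configuration with sufficiently uniform empirical frequencies on windows of size $t$ (of the kind supplied by Proposition~\ref{prop:constrgeneric}). The price is the extra bookkeeping with repetition counts and boundary terms, which you handle correctly.
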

\begin{proof}
  Using Theorem~\ref{thm:maintheorem}, the only part that remains to
  be proven is that any $\Lmu(\A)$ and any $C_\mu(\A)$ is of the form
  $L_w$ for some $w$. Indeed: consider a computable generic configuration $c$ and
  define $w_t$ as the word of size $t$ at the center of
  ${{\A}^t(c)}$. The sequence ${\bigl(d_{w_t}(u)\bigr)_t}$ converges
  towards ${\bigl(d_{{\A}^t(c)}(u)\bigr)_t}$ therefore
  Lemma~\ref{lem:generic} concludes that ${L_w=\Lmu(\A)}$ where
  ${w=(w_t)}$.

  Now if we define $w'_t$ as the concatenation of the words of size
  $t$ at the center of ${{\A}^i(c)}$ for all ${0\leq i\leq t}$ we get
  that the sequence ${\bigl(d_{w'_t}(u)\bigr)_t}$ converges towards
  the sequence 
  \[\left(\frac{1}{t}\sum_{i\leq t}d_{{\A}^i(c)}(u)\right)_t\]
  Lemma~\ref{lem:cesarogeneric} concludes that ${L_{w'}=C_\mu(\A)}$ where
  ${w'=(w'_t)}$.
\end{proof}

Note that the use of the uniform measure is not essential in the above
corollary. The same proof works for any Bernoulli measure with full support
and computable coefficients.

The proof of Theorem~\ref{thm:maintheorem} is constructive and
consists in the description of the CA realizing the desired
$\mu$-limit set. This section will successively deal with the
different parts of the construction after a short overview of the
ideas we use.

\subsection{Overview}
We describe a CA $\A$ over alphabet $Q$ that contains the alphabet
$Q_0$ of the theorem. This CA has $3$ components which work
essentially independently but achieve together the desired
behavior. The general idea is that, starting from a random
configuration, $\A$  will:
\begin{itemize}
\item self-organizes into well-structured computation zones;
\item these zones will evolve with time thanks to a merging process
  that ensures that they grow in size at a controlled rate;
\item inside these well-sized zones a computation process runs
  permanently and essentially fills the zone in an appropriate way with
  the words from the given growing computable sequences.
\end{itemize}

We will describe $\A$ in a incremental way since each stage of the
behavior above makes sense only within some structured zones prepared
by the previous stage. However, all these stages actually run in parallel and
at any time there are zones of the configuration which are completely
out of control. The point is that we will do a reasoning  \emph{in density} starting
from a generic configuration, which justified by
lemmas~\ref{lem:generic} and \ref{lem:cesarogeneric}.

More precisely, the incremental description and analysis of $\A$ will
be the following:

\begin{itemize}
\item Cleaning out the space (Section \ref{sub:cleaning}, alphabet $Q_1$):
  \emph{the density of reliable cells goes to 1.}
  
\item Centralization (Section \ref{sub:centralization}, alphabet $Q_2$ containing a quiescent state $0_2$):
  \emph{the density of cells belonging to a well-sized computation zone
  (size depending on time) goes to 1.}
  
\item Computing and writing (Section \ref{sub:comp_wri}, alphabet $Q_3$ containing a quiescent state $0_3$):
  \emph{within a well-sized computation zone, the content is filled with
  copies of $w_i$ or $w'_i$ (depending on time) up to some set of
  cells whose density goes to $0$.}
\end{itemize}

The alphabet of $\A$ is $Q=Q_0\sqcup\left(\left(Q_1\sqcup (Q_2\times Q_3)\right)\times Q_0\right)$. This can be interpreted the following way:
\begin{itemize}
\item every cell contains a layer filled with some state in $Q_0$, this is the primary layer;
\item some cells contain additionnally another layer (secondary layer) containing either a state in $Q_1$ or in $Q_2\times Q_3$.
\end{itemize}

\subsection{Cleaning out the Space}
\label{sub:cleaning}

\newcommand{\sti}{\vbox to 7pt{\hbox{\includegraphics{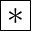}}}\hspace{.1em}}
\newcommand{\sts}{\vbox to 7pt{\hbox{\includegraphics{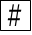}}}\hspace{.1em}}
\newcommand{\std}{\vbox to 7pt{\hbox{\includegraphics{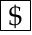}}}\hspace{.1em}}
\newcommand{\stdz}{\vbox to 7pt{\hbox{\includegraphics{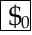}}}\hspace{.1em}}
\newcommand{\stdu}{\vbox to 7pt{\hbox{\includegraphics{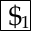}}}\hspace{.1em}}
\newcommand{\stdd}{\vbox to 7pt{\hbox{\includegraphics{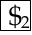}}}\hspace{.1em}}
\newcommand{\ste}{\vbox to 7pt{\hbox{\includegraphics{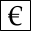}}}\hspace{.1em}}
\newcommand{\stw}{\vbox to 7pt{\hbox{\includegraphics{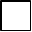}}}\hspace{.1em}}

In this section, we describe the initialization of the construction. Only the secondary layer is concerned and unless stated otherwise, all the states mentionned are in alphabet $Q_1$.   We want to build  a ``protected'' area in a cone of the space-time diagram (the area between two signals moving in opposite directions) and make sure that nothing from the outside can affect the inside of the cone.

\subsubsection{General Description}

The idea is to use a special state $\sti\in Q_1$ that can only appear in the initial configuration (no transition rule produces this state). This state will produce a cone in which a construction will take place. On both sides of the cone, there will be unary counters that count the ``age of the cone''.

The counters act as protective walls to prevent the exterior from affecting the construction. Any information, apart from another counter, is erased. This is a key point, since we will only be interested in well formed structures, that is two walls moving away one from the other and delimiting a totally controled space.  If two counters collide, they are compared and the youngest has priority (it erases the older one and what comes next). Because the construction is assumed to be generated by a state $\sti$ on the initial configuration, no counter can be younger since all other counters were already present on the initial configuration.

The only special case is when two counters of the same age collide. In
this case they both disappear and a special delimiter state $\sts$ is written.

\subsubsection{The Younger, the Better}

The $\sti$ state produces 4 distinct signals. Two of them move towards
the left at speed $1/4$ and $1/5$ respectively. The other two move
symmetrically to the right at speed $1/4$ and $1/5$.

Each pair of signals (moving in the same direction) can be seen as a
unary counter where the value is mostly encoded in the distance between the
two of them, this will be discussed later. As time goes by the signals move apart.

Note that signals moving in the same direction (a fast one and a slow
one) are not allowed to cross. If such a collision happens, the slower
signal is erased. A collision cannot happen between signals generated
from a single $\sti$ state but could happen with signals that were
already present on the initial configuration. Collisions between
counters moving in opposite directions will be explained later as
their careful handling is the key to our construction.

Because the $\sti$ state cannot appear elsewhere than on the initial
configuration and counter signals can only be generated by the $\sti$
state (or be already present on the initial configuration), a counter
generated by a $\sti$ state is at all times the smallest possible one:
no two counter signals can be closer than those that were generated
together. Using this property, we can encapsulate our construction
between the smallest possible counters. We will therefore be able to
protect it from external perturbations: if something that is not
encapsulated between counters collides with a counter, it is
erased. And when two counters collide we will give priority to the
youngest one.

\subsubsection{Dealing with collisions}

Collisions of signals are handled in the following way:
\begin{itemize}
\item an outer signal carries a bit: it can be \emph{open} or \emph{closed};
\item nothing other than an outer signal can go through a \emph{closed} outer
  signal (in particular, no ``naked information'' not contained
  between counters);
\item when two \emph{outer} signals collide they move through each
other, both become \emph{open}, and comparison signals are generated as illustrated by Figure~\ref{fig:speeds}:
\begin{itemize}
\item on each side, a signal S1 moves at maximal speed towards the
\emph{inner} border of the counter, bounces on it ($C$ and $C'$) and
goes back as S2 to the point of collision ($D$);
\item the first signal S2 to come back is the one from the youngest
counter and it then moves back to the \emph{outer} side of the oldest
counter ($E$) and deletes it;
\item the comparison signal from the older counter that arrives
afterwards ($D'$) is deleted and will not delete the younger counter's
\emph{outer} border;
\item all of the comparison signals delete all information that they
encounter other than the two types of borders of counters.
\end{itemize}
\item nothing else than an outer signal or a S2 can go through an
  \emph{open} outer signal;
\item when a signal S2 goes through an \emph{open} outer signal, this
  one becomes \emph{closed};
\item nothing else than an inner signal or an outer signal can stop a
  signal S1; so an S1 signal, either encounters an inner signal and
  bounce on it and becomes S2, or it is destrotyed by another outer
  signal.
\end{itemize}

\begin{figure}[htbp] \centering
  \includegraphics[height=6cm]{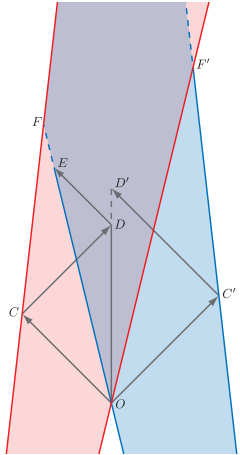}
  \caption{The bouncing signal must arrive (point $E$) before the
older counter moves through the younger one (point $F$).}
  \label{fig:speeds}
\end{figure}

\subsubsubsection{Counter Speeds} It is important to ensure that the
older counter's \emph{outer} border is deleted before it crosses the
younger's \emph{inner} border. This depends on the speeds $s_o$ and
$s_i$ of the \emph{outer} and \emph{inner} borders. It is true
whenever $s_o \geq \frac{1-s_i}{s_i+3}$. If the maximal speed is $1$
(neighborhood of radius $1$), it can only be satisfied if
\[ s_i<\sqrt{5}-2\simeq 0.2360
\] This means that with a neighborhood of radius 1 the \emph{inner}
border of the counter cannot move at a speed greater than $(\sqrt
5-2)$. Any rational value lower than this is acceptable. For
simplicity reasons we will consider $1/5$ (and the corresponding $1/4$
for the \emph{outer} border of the counter). If we use a neighborhood
of radius $k$, the counter speeds can be increased to $k/5$ and $k/4$.

\subsubsubsection{Exact Location}

Note that a precise comparison of the counters is a bit more complex
than what has just been described. Because we are working on a
discrete space, a signal moving at a non integer speed does not
actually move at each step. In particular, in the case of radius $1$, it stays on one cell for a few
steps before advancing, but this requires multiple states.

In such a case, the cell of the signal is not the only
significant information. We also need to consider the current state of
the signal: for a signal moving at speed $1/n$, each of the $n$ states
represents an advancement of $1/n$, meaning that if a signal is
located on a cell $i$, depending on the current state we would
consider it to be exactly at the position $i$, or $(i+1/n)$, or
$(i+2/n)$, etc. By doing so we can have signals at rational
non-integer positions, and hence consider that the signal really moves
at each step.

When comparing counters, we will therefore have to remember both
states of the faster signals that collide (this information is carried
by the vertical signal) and the exact state in which the slower signal
was when the maximal-speed signal bounced on it. That way we are able
to precisely compare two counters: equality occurs only when both
counters are exactly synchronized.

\subsubsubsection{The \emph{Almost} Impregnable Fortress}

Let us now consider a cone that was produced from a $\sti$ state on
the initial configuration. As it was said earlier, no counter can be
younger that the ones on each side of this cone. There might be other
counters of exactly the same age, but then these were also produced
from a $\sti$ state and we will consider this case later (it is the
useful case for the other parts of the construction).

Nothing can enter this cone if it is not preceded by an \emph{outer}
border of a counter. If an opposite \emph{outer} border collides with
our considered cone, comparison signals are generated. Because
comparison signals erase all information but the counter borders, we
know that the comparison will be performed correctly and we do not
need to worry about interfering states. Since the borders of the cone
are the youngest possible signals, the comparison will make them
survive and the other counter will be deleted.

Note that two consecutive opposite \emph{outer} borders, without any
\emph{inner} border in between, are not a problem. The comparison is
performed in the same way. Because the comparison signals cannot
distinguish between two collision points (the vertical signal from $O$
to $D$ in Figure~\ref{fig:speeds}) they will bounce on the first they
encounter. This means that if two consecutive \emph{outer} borders
collide with our cone, the comparisons will be made ``incorrectly''
but this error will favor the well formed counter (the one that has an
\emph{outer} and an \emph{inner} border) so it is not a problem to us.

\subsubsubsection{Evil Twins}

The last case we have to consider now is that of a collision between
two counters of exactly the same age. Because the only counters that
matters to us are those produced from the $\sti$ state, the case we have to consider is the one
where two cones produced from a $\sti$ state on the initial
configuration collide.

According to the rules that were descibed earlier, both colliding
counters are deleted. This means that the right side of the leftmost
cone and the left part of the rightmost cone are now ``unprotected''
and facing each other. A delimiter state $\sts\in Q_2$ (this is the only state outside of $Q_1$ that we consider in this section) is then written and remains where the collision happened,  as illustrated in Figure~\ref{fig:asynccoun}.

\begin{figure}[htbp] \centering
  \includegraphics[width=\textwidth]{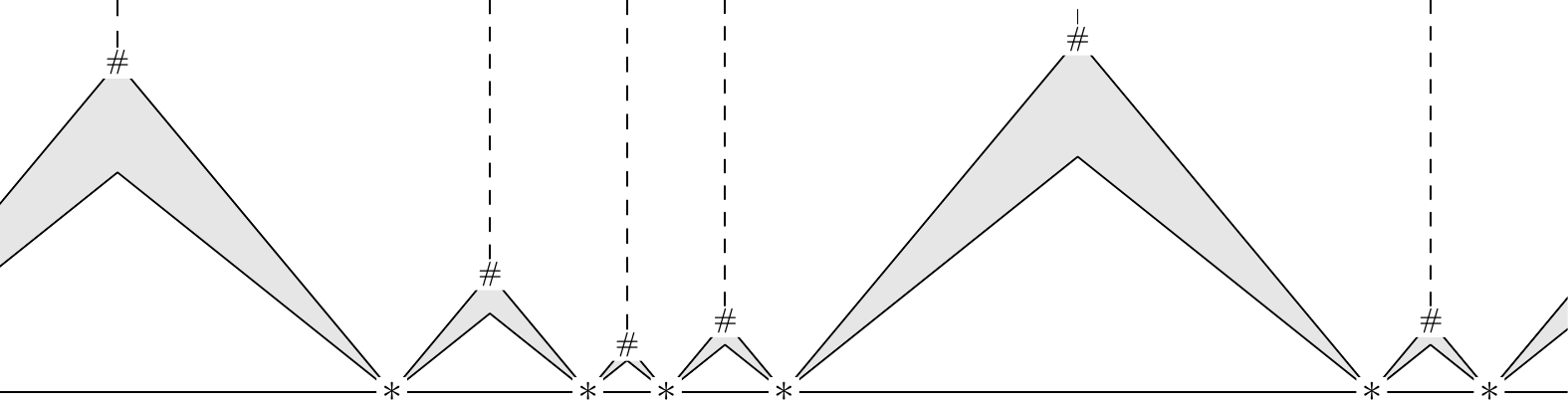}
  \caption{Collision of counters of same age.}
  \label{fig:asynccoun}
\end{figure}

\begin{definition}
  Given any initial configuration $c$, a cell $z$ at time $t$ is
  \emph{reliable} if it is inside the inner cone of some state
  $\sti$ in the initial configuration.
\end{definition}

Note that this definition is independent of the parts of the CA which
are not yet described. Therefore we can already prove a density result
about reliable cells.

\begin{lemma}
  \label{lem:reliable}
  For any non-trivial Bernoulli measure $\mu$ and generic
  configuration $c$, the density of reliable cells goes to $1$ as
  time increases.
\end{lemma}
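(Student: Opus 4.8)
The plan is to show that, as $t\to\infty$, the proportion of cells that lie outside the inner cone of some initial $\sti$ state becomes negligible. Fix a non-trivial Bernoulli measure $\mu$ and a generic configuration $c$. Since $\mu$ is a full-support Bernoulli measure, the $\sti$ states occur in $c$ with some fixed positive density $p = \mu([\sti]_0) > 0$, and more precisely, in any sufficiently long window the empirical frequency of $\sti$ is close to $p$ because $c$ is generic. The inner cone of a given $\sti$ at position $z$ is the region delimited by the two inner-border signals emanating from it (speeds $\pm 1/5$), so at time $t$ it spans roughly the interval $[z - t/5,\, z + t/5]$ — a window of width about $2t/5$ around $z$ — \emph{provided} this $\sti$'s cone has not been destroyed by a collision. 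The first key point is that a cone born from a $\sti$ is essentially indestructible: by the analysis of Section~\ref{sub:cleaning} (``The \emph{Almost} Impregnable Fortress''), when a cone meets any counter that was not itself born from a $\sti$, the comparison favors the $\sti$-cone and it survives; the only way a $\sti$-cone can be damaged is a collision with another $\sti$-cone of exactly the same age (the ``Evil Twins'' case), in which case both outer borders are deleted but a $\sts$ delimiter is written and, crucially, the \emph{inner} cones of both are unaffected up to that point.

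I would then argue as follows. Consider any point $x\in\Z$ and time $t$. Look at the window $I = [x - t/5,\, x+t/5]$ in the initial configuration $c$, of width $\approx 2t/5$. If there is at least one $\sti$ in the sub-window $[x - t/10,\, x + t/10]$, then the inner cone of that $\sti$ at time $t$ contains $x$ (its half-width $t/5$ exceeds the distance $\le t/10$ from its origin to $x$), so $x$ is reliable at time $t$ — unless that inner cone was cut short by an Evil-Twin collision before reaching time $t$. To handle the latter, I would observe that a cone is only threatened by same-age cones, i.e.\ other $\sti$'s, and such a collision only affects cells \emph{beyond} the collision point; so among the (many) $\sti$'s in the window, the ones ``closest'' to $x$ still have their inner cone reaching $x$. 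More carefully: let $z^-$ be the rightmost $\sti$ position $\le x$ and $z^+$ the leftmost $\sti$ position $\ge x$ in $c$. The inner cones of $z^-$ and $z^+$ propagate towards each other without obstruction from any other $\sti$ (there is none strictly between them), and the only non-$\sti$ debris between them is erased by the comparison/cleaning signals. Hence $x$ is reliable at time $t$ as soon as $t/5 \ge \max(x - z^-,\, z^+ - x)$, i.e.\ as soon as $x$ is within distance $t/5$ of \emph{some} $\sti$ in $c$.

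It then remains a density computation. The set of non-reliable cells at time $t$ is contained in $\{x : \text{distance from } x \text{ to the nearest } \sti \text{ in } c \text{ is} > t/5\}$. Since $c$ is generic for a full-support Bernoulli measure, the gaps between consecutive $\sti$'s have, in density, an exponential-type tail: the density of cells at distance $> L$ from the nearest $\sti$ is bounded by a quantity that $\to 0$ as $L\to\infty$ (concretely, in any window of length $N$ the number of maximal $\sti$-free runs of length $> 2L$ is $O(N(1-p)^{2L})$, contributing total density $O((1-p)^{2L}\cdot\text{run length})\to 0$). Taking $L = t/5 \to\infty$ gives that the density of non-reliable cells tends to $0$, hence the density of reliable cells tends to $1$. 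The main obstacle in writing this cleanly is the bookkeeping in the previous paragraph — making rigorous that an inner cone genuinely reaches every point within distance $t/5$ despite possible Evil-Twin collisions and despite arbitrary ``out of control'' garbage in $c$ — which is exactly what the careful collision rules of Section~\ref{sub:cleaning} were designed to guarantee; the density tail estimate itself is routine.
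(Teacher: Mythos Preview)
Your proposal is correct and follows essentially the same approach as the paper: a cell is unreliable at time $t$ only if no $\sti$ occurs within distance $\approx t/5$ in the initial configuration, and genericity gives this event density $\leq (1-p)^{2\lambda(t)}\to 0$. The paper's proof is considerably shorter because the definition of \emph{reliable} is purely geometric (being inside the cone shape of some initial $\sti$), so your careful discussion of Evil-Twin collisions and cone indestructibility, while not wrong, is unnecessary for this lemma.
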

\begin{proof}
  For a cell at time $t$ and position $z$ to be unreliable it is
  necessary that state $\sti$ does not occur in the initial
  configuration within the interval ${[z-\lambda(t);z+\lambda(t)]}$
  (where $\lambda$ is a linear function related to the slope of the
  inner cone). The configuration $c$ being generic for a non-trivial Bernoulli measure, we immediately have that the density is upper bounded by
  ${(1-p)^{2\lambda(t)}}$ where ${0<p<1}$ is the probability of state
  $\sti$. The lemma follows.
\end{proof}

In the following we will focus on reliable parts of the configuration
only.

\subsection{Centralization}
\label{sub:centralization}

\begin{definition}
  In a configuration $c$ at time $t$, a segment is an interval
  ${[z_1;z_2]}$ of reliable cells not containing $\sts$ and such that both ${z_1-1}$ and ${z_2+1}$ are
  in state $\sts$.
\end{definition}

  In this section, we describe the external behavior of segments, that is how they interact. In the next section, we will need  to dispose of arbitrarily large segments, and to get rid of the small ones. Thus, the idea is to erase some delimiters $\sts$ in order to pool the available space of many segments into a single one.  We will make sure that most segments eventually merge with another one, which means most segments become arbitrarily large through time. We still do not concern ourselves with the primary layer. For the secondary layer, all the states we will use in this section are in the alphabet $Q_2\times Q_3$ that already contains $\{\sts\}\times Q_3$. We will use the alphabet $Q_3$ later, hence suppose that the $Q_3$ component is always $0_3$ (the quiescent state for $Q_3$) for the moment.
 

Now, let us describe the dynamics of segments among themselves. We will specify particular times when merging can happen, independently from the computation performed inside each segment. We will fix a lower bound on the acceptable size of a segment, and at these specific times, any segment that is smaller than this bound will merge. For this purpose we need to synchronize all the segments. As counters compute the time since the initial configuration, we will keep this information in segments. Therefore, time since the initial configuration is an information shared by every segment. With such a protocol, mergings are many to one and not only two to one. 

\subsubsection{Synchronization}

  When a $\sts$ is created by the collision of two counters, their common value of time is written in base $K$, for some $K\geq 2$ (the value will be precised later), on each side of the $\sts$. Hence, the age of each segment is written on both its sides. And every such $K$-ary counter keeps computing time. As any segment is delimited by acceptable $\sts$, this age is the same for all of them and is stored within $\lceil \log_K(t)\rceil$ cells on each side. 

Denote $t_i=\lceil K^{\sqrt{i}}\rceil$ for all $i\in \N$. We allow segments to merge only at time $t_i$ for any $i\in \N$.  We say that a segment is admissible at time $t$ if its length $n$ is such that $\lceil \log_K(t)\rceil \leq \lfloor \sqrt{n} \rfloor$. For any $i$, $\lceil \log_K(t)\rceil = \sqrt{i+1}$ remains unchanged between $t=t_i$ and $t=t_{i+1}$, hence each segment has to decide before $t=t_{i+1}-1$  if  $i+1 \leq n$. If not, the segment decides to merge.
  
To test this condition, segments will measure their own length. This is achieved by sending a signal from the left delimiter to the right one and back. The signal will count the length $n$ in base $K$, then $\lfloor i+1 \rfloor$ is computed inside the $\lceil \log_K(\sqrt{n})\rceil$ leftmost cells. Now each segment knows its age and its size.

\subsubsection{Merging}
 For some $i\in\N$, each segment has to decide whether it will need to merge at time $t_i$ (if it is smaller than $i+1$). If so, it checks whether its neighbors want to merge too. Then the rules to choose which neighbor it will merge with, are the following:
\begin{itemize}
\item if none of its neighbors wants to merge, it merges with the left one,
\item if only one among its neighbors wants to merge, it merges with that one.
\end{itemize}

 Then each $\sts$ delimiter between a segment and the segment it wants
 to merge with is erased. New segments are created between the
 remaining $\sts$. 



\begin{remark}
To prepare itself, a segment that needs to merge before $t_{i+1}$ (suppose we are at timestep $t= t_{i}$) has to:
\begin{itemize}
\item compute its length $n$, which needs $2n$ timesteps;
\item compute $i+1$ which takes time polynomial in $\sqrt{i}$ (value given by the length of the word encoding the age);
\item compare both, linear time;
\item check its neighbors: $n$ timesteps (if we suppose they have achieved their own computations).
\end{itemize}

 A segment needs to merge if $n\leq i+1$, each of these steps requires only polynomial time in $i$, and for large enough $i$ (large enough time), this is achieved in less than $(t_{i+1}-t_i)$ timesteps. So each segment that needs to merge has enough time to decide it before the merging step $t_{i+1}$. Other segments declare nothing to their neighbors, meaning they do not want to merge.

 So mergings can concern:
\begin{itemize}
\item either many segments that all want to merge,
\item or one that wants to merge and one that does not.
\end{itemize}
\end{remark}

\begin{remark}
\begin{enumerate}
\item For any $i\in\N$, after time $t_i$, each segment is larger than $i$.
\item If two segments exactly merge at time $t_i, i \in\N$, at least one of them is smaller than $i$.
\item If three or more segments merge together at time $t_i, i\in\N$, they are all smaller than $i$.
\end{enumerate}
\end{remark}

\subsubsection{$\mu$-limit sets}

The purpose of this slow merging process is to control the size of
segments so that the computation process we put inside can do its job
correctly:
\begin{itemize}
\item we need larger and larger segments to do longer and longer computations,
\item but but we want the typical size to grow slowly enough so that
  the computation process has time to fill the segment with the result
  of the computation. 
\end{itemize}

We already saw that, by construction, segments at time $t_i$ are of
size at least $i$. So it remains to put an upper bound on the desired size of segments.

\begin{definition}
For all $i\in\N$, a segment is said to be \emph{well-sized} at time
$t_i\leq t<t_{i+1}$ if its size is greater than $i$ and less than ${K_i=i^3}$. 
\end{definition}

The main goal of this section is to show that the density of cells
that are inside a well-sized segment goes to $1$. To show this we will
focus on all cells that are in an undesirable situation, \textit{i.e.}
in one of the following cases:
\begin{enumerate}
\item unreliable,
\item reliable but not inside a segment.
\item inside a too large segment (too small segments can not exist by construction),
\end{enumerate}

The first case was solved in the previous section
(Lemma~\ref{lem:reliable}). We will now formalize the others cases. In
the sequel, all the reasoning is done starting from a configuration
$c$ which is generic for some non-trivial Bernoulli measure. The
general idea is to consider (at any time $t$) \emph{maximal
  intervals} of cells which are reliable but not in state $\sts$. For instance, a segment is
such an interval with a $\sts$ at both ends. To each interval we
associate its corresponding pattern in the initial
configuration. Then, to show that a certain kind of interval has a small
density, it is sufficient to show that the corresponding pattern in
the initial configuration has a sufficiently small probability
compared to the length of the interval.

\begin{lemma}
  \label{lem:notsegment}
  Let $d_0(t)$ (resp. $d_0(t,l)$) be the density of reliable cells at time $t$ that are in a
  maximal reliable interval (resp. of size $l$) but not inside a segment. We
  have the following:
  \begin{enumerate}
  \item  ${\exists\alpha, 0<\alpha<1,}$ such that $d_0(t,l)\leq \alpha^l$ for
    any $t$ and for $l$ large enough;
  \item $d_0(t)\xrightarrow[t\to+\infty]{} 0$
  \end{enumerate}
\end{lemma}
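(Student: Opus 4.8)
\textbf{Proof plan for Lemma~\ref{lem:notsegment}.}

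The plan is to analyze, at a fixed time $t$, the maximal reliable intervals that are not inside a segment, and to bound their density by the probability of the corresponding pattern in the initial configuration. First I would set up the correspondence between an interval and its initial pattern: a cell $z$ is reliable at time $t$ because it lies in the inner cone of some $\sti$ present in $c$; a maximal reliable interval $I$ of size $l$ at time $t$ thus arises from a specific finite pattern $P_I$ in $c$ of length $\Theta(l + \text{slope}\cdot t)$, roughly speaking the set of initial cells whose light cones can influence $I$ together with the requirement that the $\sti$ states generating the boundary cones are placed where they are. The key point established in the previous section is that inside a reliable region, everything is under control: counters erase naked information, same-age counters produce $\sts$, and younger counters win. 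So if a maximal reliable interval contains no $\sts$ in its interior and is not capped by $\sts$ at both ends, this forces a very particular configuration of the generating $\sti$ states — essentially, there must be a long stretch of initial cells with no $\sti$ at all adjacent to the interval (otherwise a counter would have entered and either been absorbed or, upon meeting the twin counter, produced a $\sts$).

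For part (1), I would argue that for a maximal reliable interval of size $l$ to fail to be a segment, the initial pattern $P_I$ must avoid $\sti$ on a sub-window whose length is linear in $l$ (independent of $t$): indeed the reliable cells forming $I$ come from $\sti$'s whose cones reach them, and maximality plus the absence of a capping $\sts$ means that on at least one side there is no nearby $\sti$ of the right age to have produced a wall, which in a Bernoulli setting translates to a forbidden occurrence of $\sti$ across $\Omega(l)$ consecutive cells. Since $c$ is generic for a non-trivial Bernoulli measure $\mu$ with $0<p<1$ the probability of $\sti$, the density of cells in such intervals is at most $(1-p)^{\Omega(l)}$, which is $\le \alpha^l$ for a suitable $0<\alpha<1$ and all $l$ large enough. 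Here I would lean on the same genericity argument as in Lemma~\ref{lem:reliable}: the density of cells whose initial pattern lies in a set of measure $q$ is at most a constant multiple of $q$, uniformly in $t$.

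For part (2), I would split $d_0(t) = \sum_{l} d_0(t,l)$ according to the size $l$ of the maximal reliable interval containing the cell. Large intervals are handled by part (1): $\sum_{l\ge L} d_0(t,l) \le \sum_{l\ge L}\alpha^l \to 0$ as $L\to\infty$, uniformly in $t$. For small intervals (size $l < L$ for fixed $L$), I would use that such an interval is bounded by $\sti$-cones that are "too short" to have merged or to have been absorbed — more precisely, for a cell to be in a maximal reliable interval of bounded size $l<L$ at a large time $t$, the generating $\sti$ cone must be isolated in the sense that no other $\sti$ occurs in a window around it of length growing with $t$ (so that no opposite counter has yet arrived to be compared and converted); the probability of such an isolated-short-cone pattern is $(1-p)^{\Omega(t)}$, so $\sum_{l<L} d_0(t,l) \le L\cdot(1-p)^{\Omega(t)} \to 0$. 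Combining, $d_0(t)\to 0$.

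\textbf{Main obstacle.} The delicate part is the bookkeeping in the correspondence between a maximal reliable interval "not inside a segment" and a genuinely forbidden initial pattern — in particular making precise, from the collision rules of Section~\ref{sub:cleaning}, exactly which configurations of $\sti$ states in $c$ can produce a reliable interval lacking a $\sts$ endpoint, and checking that in every such case the initial pattern is constrained on a window of the claimed length (linear in $l$ for part (1), growing in $t$ for the bounded-$l$ case of part (2)). One must be careful that partially-formed structures near the interval boundary (open/closed outer signals, in-flight comparison signals) do not create a reliable interval cheaply; the resolution is that any such signal is itself traceable back to an $\sti$ in $c$, so its presence only adds more constraints on $P_I$, never fewer. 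I expect everything else — the geometric-series summation and the uniform-in-$t$ density bound — to be routine given Lemma~\ref{lem:reliable} and genericity.
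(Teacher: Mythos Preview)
Your overall strategy matches the paper's: trace a non-segment reliable interval back to an initial pattern that forbids $\sti$ on a long stretch, and use genericity to bound the density. The divergence is in how you handle part~(2), and there you miss the observation that does all the work.

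The paper first notes that if a maximal reliable interval $I$ at time $t$ fails to be a segment, then at least one endpoint of $I$ is the boundary of the inner cone generated by some $\sti$ at a position $z_0\in I$ in the initial configuration; since that cone has been expanding since time $0$, this forces $|I|\geq \alpha t$ for a constant $\alpha>0$ depending only on the cone slope. Consequently $d_0(t,l)=0$ whenever $l<\alpha t$, and $d_0(t)=\sum_{l\geq \alpha t} d_0(t,l)$ is simply the tail of the convergent series provided by part~(1). Your split into ``small $l$'' and ``large $l$'' is therefore unnecessary, and your separate argument for bounded $l$ via an ``isolated $\sti$'' is confused: an isolated $\sti$ produces a \emph{large} reliable interval (its cone already has width $\Theta(t)$), not a small one --- there simply are no maximal reliable intervals of bounded size at large $t$.

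For part~(1) you correctly aim for $\Omega(l)$ consecutive initial cells without $\sti$, but your justification stays at the level of intuition. The paper gets there by also tracking the history of $I$: the successive intervals $I_{t'}$ containing $z_0$ are never segments, and any merging at time $t_i$ absorbs at most one neighbouring segment of size $\leq i$, which yields the matching upper bound $|I|\leq 2\alpha t$. From this two-sided bound one deduces that $I$ contains a sub-interval $I_0$ of size $\geq |I|/2$ carrying no $\sti$ at time $0$, whence $d_0(t,l)\leq l^2(1-p)^{l/2}$. That history argument --- and in particular the upper bound on $|I|$ --- is the concrete bookkeeping your sketch gestures at but does not supply.
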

\begin{proof}
  Consider a maximal reliable interval $I\subseteq\Z$ which is not a segment at time
  $t$: at least one of its ends corresponds to the inner cone
  generated by some state $\sti$ at position ${z_0\in I}$ in the initial
  configuration. So there is some constant $\alpha>0$ (slope of the
  cone) such that ${|I|\geq \alpha\cdot t}$. Now, considering the
  history of $I$, i.e. the successive maximal reliable intervals $I_{t'}$ containing $z_0$ at any time $t'$
  between $0$ and $t$, we have the following:
  \begin{itemize}
  \item $I_{t'}$ is never a segment;
  \item therefore, if a merging happens at some extremity of some
    $I_{t'}$, it can only be when $t'=t_i$ (for some $i$) and involve
    $I_{t'}$ plus a single segment of size at most $i$.
  \end{itemize}
  We deduce that ${|I|\leq 2\cdot\alpha\cdot t}$ (the worst case being
  when $I$ has not been involved in any merging in all its
  history). Therefore, there must be an interval ${I_0\subset I}$ of
  cells which are not in state $\sti$ in the initial configuration and
  whose size verifies: ${|I_0|\geq |I|/2}$. 

  Denoting by $d_L(t,l)$ the density of cells which are the leftmost
  cell of a maximal reliable interval of size $l$ which is not a
  segment at time $t$, we have:
  \begin{enumerate}
  \item $d_L(t,l)=0$ if $l\leq \alpha\cdot t$;
  \item $d_L(t,l)\leq l\cdot(1-p)^{l/2}$ otherwise,
  \end{enumerate}
  where $p$ is the probability of state $\sti$. Hence $d_0(t,l)\leq
  l^2\cdot(1-p)^{l/2}$ for any $t$ and the first item is proved.
  Moreover we deduce that the density $d_0(t)$ of reliable cells not
  inside a segment at time $t$ verifies:
  \begin{align*}
    d_0(t) &\leq \sum_{l=\alpha t}^{+\infty}l^2\cdot(1-p)^{l/2}\\
    &\leq\sum_{l=\alpha t}^{+\infty}\bigl((1-p)^{\frac{1}{4}}\bigr)^l\text{
      for $t$ large enough}\\
    &=\frac{(1-p)^{\alpha t/4}}{1-(1-p)^{1/4}}
    \xrightarrow[t\to+\infty]{} 0.
  \end{align*}
\end{proof}

\begin{lemma}
  The density $d_1(t)$ of cells which are at time $t$ in a segment which
  is not well-sized goes to $0$ as ${t\rightarrow\infty}$.
  \label{lem:toobigseg}
\end{lemma}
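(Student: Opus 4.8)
By the previous remarks, a segment existing at time $t$ with $t_i\leq t<t_{i+1}$ has size at least $i$, so the only way it can fail to be well-sized is to be \emph{too large}, namely of size $\geq K_i=i^3$. The plan is to bound the density of cells lying in such an oversized segment by tracing the history of the segment back to the initial configuration and arguing, as in Lemma~\ref{lem:notsegment}, that a large current segment forces a correspondingly large pattern in the initial configuration that must contain many non-$\sti$ cells (or, dually, few $\sti$ cells), an event of exponentially small probability in the length; then one sums over all admissible lengths. So first I would fix $i$ with $t_i\leq t<t_{i+1}$ and consider a segment $I$ at time $t$ with $|I|=l\geq i^3$. Looking at the history $I_{t'}$ of $I$ (the maximal reliable interval containing a fixed witness position $z_0\in I$ at each time $t'\leq t$), the segment $I$ was assembled by merging events, each occurring only at some $t_j\leq t$, i.e. at most $i$ merging steps in total. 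Here is the key quantitative point: at step $t_j$ a merge either pairs $I_{t'}$ with a single segment of size $\leq j$, or pools together several segments all of size $\leq j$ (by the last Remark). In the first ``two-to-one'' case the size grows by at most $j\leq i$; the multi-way merges are the ones that can grow the segment substantially.

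The main estimate to establish is therefore: a segment of size $l$ at time $t_i$ must, somewhere in its history, have absorbed ``enough'' pieces so that the corresponding interval $I_0$ in the initial configuration of size $\geq \Omega(l)$ contains at least $\Omega(l/i)$ positions bearing the state $\sti$ (each merged component was itself born from its own $\sti$-cone, so its initial footprint contains an $\sti$), \emph{yet} the number of $\sti$'s in a window of size $m$ in a generic configuration for a non-trivial Bernoulli measure is, with overwhelming probability, close to $pm$ where $p$ is the probability of $\sti$; if $l$ is large the required number $\Omega(l/i)$ of $\sti$'s is compatible only on an event whose density (estimated crudely by a union bound / Chernoff-type tail over the $\binom{m}{k}$ placements) is exponentially small in $l$, uniformly in $t$. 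Concretely, writing $d_L(t,l)$ for the density of leftmost cells of an oversized segment of size $l$ at time $t$, one gets a bound of the shape $d_L(t,l)\leq 2^{\mathrm{poly}(l)}\cdot q^{\,l}$ for some $q<1$ coming from the Bernoulli tail, hence $d_L(t,l)\leq \beta^{l}$ for some $\beta<1$ and $l$ large; this then gives $d_1(t)\leq\sum_{l\geq i^3}l\cdot\beta^{l}$, which is $\leq C\cdot(\beta')^{i^3}\to 0$ as $t\to\infty$ (since $i\to\infty$), exactly the geometric-sum argument of Lemma~\ref{lem:notsegment}.

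I expect the delicate step to be the combinatorial/probabilistic heart: correctly formalizing ``a segment of size $l$ at time $t_i$ has an initial footprint forcing many $\sti$'s'' and turning it into an exponential-in-$l$ density bound that is \emph{uniform in $t$}. The subtlety is that a large segment could in principle arise from relatively few, but individually huge, merged components; one must use that each merged component is itself constrained — at time $t_j$ every surviving segment has size at least $j$ and at most $K_j=j^3$, so a component present just before the last merge has bounded size, forcing the number of components (hence of $\sti$'s) to be at least roughly $l/i^3$, which is still linear enough in $l$ for a fixed regime of $i$ but whose interplay with the $i$-dependence must be tracked so the final sum $\sum_{l\geq i^3}$ still decays. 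Once this tail estimate is in place, combining it with Lemma~\ref{lem:reliable} (unreliable cells have vanishing density) and Lemma~\ref{lem:notsegment} (reliable cells not in a segment have vanishing density) yields that the density of cells in a well-sized segment tends to $1$, which is the intended consequence; but the statement as phrased only asks for $d_1(t)\to 0$, so the proof ends at the geometric sum.
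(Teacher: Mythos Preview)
Your probabilistic core is inverted. You propose to show that an oversized segment of length $l$ forces the initial footprint of size $\Omega(l)$ to contain at least $\Omega(l/i)$ cells in state $\sti$, and then to bound this by a Chernoff tail. But having $\Omega(l/i)$ occurrences of $\sti$ in a window of size $\Omega(l)$ is the \emph{typical} event for a Bernoulli measure with $\sti$-probability $p>0$ (the expected count is $pl\gg l/i$ once $i>1/p$), not a rare one; no tail bound is available in that direction. The related attempt in your last paragraph, bounding each pre-merge component by $K_j=j^3$, is circular: the statement ``segments at time $t_j$ have size at most $j^3$'' is precisely what you are trying to establish, so it cannot be assumed for earlier $j$.

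The paper's argument works by forcing \emph{absence} rather than presence. One recurses back through the merging history, peeling off two-to-one merges (which cost at most $j$ at step $t_j$, hence at most $i^2$ in total), until the segment either (a) was, at some earlier time, a reliable interval that was not yet a segment --- handled by the exponential bound $d_0(t,l)\le\alpha^l$ of Lemma~\ref{lem:notsegment} --- or (b) arose from a many-to-one merge at some $t_j$ of segments all of size $j$. In case (b) the delimiter $\sts$ must have been created at each of the $\approx l/j$ regularly spaced positions $z_0+nj$; this is incompatible with the fixed pattern $P=\sti\,a\,\sti\,a\,\sti$ occurring centered at any of those positions in the initial configuration, so the density of such $z_0$ is at most $(1-q)^{l/j}$ with $q=\mu([P])>0$. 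Summing these two exponential bounds over $j\le i$, $t\le t_i$, and then over $l\ge K_i=i^3$ gives the decay. Your outline misses case (a) entirely and, for case (b), needs to replace the ``many $\sti$'' count by this forbidden-pattern argument at the delimiter positions.
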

\begin{proof}
  Let $d_+(i)$ be the density of cells at time $t_i$ which are in a
  segment larger than $K_i$ and consider a time $t$ with ${t_i\leq
    t\leq t_{i+1}}$.  Too small segments (\emph{i.e.} smaller than
  $i$) can not exist by construction and too large segments can only
  come from too large segments at time $t_i$ or reliable intervals
  which turned into segments at time $t$. So we have:
  \[d_1(t)\leq d_+(i) + d_0(t-1).\]
  Therefore, using Lemma~\ref{lem:notsegment}, it is sufficient to prove
  that $d_+(i)$ goes to $0$ when ${i\rightarrow\infty}$.

  Consider a segment of size $l\geq K_i$ at time $t_i$. It may only come from one
  of the following situations:
  \begin{enumerate}
  \item a segment of size $l$ at time $t_{i-1}$,
  \item the merging of a segment of size $l-i$ and another of size $i$
    at time $t_{i-1}$,
  \item a reliable interval which is not a segment of size at least $l-i$ at some time $t$,
    ${t_{i-1}\leq t\leq t_i}$,
  \item the merging of many segments of size $i$ at time $t_i$.
  \end{enumerate}
  Going back in time recursively through cases 1 or 2 until encountering
  case 3 or 4, we deduce that to each segment of size ${l\geq K_i}$ at
  time $t_i$ corresponds either a reliable interval which is not a
  segment of size at least ${l-i^2}$ at some time ${t\leq t_i}$, or a
  merging of many segments of size $j$ at time $t_j$ (for some ${j\leq
    i}$) resulting in a large segment of size at least $l-i^2$ (in both
  cases the reduction of $i^2$ in size is an upper bound on the worst
  case where we lose $i$ at time $t_i$, $i-1$ at time $t_{i-1}$, etc).
  Now, denoting by $d_S(i,l)$ the density of cells which are in a
  segment of size $l$ at time $t_i$, and $d_M(j,k)$ the density of
  cells which are at time $t_j$ in a segment of size $k$ resulting
  from a merging of many segments of size $j$, we have:
  \begin{equation}
    d_S(i,l)\leq \frac{l}{l-i^2} \bigl(\sum_{j\leq i} d_M(j,l-i^2) + \sum_{t\leq t_i}d_0(t,l-i^2)\bigr)\label{eq:twocases}
  \end{equation}
  Let's focus first on $d_M(j,k)$ and consider a segment at time $t_j$
  coming from a merge of ${k/j}$ segments of size $j$. Let's
  $z_0\in\Z$ be the leftmost position of that segment: by hypothesis,
  this implies that at each position ${z_0+nj}$, for ${0\leq n\leq
    k/j}$, a $\sts$ is created at some time before $t_j$. This in
  particular implies that the specific pattern ${P=\sti\, a\,\sti\,
  a\,\sti}$ (where ${a\not=\sti}$) does not occur centered at any of
  the aforementioned positions ${z_0+nj}$ in the initial
  configuration: indeed, it would create a $\sts$ at positions
  ${z_0+nj-1}$ and ${z_0+nj+1}$ at time $1$, and forbid forever the
  apparition of the required $\sts$ at position ${z_0+nj}$. Let $q$ be
  the probability of this pattern $P$, we deduce that the density of such
  positions as $z_0$ is less than
  \[(1-q)^{k/j}\]
  and therefore
  \[d_M(j,k)\leq k\cdot(1-q)^{k/j}.\]
  Now putting back this upper bound on $d_M(j,k)$ and the one from
  Lemma~\ref{lem:notsegment} on ${d_0(t,l)}$ into
  Equation~\ref{eq:twocases}, we get
  \[d_S(i,l)\leq \frac{l}{l-i^2} \bigl(i\cdot l\cdot \beta^{\frac{l-i^2}{i}} +
  t_i\cdot\alpha^{l-i^2}\bigr)\]
  where $\alpha$ and $\beta$ are constants between $0$ and $1$. We
  therefore have
  \[d_+(i)\leq \sum_{l\geq K_i} \frac{l}{l-i^2} \bigl(i\cdot l\cdot \beta^{\frac{l-i^2}{i}} +
  t_i\cdot\alpha^{l-i^2}\bigr)\]
  Since $t_i$ grows like ${K^{\sqrt{i}}}$ and $K_i$ grows like ${i^3}$, there is some constant
  $\gamma$ between $0$ and $1$ such that, for large enough $i$, each
  term of the sum above is less than ${\gamma^l}$ (for all ${l\geq
    K_i}$). The lemma follows because for large enough $i$ we then have:
  \[d_+(i)\leq\sum_{l\geq K_i}\gamma^l = \frac{\gamma^{K_i}}{1-\gamma}\xrightarrow[i\to+\infty]{} 0.\]
\end{proof}

From the two lemmas above, we deduce the main result of this section.

\begin{prop}
  \label{prop:outwellseg}
  The density of cells which are inside a well-sized segment at time
  $t$ goes to $1$ as ${t\rightarrow\infty}$.
\end{prop}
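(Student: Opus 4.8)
The plan is to deduce the statement directly from the three density estimates already proved---Lemma~\ref{lem:reliable}, Lemma~\ref{lem:notsegment} and Lemma~\ref{lem:toobigseg}---together with the structural fact, recorded in the remarks following the merging rules, that after time $t_i$ every segment has size strictly larger than $i$ (so that ``too small'' segments simply never occur). Throughout I fix a non-trivial Bernoulli measure $\mu$ and a configuration $c$ generic for $\mu$, and reason in density inside $\A^t(c)$, exactly as in the proofs of Lemmas~\ref{lem:notsegment} and~\ref{lem:toobigseg}.

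First I would set up, for each fixed time $t$, an exhaustive partition of $\Z$ into five classes of cells of $\A^t(c)$: (i) unreliable cells; (ii) reliable cells in state $\sts$; (iii) reliable cells not in state $\sts$ that lie in a maximal reliable non-$\sts$ interval which is not a segment; (iv) cells inside a segment which is not well-sized; (v) cells inside a well-sized segment. That this is genuinely a partition uses two remarks: a reliable non-$\sts$ cell lies in a unique maximal reliable non-$\sts$ interval, which is a segment if and only if it is flanked by $\sts$ on both sides; and, for $t$ with $t_i\le t<t_{i+1}$, a segment has size $>i$, hence is either well-sized (size in $\{i+1,\dots,K_i-1\}$) or too large (size $\ge K_i$), the ``too small'' case being vacuous. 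Thus class (iv) is exactly the set of cells inside a too-large segment.

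Then I would bound the density of each of the first four classes as a function of $t$. Class (i): density $\to 0$ by Lemma~\ref{lem:reliable}. Class (iii): this is precisely the quantity $d_0(t)$ of Lemma~\ref{lem:notsegment}, whose item~2 gives density $\to 0$. Class (iv): by the paragraph above this is the density of cells in a too-large segment, which $\to 0$ by Lemma~\ref{lem:toobigseg}. Class (ii): the $\sts$ cells never move and are only removed (never created anew) by mergings, and after time $t_i$ consecutive $\sts$ cells are either separated by a segment, hence by more than $i$ cells, or by a block meeting class (i) or class (iii); charging each $\sts$ cell to the block immediately to its right shows its density is dominated by $O(1/i)$ plus the densities of classes (i) and (iii), hence $\to 0$. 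Finally, since $d(\Z)=1$, subadditivity of $d$ gives that the density of class (v) is at least $1$ minus the densities of classes (i)--(iv); letting $t\to\infty$ yields that it tends to $1$, which is the claim.

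The substantive work is entirely inside the three lemmas invoked; this proposition is essentially their bookkeeping wrap-up. The only points needing genuine care are verifying that the five classes are exhaustive---in particular that ``not well-sized'' collapses to ``too large'' thanks to the construction---and the minor estimate for the $\sts$ cells, which belong to no segment but form a set of vanishing density because segments grow without bound.
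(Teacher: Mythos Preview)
Your proposal is correct and follows essentially the same approach as the paper: both partition the cells into unreliable cells, reliable cells not in a segment, cells in non-well-sized segments, and cells in well-sized segments, then invoke Lemmas~\ref{lem:reliable}, \ref{lem:notsegment}, and~\ref{lem:toobigseg} to conclude. The paper's proof is a one-liner writing the desired density as $1-d_{NR}(t)-d_0(t)-d_1(t)$; the only difference is that you single out the $\sts$ cells as a separate class and argue their density vanishes, whereas the paper tacitly absorbs them into $d_0(t)$ (reliable cells not inside a segment).
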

\begin{proof}
  The density of such cells is exactly
  ${1-d_{NR}(t)-d_0(t)-d_1(t)}$ where $d_{NR}(t)$ denotes the density
  of non-reliable cells at time $t$.  Lemmas~\ref{lem:reliable},
  \ref{lem:notsegment} and \ref{lem:toobigseg} concludes the proof.
\end{proof}

\subsection{Computing and Writing}
\label{sub:comp_wri}
In this section we describe the final part of the construction. We concentrate on the internal behavior of the segments, and we will use the states of the alphabet $Q_2\times Q_3$. The $Q_2$ part was described in the previous section, and the $Q_3$ part does not interfere with it, hence the state is in $\{O_2\}\times Q_3$ when the $Q_2$ part is not specified by the rules of the previous section.

The computation inside segments depends on two growing
sequences $(w_i)_i$ and $(w'_i)_i$ which can each be generated on
input $i$ in time $\frac{1}{4}(t_{i+1}-t_i)$ and space $\sqrt{i}$.

In a segment at time $t_i, i\in\N$, the computation process goes
through the following steps (see figure~\ref{fig:computationsteps}): 
\begin{enumerate}[label={\bf(\alph{*})}, ref={\bf(\alph{*})}]
\item \label{enum:writ1} in the $\lfloor\sqrt{i}\rfloor$ leftmost cells, the Turing machines $T$ and $T'$ compute and output the words $w_i$ and $w'_i$;
\item \label{enum:writ2} a writing head carrying a memory writes copies of the word $w_i$ separated by some delimiter $\ste$;
\item \label{enum:writ3} the writing head comes back to the left of the segment and wait until $t=t_i+\frac{1}{2}(t_{i+1}-t_i)+(|w_i|+1)K_i$;
\item \label{enum:writ4} a writing head carrying a memory writes copies of the word $w'_i$ separated by some delimiter $\ste$, thus erasing the copies of $w_i$;
\item \label{enum:writ5} the writing head kills itself.
\end{enumerate}

\begin{figure}[ht]
  \centering
  \includegraphics[width=\textwidth]{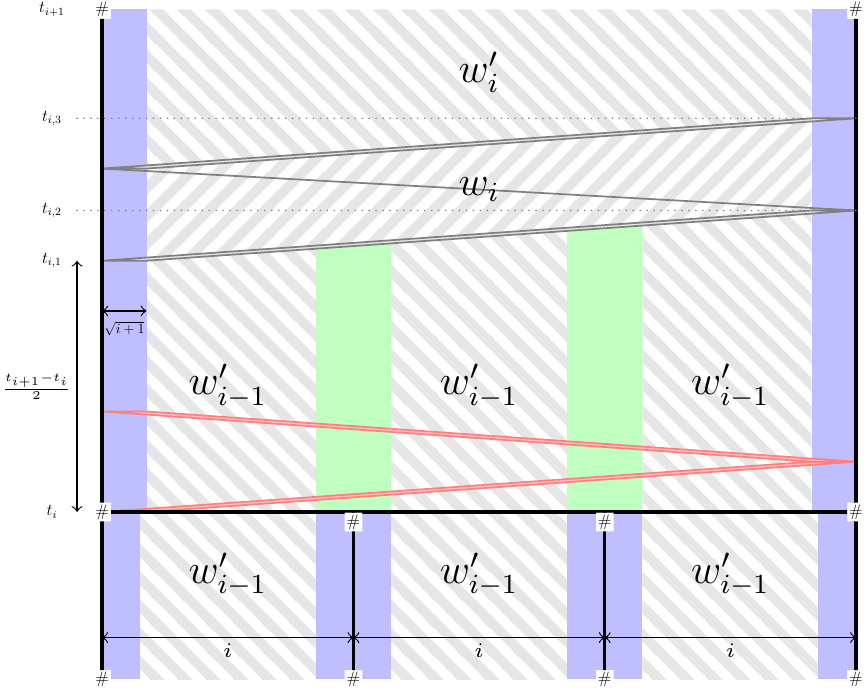}
  \caption{Computation process in a well-sized segment after a merging.}
  \label{fig:computationsteps}
\end{figure}

Suppose the length of the segment is $l$, recall $l\geq i$.

For the first part~\ref{enum:writ1}, the Turing machines are simulated successively in the obvious way, and thanks to the previous remark, they never use more than $\lfloor\sqrt{i}\rfloor$ cells, and the computation is achieved before time $t_i+\frac{1}{2}(t_{i+1}-t_i)$.

For the second part~\ref{enum:writ2}, once the words $w_i$ and $w'_i$ computed and stored in the $\lfloor\sqrt{i}\rfloor$ leftmost cells of the segment, a prefix of the word $(w_i\ste)^{\omega}$ is written all over the segment, this is achieved with a head that carries $w_i$ as its memory, hence it needs $|w_i|.l$ timesteps to reach the end of the segment.

The third part~\ref{enum:writ3} takes only $l$ steps, thus, for any well-sized segment, the head arrives at the left of the segment before time $t_i+\frac{1}{2}(t_{i+1}-t_i)+(|w_i|+1)K_i$. In the case of a non well-sized segment, the writing process stops there.

The fourth part~\ref{enum:writ4} is similar to the third one and the fifth part~\ref{enum:writ5}  is instantaneous.

The whole process takes less than $\frac{1}{4}(t_{i+1}-t_i)+\frac{1}{4}(t_{i+1}-t_i)+(|w_i|+1)K_i+|w'_i|l\leq \frac{1}{2}(t_{i+1}-t_i)+(2\sqrt{i}+1)l$ which is less than $t_{i+1}-t_i$ for any well-sized segment ($l\leq K_i$).

For any segment $s$ at time $t$, denote $w_t(s)$ the content of $s$, that is $F^t(c)_{[a,b]}$ if the segment is between positions $a$ and $b$. Therefore, for any $s$ and $t$, $w_t(s)$ is the concatenation of two subwords: the beginning of the result of the computation in the segment, and the end of the results written by its predecessors, which may contain $\ste$ states. One of those two parts may be empty.

\subsection{Proof of the theorem}
Recall the statement of the theorem:
\begin{theorem}
  Given a finite alphabet $Q_0$:
  \begin{enumerate}
  \item for any growing computable sequence $w\in\W(Q_0)$, there exists a CA $\A$ over alphabet $Q\supseteq Q_0$ such that $\Lmu(\A)=L_w$ where $\mu$ is a full-support Bernoulli measure over $Q$.
  \item for any growing computable sequences $w,w'\in\W(Q_0)$, there exists a CA $\A$ over alphabet $Q\supseteq Q_0$ such that $\left\{\begin{array}{l}\Lmu(\A)= L_w\cup L_{w'}\\ C_{\mu}(\A)=L_{w'}\end{array}\right.$ where $\mu$ is a full-support Bernoulli measure over $Q$.
  \end{enumerate}

\end{theorem}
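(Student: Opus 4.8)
The plan is to assemble the three stages of the construction (cleaning, centralization, computing-and-writing) and then perform a reasoning \emph{in density} on a generic configuration $c$ for the full-support Bernoulli measure $\mu$, using Lemmas~\ref{lem:generic} and~\ref{lem:cesarogeneric} to translate between $\A^t\mu([u])$ (resp. its Cesaro mean) and $d_{\A^t(c)}(u)$ (resp. its Cesaro mean). By Proposition~\ref{prop:outwellseg}, the density of cells lying in a well-sized segment tends to $1$; so for the purpose of computing $\limsup_t d_{\A^t(c)}(u)$ it suffices to understand the density of $u$ inside well-sized segments, the rest contributing a vanishing amount. Concretely, I would fix $u\in Q_0^*$ (the case of $u$ containing secondary-layer symbols is handled separately: such $u$ has vanishing density since the secondary layer occupies a vanishing fraction of cells, wait --- actually the secondary layer is present on a density-$1$ set of cells, so one argues instead that any word using a ``forbidden'' secondary symbol such as $\sti$ or a malformed counter has density $\to 0$, while words over $Q_0$ on the primary layer are the only ones that can persist --- here one must be careful that $L_w\subseteq Q_0^*$ by definition).

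\textbf{Step 1: apply Lemma~\ref{lem:incrseq}.} Given the arbitrary growing computable sequences $w,w'\in\W(Q_0)$, invoke Lemma~\ref{lem:incrseq} with $T(i)=\frac14(t_{i+1}-t_i)$ and $S(i)=\lfloor\sqrt i\rfloor$ (checking that these satisfy the hypotheses: $t_i=\lceil K^{\sqrt i}\rceil$ so $t_{i+1}-t_i$ grows faster than any polynomial in $i$, while $\sqrt i$ grows slower than that and faster than $\log i$; the time/space complexity bookkeeping is routine) to replace $w,w'$ by density-equivalent sequences ($L_w=L_{\tilde w}$, $L_{w'}=L_{\tilde w'}$) computable within the time and space budget that the ``computing and writing'' section of the construction requires. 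This is exactly why that lemma was proved. From now on assume $w,w'$ themselves meet these resource bounds.

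\textbf{Step 2: density of $u$ inside well-sized segments at time $t$.} Fix $i$ with $t_i\le t<t_{i+1}$ and a well-sized segment $s$ (size $l$ with $i\le l\le K_i=i^3$). Its content $w_t(s)$ is, per Section~\ref{sub:comp_wri}, a concatenation of a prefix of $(w_i\ste)^\omega$ or $(w'_i\ste)^\omega$ together with a ``leftover'' prefix/suffix from the predecessor segment's writing, plus $O(\sqrt i)$ computation cells on the left; all the exceptional parts have size $O(\sqrt i + |w_i| + |w'_i|)=O(\sqrt i)$, a vanishing fraction of $l\ge i$. So $d_{w_t(s)}(u) = d_{(w_i\ste)^\omega}(u) + o(1)$ for $t$ in the first time-window of step~(b)--(c), and $= d_{(w'_i\ste)^\omega}(u)+o(1)$ in the window of step~(d), uniformly over well-sized $s$. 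Averaging over all well-sized segments and using that they carry density $\to 1$: $d_{\A^t(c)}(u) = d_{(w_i\ste)^\omega}(u)+o(1)$ or $d_{(w'_i\ste)^\omega}(u)+o(1)$ depending on which phase $t$ is in within $[t_i,t_{i+1})$ --- and one notes $d_{(v\ste)^\omega}(u)=d_v(u)(1+o(1))$ as $|v|\to\infty$ when $u\in Q_0^*$ does not contain $\ste$, so these are $d_{w_i}(u)+o(1)$ and $d_{w'_i}(u)+o(1)$ respectively. The remaining case $\ste\in u$ (or $u$ straddling segment boundaries / using secondary symbols): such $u$ occurs only at a vanishing density of positions (one delimiter per $\Omega(|w_i|)=\Omega(1)$, actually $\Omega(\sqrt i)\to\infty$, cells, plus $O(1/l)$ boundary positions), hence $d_{\A^t(c)}(u)\to 0$ and $u\notin\Lmu(\A)$, consistent with $u\notin L_w\cup L_{w'}$.

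\textbf{Step 3: conclude.} For part (1), use a variant of the construction where only $w$ is written (drop steps (d)--(e), or equivalently take $w'$ with empty-ish words / just never overwrite): then $d_{\A^t(c)}(u)=d_{w_i}(u)+o(1)$ for $t_i\le t<t_{i+1}$, so $\limsup_t d_{\A^t(c)}(u) = \limsup_i d_{w_i}(u)$, which is $>0$ iff $u\in L_w$; Lemma~\ref{lem:generic} gives $\Lmu(\A)=L_w$. For part (2): $d_{\A^t(c)}(u)$ oscillates, between roughly $d_{w_i}(u)+o(1)$ and $d_{w'_i}(u)+o(1)$, so its $\limsup$ is $\max(\limsup_i d_{w_i}(u),\limsup_i d_{w'_i}(u))$, positive iff $u\in L_w\cup L_{w'}$, giving $\Lmu(\A)=L_w\cup L_{w'}$. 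For the Cesaro part, observe that within each window $[t_i,t_{i+1})$ the ``$w'$-phase'' (step (d) onward) occupies at least a fixed fraction --- by the timing in step~(c), the head starts writing $w'_i$ at time $t_i+\frac12(t_{i+1}-t_i)+(|w_i|+1)K_i$ and finishes well before $t_{i+1}$, and since $t_{i+1}-t_i\gg t_i-t_{i-1}$ the contribution of the $w'$-phase dominates the Cesaro average: $\frac1t\sum_{\tau\le t}d_{\A^\tau(c)}(u)\to$ the quantity controlled by $(d_{w'_i}(u))_i$, namely it is bounded below by $\limsup$-type behaviour of $d_{w'_i}$ and, because the $w$-phases have asymptotically negligible cumulative length relative to $t$ (as $t_{i+1}-t_i$ is the dominant term and the $w$-phase is $\le\frac12(t_{i+1}-t_i)$ --- here one needs that it's actually $o(t_{i+1}-t_i)$ along the relevant subsequence, or argues with $C_\mu$ directly as a $\limsup$ of Cesaro means), the $\limsup$ of the Cesaro means equals $\limsup_i d_{w'_i}(u)$, so $C_\mu(\A)=L_{w'}$. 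Then Lemma~\ref{lem:cesarogeneric} finishes.

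\textbf{Main obstacle.} The delicate point is the Cesaro computation in part (2): one must control, for $t$ ranging over a full window $[t_i,t_{i+1})$, exactly what fraction of time is spent in the ``$w$-phase'' versus the ``$w'$-phase'', and show the partial Cesaro sums up to an arbitrary $t$ (not just $t=t_{i+1}$) are dominated by the $w'$-phase of the current and previous windows. This hinges on the super-geometric growth $t_{i+1}-t_i \gg t_i$ (so the last window alone dominates the whole history) together with the timing guarantee in step~(c) that the $w'$-phase is not merely nonempty but occupies a constant fraction (indeed almost all, up to $O(K_i)=O(i^3)=o(t_{i+1}-t_i)$) of each window. Everything else --- the density estimates for exceptional cells, the $o(1)$ errors from delimiters and computation zones, and the reduction to well-sized segments --- is routine given Proposition~\ref{prop:outwellseg} and Lemma~\ref{lem:incrseq}.
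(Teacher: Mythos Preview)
Your plan follows the paper's closely: normalize resources via Lemma~\ref{lem:incrseq}, reduce to segment interiors via Proposition~\ref{prop:outwellseg}, describe a typical segment's content as repetitions of $w_i$ or $w'_i$ with $o(1)$ defects, and check the four inclusions (the paper deduces item~1 by taking $w'=w$, which is equivalent to your ``drop steps (d)--(e)''). Two points need attention.

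First, the paper works not with well-sized segments but with \emph{good} segments --- well-sized segments whose predecessors at the previous merging step were already well-sized. A segment that has just become well-sized (e.g.\ arising from a too-large predecessor, or from a reliable interval that was not yet a segment) need not carry the expected content $w'_{i-1}$ at time $t_i$, so your Step~2 description fails for it and Remark~\ref{rem:struct_seg} does not apply. The patch is cheap (the density of well-sized-but-not-good cells at $t_i$ is at most the density of not-well-sized cells at $t_{i-1}$, which tends to $0$ by Proposition~\ref{prop:outwellseg}), but it must be stated.

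Second, your Cesaro lower bound for $L_{w'}\subseteq C_\mu(\A)$ rests on ``super-geometric growth $t_{i+1}-t_i\gg t_i$ so the last window alone dominates the whole history''. This is false for the schedule $t_i=\lceil K^{\sqrt i}\rceil$ of Section~\ref{sub:centralization}: since $\sqrt{i+1}-\sqrt i\to 0$ one has $t_{i+1}/t_i\to 1$, hence $t_{i+1}-t_i=o(t_i)$, and the last window is a \emph{vanishing} fraction of $[0,t_{i+1}]$. The single-window estimate $\frac{1}{t_{i+1}}(t_{i+1}-t_{i,3})\cdot\frac{\epsilon}{4}$ therefore tends to $0$ and does not by itself witness $u\in C_\mu(\A)$; the paper's own $\epsilon/16$ bound is equally optimistic here. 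A correct argument must accumulate $w'$-contributions across many past windows (noting that $[t_j,t_{j,1}]$ still carries $w'_{j-1}$), or else alter the schedule so that successive windows genuinely dominate their history. By contrast, your argument for $C_\mu(\A)\subseteq L_{w'}$ is fine: the $w$-phase has length $O(i^{3.5})=o(t_{i+1}-t_i)$, and that is all that is needed there.
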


We will prove the second part of the theorem. We then deduce the first
point by taking $w_i=w'_i,\forall i\in\N$. 
First, by lemma~\ref{lem:incrseq}, remark that it is possible to suppose
that the growing computable sequences $(w_i)_i$ and $(w'_i)_i$ are
such that there exist Turing machines $T$ and $T'$ such that, given
$i\in\N$ as an input, $T$ (resp. $T'$) computes $w_i$ (resp. $w'_i$)
in time $\frac{1}{4}(t_{i+1}-t_i)$ and space $\sqrt{i}$. 

The idea of the proof is that well-sized segments that result from a
merging of well-sized segments, called \emph{good segments}, tend to
almost cover the images of a generic configuration (direct corollary
of Proposition~\ref{prop:outwellseg}), and they contain essentially
copies of the words $w_i$ or $w'_i$. The technical point justifying to
focus only on good segments and not on all well-sized segment is that
a well-sized segment that just merged from a not well-sized segment
might not be properly initialized.

It is essential to note that the content of good segments is easily described as said in the following remark.

\begin{remark}\label{rem:struct_seg}
  For $t_i\leq t\leq t_{i+1}$ and any good segment $s$, $w_t(s)=v_0v_1v_2v_3v_4$ where:
  \begin{itemize}
  \item $|v_0|\leq \sqrt{i}$, this corresponds to the computation area and the storage of the age counter;
  \item $|v_4|\leq \sqrt{i}$, this corresponds to the storage of the age counter on the right;
  \item $|v_2|\leq \sqrt{i}$, this corresponds either to the signal that computes $|s|$ ($t\leq t_i+\frac{1}{2}(t_{i+1}-t_i)$) or to the writing head and its memory ($t\geq t_i+\frac{1}{2}(t_{i+1}-t_i)$);
  \item $v_1$ and $v_3$ belong to $Q_0^*$.
  \end{itemize}
  The words $v_2$ and $v_3$ may be empty.
  Moreover, $v_1$ and $v_3$ contain periodic repetitions of $w'_{i-1}$ (written before $t_i$), $w_{i}$ or $w'_{i}$, depending of the status of the writing process.
  In particular, for $t=t_i$, $v_2$ and $v_3$ are empty and $v_1$ contains repetitions of  $w'_{i-1}$. At $t=t_{i+1}-1$, $v_2$ and $v_3$ are empty and $v_1$ contains repetitions of  $w'_{i}$.
\end{remark}

Denote for $i\in\N$:
\newcommand\tiun{t_{i,1}}
\newcommand\tideux{t_{i,2}}
\newcommand\titrois{t_{i,3}}
\begin{itemize}
\item $\tiun=t_i+\frac{1}{2}(t_{i+1}-t_i)$: at that time, the writing process starts in good segments;
\item $\tideux =t_i+\frac{1}{2}(t_{i+1}-t_i)+(|w_i|)K_i$: at that time, copies of $w_i\ste$ are written all over well-sized segments; 
\item $\titrois =t_i+\frac{1}{2}(t_{i+1}-t_i)+(|w_i|+|w'_i|+1)K_i$: at that time, the writing process is finished.
\end{itemize}

First, the following Lemma justifies that we focus on the density of
words inside good segments.
\begin{lemma}
  \label{lem:segmentsdothejob}
  Take $u\in Q^*$ and $c$ a generic configuration, and consider
  ${(\alpha_t)}$ such that $d_{w_t(s_t)(u)}\geq \alpha_t$ for any
  good segment $s_t$ at time $t$ starting from $c$. We have the
  following:
  \begin{enumerate}
  \item if ${\alpha_t\not\rightarrow 0}$ then ${u\in \Lmu(\A)}$
  \item if ${\alpha_t}$ does not go to $0$ in Cesaro mean then ${u\in C_{\mu}(\A)}$
  \end{enumerate}
\end{lemma}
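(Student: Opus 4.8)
The plan is to reduce the statement about membership in $\Lmu(\A)$ (resp. $C_\mu(\A)$) to a density statement about images of a generic configuration, via Lemma~\ref{lem:generic} (resp. Lemma~\ref{lem:cesarogeneric}), and then to bound from below the density of occurrences of $u$ in $\A^t(c)$ by the density contributed by good segments alone. Concretely, fix a generic configuration $c$ for the full-support Bernoulli measure $\mu$. At time $t$, partition the cells of $\Z$ into those lying inside a good segment and those not. By Proposition~\ref{prop:outwellseg} and the discussion of good segments (a well-sized segment fails to be good only if it recently merged from a non-well-sized segment, which happens for a vanishing density of cells by the same merging estimates as in Lemma~\ref{lem:toobigseg}), the density of cells \emph{not} inside a good segment at time $t$ tends to $0$; call this quantity $\eta_t\to 0$.

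The key computation is then the following. Each good segment $s_t$ at time $t$ contributes at least $\alpha_t\cdot(|s_t|-|u|+1)$ occurrences of $u$, i.e. essentially $\alpha_t$ per cell of $s_t$, by the hypothesis $d_{w_t(s_t)}(u)\geq\alpha_t$. Summing over all good segments intersecting $c_{[-n,n]}$ and dividing by $2n+1$, we obtain, after letting $n\to\infty$,
\[
d_{\A^t(c)}(u)\ \geq\ \alpha_t\cdot\bigl(1-\eta_t\bigr)\ -\ o(1),
\]
where the $o(1)$ (in $n$, vanishing after the limsup) accounts for boundary effects at the endpoints of segments and the at most finitely many segments straddling $\pm n$; here it is important that $|u|$ is fixed while good segments have size $\geq i\to\infty$, so the "edge loss" of $|u|-1$ cells per segment is negligible compared to segment length for large $t$. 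Hence $d_{\A^t(c)}(u)\geq\alpha_t(1-\eta_t)-\epsilon_t$ with $\eta_t,\epsilon_t\to 0$. If $\alpha_t\not\to 0$, then $\limsup_t d_{\A^t(c)}(u)>0$, so by Lemma~\ref{lem:generic} we conclude $u\in\Lmu(\A)$, proving item~1. For item~2, note that $\alpha_t(1-\eta_t)-\epsilon_t$ has the same Cesaro behavior as $\alpha_t$ since $\eta_t,\epsilon_t\to 0$ (Cesaro means of sequences tending to $0$ tend to $0$), so if $\alpha_t$ does not go to $0$ in Cesaro mean, neither does $d_{\A^t(c)}(u)$, and Lemma~\ref{lem:cesarogeneric} gives $u\in C_\mu(\A)$.

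The main obstacle is making the bound "density of cells outside good segments is $\eta_t\to 0$" fully rigorous, since Proposition~\ref{prop:outwellseg} is stated for well-sized segments, not good ones; one must check that the extra cells lost (those in a well-sized segment freshly merged from a non-well-sized one at the last merging time $t_i\leq t$) also have vanishing density, which follows from the merging analysis already carried out (these cells come from non-well-sized, hence by construction too-large, segments at time $t_i$, whose density is handled in Lemma~\ref{lem:toobigseg}). The only other point requiring care is the uniform handling of the $n\to\infty$ limit defining the density: one should observe that the contribution of the (at most two) segments meeting the boundary $\{-n,n\}$ is $O(K_i/n)\to 0$ as $n\to\infty$ for fixed $t$, so it does not affect the limsup. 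Everything else is bookkeeping of the form "finitely many bounded-size defects per segment of unbounded size", which is routine.
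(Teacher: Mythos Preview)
Your argument is correct and follows essentially the same route as the paper: use Proposition~\ref{prop:outwellseg} (plus the observation that well-sized-but-not-good segments at time $t_{i+1}$ come from non-well-sized ones at time $t_i$) to show the density of cells outside good segments tends to $0$, lower-bound $d_{\A^t(c)}(u)$ by $(1-\eta_t)\alpha_t$, and conclude via Lemmas~\ref{lem:generic} and~\ref{lem:cesarogeneric}. You are in fact more explicit than the paper about the edge corrections and about the passage from well-sized to good segments, which the paper dispatches in a single parenthetical remark.
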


\begin{proof}
  From Proposition~\ref{prop:outwellseg}, for any $\epsilon>0$ and for
  any large enough $t$, the density of cells outside good segments is
  at most $\epsilon$ (because the density at step $t_{i+1}$ of
  well-sized segments which are not good is less than the density at
  step $t_i$ of segments which are not well-sized). Therefore we have
  from the hypothesis:
  \[d_{{\A}^t(c)}(u)\geq (1-\epsilon)\cdot\alpha_t\]

  Lemma~\ref{lem:generic} and~\ref{lem:cesarogeneric} then allow to
  conclude.
\end{proof}

\begin{claim}
$L_{w'}\subseteq C_{\mu}(\A)$
\end{claim}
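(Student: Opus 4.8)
The plan is to apply the second part of Lemma~\ref{lem:segmentsdothejob}: it suffices to produce a sequence $(\alpha_t)$ with $d_{w_t(s_t)}(u)\ge\alpha_t$ for every good segment $s_t$ at time $t$, and such that $\alpha_t$ does \emph{not} go to $0$ in Cesaro mean. Fix $u\in L_{w'}$; since $d_{w'_i}(u)\not\to 0$ there are $\delta>0$ and an infinite set $I\subseteq\N$ of rounds with $d_{w'_i}(u)\ge\delta$ for all $i\in I$ (here $w'$ is the normalised sequence actually used in the construction, which has the same persistent language).

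First I would read off, from Remark~\ref{rem:struct_seg}, the \emph{display window} $V_i:=[\titrois,t_{i+1,1}]$: for $t\in V_i$, every good segment of size $l$ has content $v_0v_1v_4$ (up to one extra block of size $O(\sqrt i)$, a returning writing head or the length signal), with $|v_0|,|v_4|=O(\sqrt i)$ and $v_1$ a prefix of length $\ge l-O(\sqrt i)$ of the periodic word of period ``$w'_i$ followed by a delimiter''. Since the normalisation (Lemma~\ref{lem:incrseq}) gives $|w'_i|\le\sqrt i$ with $|w'_i|\to\infty$, and a good segment satisfies $l\ge i$, the block $v_1$ contains $\Omega(\sqrt i)$ full copies of $w'_i$; counting only the occurrences of $u$ inside a single copy gives, uniformly over the good segments active in $V_i$,
\[
d_{w_t(s)}(u)\ \ge\ d_{w'_i}(u)\,\frac{|w'_i|-|u|}{|w'_i|+1}\Bigl(1-\tfrac{O(\sqrt i)}{l}\Bigr)\ \ge\ \tfrac12\,d_{w'_i}(u)
\]
for $i$ large enough. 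So I set $\alpha_t:=\tfrac12 d_{w'_i}(u)$ for $t\in V_i$ and $\alpha_t:=0$ otherwise; this meets the first requirement.

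For the Cesaro condition I would note that the windows $V_i$ are pairwise disjoint, that $|V_i|\ge\tfrac12(t_{i+1}-t_i)(1-o(1))$, and that the complement of $\bigcup_i V_i$ inside $[0,t_N]$ has total length $o(t_N)$ — the only gaps are the writing phases $[t_{i+1,1},t_{i+1,3}]$, of length $O(|w_{i+1}|K_{i+1})=O(i^{3.5})=o(t_{i+1}-t_i)$ since $t_{i+1}-t_i\sim K^{\sqrt i}/\sqrt i$. Hence $\tfrac1T\sum_{t\le T}\alpha_t\ge\tfrac{\delta}{2T}\sum_{i\in I,\,V_i\subseteq[0,T]}|V_i|-o(1)$, and the claim reduces to showing that, for infinitely many $T$, the windows $V_i$ with $i\in I$ cover a constant fraction of $[0,T]$.

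This reduction is the main obstacle. Because $t_i=\lceil K^{\sqrt i}\rceil$ grows \emph{sub-geometrically}, a single round occupies only a $\Theta(1/\sqrt i)$ fraction of the elapsed time, so one good window cannot suffice: what is needed is, infinitely often, a block of $\Omega(\sqrt i)$ \emph{consecutive} good rounds. Now the normalising step of Lemma~\ref{lem:incrseq} reproduces each word of the original growing computable sequence over a maximal block of consecutive rounds, so $I$ is itself a union of such blocks; and the slow-down function $\lambda$ of that lemma may be taken as slow as we wish — e.g.\ $\lambda(i)=i$, which still satisfies $\lambda(i+1)<2\lambda(i-1)$ and $\lambda(i)\le(T(i)-i)/i$ for $T(i)=\tfrac14(t_{i+1}-t_i)$ — forcing the block that reproduces the $k$-th original word to span rounds $[a_k,b_k]$ with $b_k-a_k$ of the order of $T_0(k+1)-T_0(k)\ge T_0(k)\sim a_k$, hence $b_k-a_k\gg\sqrt{b_k}$ ($T_0$ being the running time of the machine fed to Lemma~\ref{lem:incrseq}). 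Picking $k$ so that $[a_k,b_k]\subseteq I$ and setting $T=t_{b_k+1,1}$: since $t_j=K^{\sqrt j}$ and $b_k-a_k\gg\sqrt{b_k}$ we get $t_{a_k}=o(t_{b_k+1})$, so the windows $V_j$, $a_k\le j\le b_k$ — on each of which $\alpha_t\ge\tfrac\delta2$ — together cover a constant fraction of $T\sim t_{b_k+1}$. Hence $\tfrac1T\sum_{t\le T}\alpha_t\ge c\,\delta>0$ along infinitely many $T$, and Lemma~\ref{lem:segmentsdothejob}(2) yields $u\in C_{\mu}(\A)$, which proves the claim.
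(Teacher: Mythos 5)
Your proposal follows the paper's own strategy --- the same description of good-segment contents via Remark~\ref{rem:struct_seg}, the same choice of $\alpha_t$ supported on the display windows, and the same appeal to the Cesaro part of Lemma~\ref{lem:segmentsdothejob} --- but it diverges at the final Cesaro estimate, and the divergence is substantive. The paper evaluates the Cesaro mean at $T=t_{i+1}-1$ for a \emph{single} good round $i$ and concludes with $(t_{i+1}-1-t_{i,3})\cdot\frac{1}{4}\epsilon/(t_{i+1}-1)\geq\frac{1}{16}\epsilon$, which requires $t_{i+1}-1-t_{i,3}\geq\frac14(t_{i+1}-1)$; since $t_i=\lceil K^{\sqrt i}\rceil$ gives $t_{i+1}-t_i=O\bigl(t_{i+1}/\sqrt i\bigr)$, that ratio in fact tends to $0$, so one good round cannot by itself carry the Cesaro mean. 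You correctly identified this, and your repair --- exploiting the fact that the normalised sequence produced by Lemma~\ref{lem:incrseq} repeats each original word over a block of consecutive rounds whose length is comparable to its position (via $T_0(j+1)\geq 2T_0(j)$ and a slowly growing $\lambda$), so that the display windows of a single block of good rounds cover a fraction of $[0,T]$ tending to $1$ --- is the right way to close the argument. What your route buys is a bound that actually holds; what it costs is that the claim no longer follows from the \emph{statement} of Lemma~\ref{lem:incrseq} but from its proof: the block-length lower bound $b_k-a_k\geq c\,b_k$ is not part of that lemma's conclusion and should be added to it explicitly.

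One point to tighten. You need the long-block property for precisely those blocks lying in $I$, and since $I$ may a priori select any subfamily of blocks, you effectively need it for all sufficiently late blocks. Your justification bounds $a_k$ above by the cumulative running time of $\phi_0$ through input $j$ and $b_k+1$ below by the cumulative time through $j+1$, which is valid when the time cutoff $\lambda$ is the binding constraint in the simulation; but the machine of Lemma~\ref{lem:incrseq} also aborts on the space cutoff $S(i)$, and if space is the binding constraint for some inputs the relation between $a_k$ and the cumulative time is not immediate. This is fixable (for instance by padding $\phi_0$ so that its running time always dominates the relevant function of its space usage), but as written the block-length estimate is asserted rather than proved, and it is the load-bearing step of the whole repair.
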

\begin{proof}
  Let $u\in L_{w'}$ and $c$ a generic configuration.
  
  There exists  $\epsilon >0$ such that $\forall i_0\in\N, \exists
  i\geq i_0$ such that   $d_{w'_i}(u)>\epsilon$.  We will consider
  $t\in\N$ such that $\titrois\leq t\leq t_{i+1}-1$. For $i$ large
  enough, in every good segment $s_t$ at time $t$,  $v_0$ and
  $v_4$ are negligible in the description of
  Remark~\ref{rem:struct_seg}.
  Therefore \[d_{w_t(s_t)}(u)\geq \frac{1}{2}d_{(w'_i\ste)^{\omega}}(u)\geq \frac{1}{4}d_{w'_i}(u)\geq \frac{1}{4}\epsilon.\]
  So we put ${\alpha_t=\frac{1}{4}\epsilon}$ for $t$ verifying
  $\titrois\leq t\leq t_{i+1}-1$ ($i$ large enough) and ${\alpha_t=0}$ elsewhere.
   Summing at time $t_{i+1}-1$, we get:
  \begin{align*}
  \frac{1}{t_{i+1}-1}\sum\limits_{t=0}^{t_{i+1}-1}\alpha_t\geq & \frac{1}{t_{i+1}-1}\sum\limits_{t=\titrois}^{t_{i+1}-1}\alpha_t\\
  \geq &\frac{1}{t_{i+1}-1}(t_{i+1}-1-\titrois)\frac{1}{4}\epsilon\\
  \geq &\frac{1}{16}\epsilon
  \end{align*}

  We conclude thanks to Lemma~\ref{lem:segmentsdothejob}.
\end{proof}

\begin{claim}
$L_w\subseteq \Lmu(\A)$
\end{claim}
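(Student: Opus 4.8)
The plan is to mimic the proof of the previous claim ($L_{w'}\subseteq C_\mu(\A)$) but now working with a time window where the segments are filled with copies of $w_i$ rather than $w'_i$, and drawing the conclusion for $\Lmu$ rather than $C_\mu$. Concretely, fix $u\in L_w$ and a generic configuration $c$. By definition of $L_w$ there is some $\epsilon>0$ and infinitely many indices $i$ with $d_{w_i}(u)>\epsilon$. For each such large $i$, I want to identify a time $t$ (depending on $i$) at which every good segment $s_t$ is essentially filled with periodic repetitions of $w_i$, so that $d_{w_t(s_t)}(u)$ is bounded below by a constant multiple of $\epsilon$.

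The natural choice is a time $t$ with $\tideux\leq t < \titrois$ — no, more safely a time strictly between $\tiun$ and $\tideux$, say $t = \tiun + \frac12(|w_i|)K_i$ rounded appropriately, or simply any $t$ in the interval during which the $(w_i\ste)^\omega$-pattern has been written over the whole good segment but the second writing head (for $w'_i$) has not yet started erasing it; Remark~\ref{rem:struct_seg} guarantees that at such times $w_t(s_t)=v_0v_1v_2v_3v_4$ with $|v_0|,|v_2|,|v_4|\leq\sqrt i$ negligible compared to $l\geq i$, and with $v_1,v_3$ consisting of periodic repetitions of $w_i$ (separated by $\ste$). Hence for $i$ large enough,
\[
d_{w_t(s_t)}(u)\ \geq\ \frac12\, d_{(w_i\ste)^\omega}(u)\ \geq\ \frac14\, d_{w_i}(u)\ >\ \frac14\epsilon .
\]
Then I set $\alpha_t=\frac14\epsilon$ for every such $t$ (for all large enough $i$ in the infinite set, and all $t$ in the corresponding window) and $\alpha_t=0$ elsewhere. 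Since there are infinitely many such $i$ and each contributes a nonempty window, $(\alpha_t)$ does not converge to $0$, so item (1) of Lemma~\ref{lem:segmentsdothejob} yields $u\in\Lmu(\A)$.

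The main subtlety — the analogue of the only real content in the previous claim — is making sure the window of times during which good segments are uniformly filled with $w_i$-copies is genuinely nonempty and can be pinned down by a computable instant, using the timing budget established in Section~\ref{sub:comp_wri}: the writing of $(w_i\ste)^\omega$ over a well-sized segment takes at most $(|w_i|+1)K_i$ steps after $\tiun$, which is $o(t_{i+1}-t_i)$, and the $w'_i$-overwriting only begins at $t_i+\frac12(t_{i+1}-t_i)+(|w_i|+1)K_i = \titrois - |w'_i|K_i$; the earlier remark on the global time budget shows this whole schedule fits inside $[t_i,t_{i+1})$ for large $i$. So the window is nonempty for all large $i$, and the rest is the same density bookkeeping as before, invoking Proposition~\ref{prop:outwellseg} through Lemma~\ref{lem:segmentsdothejob}. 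I don't expect any genuinely new obstacle here beyond carefully choosing $t$ inside the correct sub-interval of $[t_i,t_{i+1})$.
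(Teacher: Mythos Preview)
Your proposal is correct and follows essentially the same approach as the paper: pick $u\in L_w$, get $\epsilon>0$ with $d_{w_i}(u)>\epsilon$ for infinitely many $i$, lower-bound the density of $u$ in every good segment at a suitable time by $\tfrac14\epsilon$, and invoke Lemma~\ref{lem:segmentsdothejob}(1). The paper is slightly more direct in that it simply picks the single time $t=\tideux$ (when all good segments are already filled with copies of $w_i\ste$) rather than discussing a window; your aside about ``strictly between $\tiun$ and $\tideux$'' is not quite the right interval (the $w_i$-writing is still in progress then), but you immediately correct yourself to the interval where the $w_i$-pattern is fully written and not yet overwritten, which contains $\tideux$.
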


\begin{proof}
  Similarly as in the previous proof, there exists $\epsilon>0$ such
  that for arbitrarily large $i$ $d_{w_i}(u)>\epsilon$. Then we can lower-bound
  the density of $u$ in any good segment at time $\tideux$ by
  ${\frac{1}{4}\epsilon}$, and we conclude thanks to Lemma~\ref{lem:segmentsdothejob}.
\end{proof}

\begin{claim}
$\Lmu(\A)\subseteq L_{w'}\cup L_{w}$
\end{claim}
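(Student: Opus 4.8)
The plan is to prove the contrapositive: assume $u\notin L_w\cup L_{w'}$, i.e.\ $d_{w_i}(u)\to 0$ and $d_{w'_i}(u)\to 0$ as $i\to\infty$, and deduce $u\notin\Lmu(\A)$. Fix a generic configuration $c$. By Lemma~\ref{lem:generic} it suffices to show $d_{\A^t(c)}(u)\xrightarrow[t\to\infty]{}0$. Given $t$, let $i=i(t)$ be the index with $t_i\leq t<t_{i+1}$, so $i(t)\to\infty$ as $t\to\infty$. By Proposition~\ref{prop:outwellseg}, together with the observation already used in the proof of Lemma~\ref{lem:segmentsdothejob} (the density at step $t_{i+1}$ of well-sized segments that are not good is bounded by the density at step $t_i$ of segments that are not well-sized), the density of cells not inside a good segment at time $t$ tends to $0$; moreover, since good segments have length at least $i$, the occurrences of $u$ that start inside a good segment but reach outside it contribute density at most $|u|/i$. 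Hence the task reduces to bounding $d_{w_t(s)}(u)$ uniformly over good segments $s$ at time $t$ by a quantity vanishing as $i\to\infty$.

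For that I will invoke Remark~\ref{rem:struct_seg}: for $t_i\leq t\leq t_{i+1}$ every good segment $s$ has content $w_t(s)=v_0v_1v_2v_3v_4$ with $|v_0|,|v_2|,|v_4|\leq\sqrt i$, with $v_1$ and $v_3$ factors of $(z\ste)^\omega$ for some $z\in\{w_i,w'_i,w'_{i-1}\}$, and $|s|\geq i$. I will classify the occurrences of $u$ in $w_t(s)$ into three kinds. First, those lying entirely inside $v_0$, $v_2$ or $v_4$: at most $3\sqrt i$ of them, hence density at most $3/\sqrt i$. Second, those straddling one of the at most four junctions between consecutive blocks: at most $4|u|$ of them, hence density at most $4|u|/i$. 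Third, those lying entirely inside $v_1$ or $v_3$, hence inside some $(z\ste)^\omega$: among these, the ones contained in a single copy of $z$ number at most $|z|_u$ per period of length $|z|+1$, contributing density at most $d_z(u)$, while the ones overlapping a delimiter $\ste$ number at most $|u|$ per period, contributing density at most $|u|/(|z|+1)$. The third kind also disposes of words $u$ not over $Q_0$: a letter outside $Q_0\cup\{\ste\}$ cannot appear in $v_1$ or $v_3$, and if $u$ contains $\ste$ then every occurrence of $u$ in $v_1$ or $v_3$ overlaps a delimiter.

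Adding the three contributions gives, for every good segment $s$ at a time $t$ with $t_i\leq t<t_{i+1}$,
\[ d_{w_t(s)}(u)\ \leq\ \max\bigl\{d_{w_i}(u),\,d_{w'_i}(u),\,d_{w'_{i-1}}(u)\bigr\}+\frac{3}{\sqrt i}+\frac{4|u|}{i}+\frac{|u|}{\min\{|w_i|,|w'_i|,|w'_{i-1}|\}}. \]
Since $(w_i)$ and $(w'_i)$ are growing sequences their lengths tend to infinity, and since $u\notin L_w\cup L_{w'}$ the three density terms tend to $0$; hence the right-hand side tends to $0$ as $i\to\infty$. Together with the vanishing density of cells outside good segments this yields $d_{\A^t(c)}(u)\to 0$, so $u\notin\Lmu(\A)$ by Lemma~\ref{lem:generic}, which is exactly $\Lmu(\A)\subseteq L_w\cup L_{w'}$.

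I expect the only real effort to be the bookkeeping of the second paragraph: one must be sure that Remark~\ref{rem:struct_seg} indeed describes the content of every good segment at every intermediate time $t\in[t_i,t_{i+1}]$, and then check that every ``parasitic'' family of occurrences --- those meeting the irregular blocks $v_0,v_2,v_4$, and those crossing a $\ste$ inside a periodic block --- has density $o(1)$ in $i$, uniformly over good segments. There is no genuinely hard step; the density machinery established by Lemmas~\ref{lem:reliable}--\ref{lem:toobigseg} and Lemma~\ref{lem:segmentsdothejob} carries the load.
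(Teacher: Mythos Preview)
Your proof is correct and follows the same route as the paper's: contrapositive, split occurrences into those outside good segments (handled by Proposition~\ref{prop:outwellseg}) and those inside, then use the structural description of Remark~\ref{rem:struct_seg} to bound the latter by $d_{w_i}(u)+d_{w'_i}(u)+o(1)$, and conclude via Lemma~\ref{lem:generic}. Your bookkeeping is actually more careful than the paper's terse argument --- you explicitly track the $w'_{i-1}$ contribution, the $\ste$-crossing occurrences, and the block-boundary occurrences, all of which the paper absorbs silently into its $\epsilon_i$.
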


\begin{proof}
  Take $u\notin  L_{w'}\cup L_{w}$, for any $t\in\N$, the density of
  $u$ in $\A^t(c)$ is due to occurences of $u$ outside good
  segments (density $d_t(u)$) and occurences inside good
  segments (density $d'_t(u)$). Since the density of cells outside
  good segments tends to $0$ (Proposition~\ref{prop:outwellseg}), $d_t(u)\to_{t\to\infty} 0$.

  Inside good segments, the density of $u$ is less than the sum of densities of $u$ in $w_i$, in $w'_i$ and in cells corresponding to $v_0$, $v_2$ or $v_4$ in the description of these segments made in Remark~\ref{rem:struct_seg}, that is $\forall t_i\leq t <t_{i+1}, d'_t(u)\leq d_{w_i}(u)+d_{w'_i}(u)+\epsilon_i$ with $\epsilon_i\to_{i\to\infty} 0$. This proves the claim, using Lemma~\ref{lem:generic}.
\end{proof}

\begin{claim}
$C_{\mu}(\A)\subseteq L_{w'}$
\end{claim}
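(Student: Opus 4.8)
The plan is to bound, for each time $t$ with $t_i\le t<t_{i+1}$, the Cesàro-averaged density contribution of a word $u\notin L_{w'}$ and show it vanishes. As in the proof of the claim $\Lmu(\A)\subseteq L_{w'}\cup L_w$, I would split $d_{\A^t(c)}(u)$ into occurrences outside good segments, which have density $\to 0$ by Proposition~\ref{prop:outwellseg}, and occurrences inside good segments. So the real work is to bound the time-average of the in-segment density $d'_t(u)$ over $t\le t_{i+1}-1$ and show it goes to $0$ as $i\to\infty$, knowing only that $d_{w'_i}(u)\to 0$ (but $d_{w_i}(u)$ need \emph{not} go to $0$).

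The key point is the partition of the interval $[t_i,t_{i+1})$ into the sub-phases delimited by $\tiun$, $\tideux$, $\titrois$ from Remark~\ref{rem:struct_seg}. Before $\tiun=t_i+\frac12(t_{i+1}-t_i)$ the content $v_1$ of a good segment consists of repetitions of $w'_{i-1}$, so there $d'_t(u)\le d_{w'_{i-1}}(u)+\epsilon_i$. Between $\tiun$ and $\titrois$ the segment is being overwritten: during $[\tiun,\tideux)$ it is being filled with copies of $w_i\ste$ (mixed with leftover $w'_{i-1}$), and during $[\tideux,\titrois)$ with copies of $w'_i\ste$; after $\titrois$ it is purely repetitions of $w'_i$. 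Thus the only sub-phase where $w_i$ genuinely pollutes the density is $[\tiun,\titrois)$, which has length at most $\frac12(t_{i+1}-t_i)+(|w_i|+|w'_i|+1)K_i$. Since $|w_i|,|w'_i|\le\sqrt i$ and $K_i=i^3$, this ``bad window'' has length $\le\frac12(t_{i+1}-t_i)+O(i^{3}\sqrt i)=\bigl(\tfrac12+o(1)\bigr)(t_{i+1}-t_i)$, because $t_{i+1}-t_i$ grows like $K^{\sqrt{i+1}}-K^{\sqrt i}$, which dominates any polynomial in $i$.

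Putting this together: at time $T=t_{i+1}-1$,
\begin{align*}
\frac1T\sum_{t=0}^{T}d'_t(u)
&\le \frac1T\sum_{j\le i}\Bigl[(\titrois-t_j)\cdot 1 + (t_{j+1}-\titrois[j])\cdot\bigl(d_{w'_j}(u)+\epsilon_j\bigr)\Bigr]\\
&\le \frac1T\sum_{j\le i}(t_{j+1}-t_j)\Bigl[\bigl(\tfrac12+o(1)\bigr) + d_{w'_j}(u)+\epsilon_j\Bigr].
\end{align*}
This is \emph{not} good enough as stated — the $\tfrac12$ does not vanish — so the genuine argument must be sharper: inside the window $[\tiun,\tideux)$ the segment content is a mix of (a prefix of repetitions of) $w_i\ste$ and a suffix of repetitions of $w'_{i-1}$, and the density of $u$ in $(w_i\ste)^\omega$ is itself bounded by $d_{w_i}(u)+o(1)$; but $d_{w_i}(u)$ can be bounded away from $0$ only on a sparse set of indices, whereas here it appears at \emph{every} index. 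The resolution is that it is not $d_{w_i}(u)$ that matters but the fact that copies of $w_i$ are overwritten by $w'_i$ in the \emph{same} super-step: for $u\notin L_w\cup L_{w'}$ this is exactly the previous claim, but for $u\in L_w\setminus L_{w'}$ we must exploit that $w_i$ is present in a good segment only during the window $[\tiun,\titrois)$ which has length at most $\bigl(\tfrac12+o(1)\bigr)(t_{i+1}-t_i)$, while for the \emph{whole} remaining half-plus of $[t_i,t_{i+1})$ the content is repetitions of $w'_{i-1}$ (before $\tiun$) or $w'_i$ (after $\titrois$), giving density $\le d_{w'_{i-1}}(u)+d_{w'_i}(u)+\epsilon_i$. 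Hence

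The hard part, and the one to get right, is therefore the Cesàro bookkeeping showing that the ``fraction of time $w_i$ is visible'' is bounded by $\tfrac12+o(1)$ per super-step, \emph{and} that the remaining $\tfrac12$-worth of time contributes only $d_{w'_j}(u)+\epsilon_j\to 0$; one must also check that the prefix $\sum_{j\le i_0}$ of early super-steps contributes $o(1)$ to the average at time $T=t_{i+1}-1$ (immediate since $t_{i_0+1}/t_{i+1}\to 0$ as $i\to\infty$). Actually the cleanest route avoids the $\tfrac12$ problem entirely: note that for $u\notin L_w\cup L_{w'}$ we already have $d'_t(u)\to 0$ by the previous claim, so we only need to handle $u\in L_w\setminus L_{w'}$; for such $u$, the contribution over $[t_i,\tideux)$ is at most $(\tideux - t_i)\cdot\bigl(d_{w'_{i-1}}(u)+d_{w_i}(u)+\epsilon_i\bigr)$ and over $[\tideux,t_{i+1})$ at most $d_{w'_i}(u)+\epsilon_i$, and since $\tideux - t_i = \tfrac12(t_{i+1}-t_i)+|w_i|K_i = \bigl(\tfrac12+o(1)\bigr)(t_{i+1}-t_i)$, summing and dividing by $T=t_{i+1}-1$ gives
\[
\frac1T\sum_{t=0}^{T}d'_t(u)\ \le\ \bigl(\tfrac12+o(1)\bigr)\cdot\limsup_i d_{w_i}(u)\ +\ o(1),
\]
which still fails unless $d_{w_i}(u)\to 0$. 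So the correct and essential observation must be that copies of $w_i$ written during $[\tiun,\tideux)$ are \emph{replaced} by copies of $w'_i$ during $[\tideux,\titrois)$ and never contribute past $\titrois$; combining with the fact that between consecutive mergings the only persistent content is $w'$-content, the Cesàro average of $d'_t(u)$ is dominated by $\limsup_j\bigl(d_{w'_j}(u)+d_{w'_{j-1}}(u)\bigr)=0$. I expect writing this last step carefully — quantifying ``$w_i$ appears only transiently, $w'$-content is what persists, and transient content occupies a bounded fraction of super-step time that nonetheless must be absorbed via the overwriting'' — to be the main obstacle; everything else (density outside good segments, the $\epsilon_i\to0$ error terms, the vanishing of the finite prefix) is routine via Proposition~\ref{prop:outwellseg} and Lemma~\ref{lem:cesarogeneric}.
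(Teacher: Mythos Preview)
Your overall plan is exactly the paper's: reduce to $u\in L_w\setminus L_{w'}$ via $C_\mu(\A)\subseteq L_\mu(\A)$, split densities into ``outside good segments'' (handled by Proposition~\ref{prop:outwellseg}) and ``inside good segments'', and within each super-step $[t_i,t_{i+1})$ separate the phases before $\tiun$, between $\tiun$ and $\titrois$, and after $\titrois$. The only problem is an arithmetic slip that creates your entire ``$\tfrac12$ obstacle''.

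You write that the bad window $[\tiun,\titrois)$ has length at most $\tfrac12(t_{i+1}-t_i)+(|w_i|+|w'_i|+1)K_i$. That quantity is $\titrois-t_i$, not $\titrois-\tiun$. Since $\tiun=t_i+\tfrac12(t_{i+1}-t_i)$ and $\titrois=t_i+\tfrac12(t_{i+1}-t_i)+(|w_i|+|w'_i|+1)K_i$, the actual length is
\[
\titrois-\tiun=(|w_i|+|w'_i|+1)K_i\le(2\sqrt{i}+1)i^3,
\]
a polynomial in $i$, hence $o(t_{i+1}-t_i)$ and $o(t_i)$ because $t_i\sim K^{\sqrt{i}}$. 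There is no $\tfrac12(t_{i+1}-t_i)$ term in the bad window at all.

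With this correction the bookkeeping is immediate (and is what the paper does): on $[t_i,\tiun]\cup[\titrois,t_{i+1})$ the in-segment content is repetitions of $w'_{i-1}$ or $w'_i$, so $d_{\A^t(c)}(u)\le\epsilon_i$ with $\epsilon_i\to 0$; on $[\tiun,\titrois)$ bound the density crudely by $1$. Summing,
\[
\sum_{t=t_i}^{t_{i+1}-1}d_{\A^t(c)}(u)\le 2(t_{i+1}-t_i)\epsilon_i+(2\sqrt{i}+1)K_i,
\]
and therefore for $t_i\le T<t_{i+1}$,
\[
\frac1T\sum_{\tau=0}^{T}d_{\A^\tau(c)}(u)\le\frac1T\sum_{j=0}^{i}\Bigl(2(t_{j+1}-t_j)\epsilon_j+(2\sqrt{j}+1)K_j\Bigr).
\]
The first sum is a weighted Cesàro mean of $\epsilon_j\to 0$ and tends to $0$; the second is $O(i^{9/2})/T\le O(i^{9/2})/K^{\sqrt{i}}\to 0$. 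Lemma~\ref{lem:cesarogeneric} finishes. All of your later attempts to ``absorb the $\tfrac12$'' are unnecessary once the interval length is computed correctly.
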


\begin{proof}
  As $C_{\mu}(\A)\subseteq \Lmu(\A)$, it is enough to prove that $\forall u\in L_{w}\setminus L_{w'}, u\notin C_{\mu}(\A)$. Take such an $u$ and a generic configuration $c$.

  For $i\in\N$, recall $\tiun=\frac{1}{2}(t_{i+1}-t_i)$ and $\titrois
  =\frac{1}{2}(t_{i+1}-t_i)+(|w_i|+|w'_i|+1)K_i$. For any $t\in\N$
  such that $t_i\leq t\leq \tiun$ or $\titrois\leq t<t_{i+1}$, it is
  possible to bound $d_{\A^{t}(c)}(u)$ by some $\epsilon_i$ with
  $\epsilon_i\to_{i\to\infty} 0$. Indeed, for such $t$, in every good
  segment, in the description of Remark~\ref{rem:struct_seg}, $v_1$
  and $v_3$ contain copies of $w'_i$ or $w'_{i-1}$; and cells outside
  good segment have a density going to zero (from
  Proposition~\ref{prop:outwellseg}). To simplify the proof, let's choose
  $\epsilon_i$ such that it is also a bound on the density of cells
  outside good segments for any $t$ between ${t_i}$ and $t_{i+1}$.

  Then for every good segment at time $t_i$:
  \[
  \sum\limits_{t=t_i}^{t_{i+1}-1}d_{w_t{s}}(u)  =
  \sum\limits_{t=t_i}^{\tiun}d_{w_t{s}}(u)+\sum\limits_{t=\tiun}^{\titrois}d_{w_t{s}}(u)+\sum\limits_{t=\titrois}^{t_{i+1}-1}d_{w_t{s}}(u)\]
  We deduce that:
\[
\begin{array}{rcl}
  \sum\limits_{t=t_i}^{t_{i+1}-1}d_{\A^t(c)}(u) &  \leq  & \frac{1}{2}(t_{i+1}-t_i)\epsilon_i+(|w_i|+|w'_i|+1)K_i+\frac{1}{2}(t_{i+1}-t_i)\epsilon_i+(t_{i+1}-t_i)\epsilon_i\\
  \sum\limits_{t=t_i}^{t_{i+1}-1}d_{\A^t(c)}(u) &  \leq  & 2(t_{i+1}-t_i)\epsilon_i + (2\sqrt{i}+1)K_i
  \end{array}\]

  Now take $t\in\N$, there exists $i\in\N$ such that $t_i\leq t< t_{i+1}$, hence
  \[\begin{array}{rcl}
  \frac{1}{t}\sum\limits_{\tau=0}^{t}d_{\A^{\tau}(c)}(u) &  \leq &  \frac{1}{t}\sum\limits_{j=0}^{i}\sum\limits_{\tau=t_j}^{t_{j+1}-1}d_{\A^{\tau}(c)}(u)\\
  &  \leq & \frac{1}{t}\sum\limits_{j=0}^{i} \left(2(t_{j+1}-t_j)\epsilon_j + (2\sqrt{j}+1)K_j\right)\to_{t\to\infty} 0
  \end{array}\]

\end{proof}

\section{Building complex $\mu$-limit sets}
\label{sec:complex}
\subsection{Complexity upper-bounds}
\label{sec:complexupper}
Before giving examples of complex $\mu$-limit sets, let's establish some upper bounds.

A word $w$ is a \emph{wall} for a CA $F$ if for any $c,c'\in[w]_0$ we have:
\begin{enumerate}
\item if $c_z=c'_z$ for every $z<0$ then $F^t(c)_z=F^t(c')_z$ for every $z<0$ and any $t\geq 1$
\item if $c_z=c'_z$ for every $z\geq|w|$ then $F^t(c)_z=F^t(c')_z$ for every $z\geq|w|$ and any $t\geq 1$
\end{enumerate}

It is well-known that a one-dimensional CA $F$ has equicontinuous points if and only if it has walls \cite{Kurka-1997}.

The following proposition is a generalization of theorem 1 of \cite{bpt-2006} to a broader class of measures.
\begin{prop}
  \label{prop:ergowall}
  Let $\mu$ be a $\sigma$-ergodic measure with full support and $F$ a CA admitting $w$ as a wall. Then $L_\mu(F)$ is exactly the set of words occuring in the (temporal) period of the orbit of some (spatially) periodic configuration of period $wu$ for some $u$, formally:
  \[v\in L_\mu(F) \iff \exists t,p\geq 1,v_1,v_2,u\text{ such that }\left\{
  \begin{array}{l}
    F^t\bigl({}^\omega (wu)^\omega\bigr)={}^\omega (v_1vv_2)^\omega\text{ and,}\\F^p\bigl({}^\omega (v_1vv_2)^\omega\bigr)={}^\omega (v_1vv_2)^\omega
  \end{array}\right.
\]
\end{prop}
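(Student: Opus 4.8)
The plan is to prove the two implications of the stated equivalence separately, the bridge between a $\mu$‑random orbit and the orbits of spatially periodic configurations being the \emph{permanence} of the wall: since no rule ever produces a state letting information cross an occurrence of $w$, an occurrence of $w$ stays an information barrier forever. First I would record the following shielding fact: if a configuration $c$ has $w$ at positions $\alpha$ and $\beta$ with $\alpha+|w|\le\beta$, then for every $s\ge 0$ the word $F^s(c)_{[\alpha+|w|,\beta-1]}$ depends only on the block $c_{[\alpha,\beta+|w|-1]}$ (and on $s$); write $\Psi_s$ for the resulting map, which is independent of the ambient configuration. The point is that, applied to a block of the form $wxw$, the map $\Psi_s$ also computes what appears, at matching offsets, strictly between two consecutive copies of $w$ in the spatially periodic configuration $\,{}^{\omega}(wx)^{\omega}$; and inside a long run $(wx)^N$ sitting in $c$ the same $\Psi_s$ governs every cell lying between the first and last copies of $w$ of the run (closer inspection, using two‑period windows, recovers even the cells on the inner $w$‑copies). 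Hence, deep inside such a run, $F^s(c)$ coincides cell‑by‑cell with $F^s\bigl({}^{\omega}(wx)^{\omega}\bigr)$.

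\textbf{($\Leftarrow$).} Assume $v$ occurs in ${}^{\omega}(v_1vv_2)^{\omega}=F^t\bigl({}^{\omega}(wu)^{\omega}\bigr)$ and $F^p\bigl({}^{\omega}(v_1vv_2)^{\omega}\bigr)={}^{\omega}(v_1vv_2)^{\omega}$. Then $F^{t+kp}\bigl({}^{\omega}(wu)^{\omega}\bigr)={}^{\omega}(v_1vv_2)^{\omega}$ for every $k\ge 0$, so $v$ occurs at a fixed position $z_0$ in all these images, and $z_0$ can be chosen in $[|w|,(N-1)|wu|-1]$ once $N$ is large enough in terms of $|w|,|wu|,|v_1vv_2|,|v|$. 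Fix such an $N$. For any $c\in[(wu)^N]_0$ the run $(wu)^N$ makes $F^s(c)$ agree with $F^s\bigl({}^{\omega}(wu)^{\omega}\bigr)$ on $[|w|,(N-1)|wu|-1]$ by the shielding fact, so $F^s(c)_{[z_0,z_0+|v|-1]}=v$ for every $s=t+kp$. Therefore $[(wu)^N]_0\subseteq(F^s)^{-1}\bigl([v]_{z_0}\bigr)$, whence $F^s\mu\bigl([v]_{z_0}\bigr)\ge\mu\bigl([(wu)^N]\bigr)>0$ by full support; since $F^s\mu$ is $\sigma$‑invariant (as $\mu$ is and $F$ commutes with $\sigma$), also $F^s\mu\bigl([v]_0\bigr)\ge\mu\bigl([(wu)^N]\bigr)$. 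Letting $s=t+kp\to\infty$ shows $\lim_s F^s\mu\bigl([v]_0\bigr)\ne 0$, i.e. $v\in L_{\mu}(F)$.

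\textbf{($\Rightarrow$).} Assume $v\in L_{\mu}(F)$; fix $\epsilon>0$ with $\limsup_tF^t\mu\bigl([v]_0\bigr)\ge\epsilon$, so $F^t\mu\bigl([v]_0\bigr)\ge\epsilon$ for infinitely many $t$. By Birkhoff's ergodic theorem applied to the indicator of $[w]_0$ (here $\sigma$‑ergodicity and full support enter), $\mu$‑almost every configuration contains infinitely many occurrences of $w$ on each side of the origin; hence there is $L_0$ such that, with $\mu$‑probability $>1-\epsilon$, $w$ occurs both in $[-L_0,-|w|]$ and in $[|v|,|v|+L_0]$. Choose $t$ among the infinitely many good times and also large enough that $t\ge|Q|^{|w|+|v|+2L_0}$. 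Since $\mu\bigl(\{c:F^t(c)_{[0,|v|-1]}=v\}\bigr)\ge\epsilon$, there is a $c$ in the intersection of that event with the wall event; pick wall positions $p\in[-L_0,-|w|]$ and $q\in[|v|,|v|+L_0]$ and set $x=c_{[p+|w|,q-1]}$, so $|x|\le|v|+2L_0$. Because $[0,|v|-1]\subseteq[p+|w|,q-1]$, the shielding fact gives that $F^t(c)_{[0,|v|-1]}=v$ equals the corresponding window of $\Psi_t(wxw)$, hence $v$ occurs in $F^t\bigl({}^{\omega}(wx)^{\omega}\bigr)$. The orbit of the spatially $|wx|$‑periodic configuration ${}^{\omega}(wx)^{\omega}$ lives in a set of at most $|Q|^{|wx|}\le|Q|^{|w|+|v|+2L_0}\le t$ configurations, so it is eventually periodic with pre‑period $<t$; thus $F^t\bigl({}^{\omega}(wx)^{\omega}\bigr)$ is a spatially periodic periodic point of $F$. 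Writing it as ${}^{\omega}(v_1vv_2)^{\omega}$ by splitting a suitable power of its period around an occurrence of $v$, and letting $p$ be its temporal period, gives exactly the required form with $u=x$.

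\textbf{Expected main difficulty.} The delicate part is the coupling of quantifiers in ($\Rightarrow$): one must produce a single time $t$ that is at once ``good'' (so that $F^t\mu([v]_0)$ is bounded below) and large enough to have passed the pre‑period of the orbit of ${}^{\omega}(wx)^{\omega}$, whose size is controlled only after bounding $|x|$, which in turn requires copies of $w$ at bounded distance on both sides of the window $[0,|v|-1]$ — an event of probability merely close to $1$, so $L_0$ must be chosen from $\epsilon$ before $t$ is chosen. The remaining work, routine but to be done carefully, is the shielding fact and the offset bookkeeping showing that $\Psi_s(wxw)$ is simultaneously what is read between two consecutive walls in ${}^{\omega}(wx)^{\omega}$ and, deep inside a long run, in the image of the random configuration; this also subsumes the argument of Theorem~1 of \cite{bpt-2006}, the special case $\mu=\mu_0$.
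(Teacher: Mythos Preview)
Your proof is correct and follows essentially the same route as the paper's: both directions hinge on the shielding property of walls, full support for $(\Leftarrow)$, and ergodicity to guarantee nearby walls with high probability for $(\Rightarrow)$. The only notable difference is how you land in the temporal period in $(\Rightarrow)$: the paper extracts, by pigeonhole on the finitely many possible wall positions (and implicitly on the finitely many possible blocks between them), an infinite subsequence of times all witnessing $v$ between the \emph{same} walls, and then uses eventual periodicity; you instead pick a \emph{single} time $t\geq|Q|^{|w|+|v|+2L_0}$ large enough to have already passed the pre-period bound. Your variant is arguably more direct (no double pigeonhole), at the cost of making the bound explicit; also, invoking Birkhoff is slightly heavier than the paper's bare use of the $0$--$1$ law for the $\sigma$-invariant set $\bigcup_i[w]_i$, but it yields the same conclusion.
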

\begin{proof}
  First, consider some word $v$ occuring in the period of the orbit of ${}^\omega(wu)^\omega$ as in the proposition. Then, for each $k\geq 0$, we have $[wuw]\subseteq F^{-t-kp}\bigl([v_1vv_2])$ because $w$ is a wall for $F$. Hence $F^{t+kp}\mu([v])\geq \mu([wuw])>0$ because $\mu$ has full support, which shows $v\in L_\mu(F)$.

  Suppose now that $v\in L_\mu(F)$. By definition there is $\epsilon>0$ and a sequence $(t_n)$ such that, for all $n$, ${F^{t_n}\mu([v])\geq\epsilon}$. Consider for any $k\geq 0$ the set:
  \[X_k=\bigcup_{-k\leq i\leq k}[w]_i\]
  The union ${X=\cup_{k\geq 0} X_k}$ has measure $1$ because $\mu$ is $\sigma$-ergodic, $X$ is $\sigma$-invariant, ${[w]_0\subseteq X}$ and $\mu$ has full support. Moreover the sequence $X_k$ is increasing, so there is $k_0$ such that ${\mu(X_{k_0})> 1-\frac{\epsilon}{2}}$. By $\sigma$-invariance of $\mu$ we deduce that the set 
\[Y = \sigma^{k_0+|w|}(X_{k_0})\cap\sigma^{-k_0-|v|-1}(X_{k_0})\]
is such that ${\mu(Y)>1-\epsilon}$. Hence, for any $n$, ${F^{-t_n}([v])\cap Y\not=\emptyset}$. We deduce that there is some sub-sequence $(t_{n_p})$ such that, for some ${i<|w|}$ and ${j>|v|}$, and for any $p$, ${F^{-t_{n_p}}([v])\cap [w]_i\cap [w]_j\not=\emptyset}$ (recall that $\sigma$ is the ``left'' shift). Using the fact that $w$ is a wall, we conclude that $v$ occurs in the (temporal) period of the orbit of some (spatially) periodic configuration of period $wu$ for some $u$.
\end{proof}

\begin{theorem}\label{thm:equica}
  Let $\A$ be any CA and $\mu$ a translation invariant measure. We have the following upper bounds:
  \begin{itemize}
  \item if $\mu$ is computable then $L_\mu(\A)$ is a
    $\Sigma_3^0$ arithmetical set;
  \item if $\mu$ is $\sigma$-ergodic with full support and $\A$ has equicontinuity points, then $L_\mu(\A)$ is recursively
    enumerable.
  \end{itemize}
\end{theorem}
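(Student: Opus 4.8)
For the first item, the plan is to express membership $v\in L_\mu(\A)$ directly in terms of the defining condition $\lim_{t\to\infty}\A^t\mu([v])\neq 0$ and count quantifiers. The quantity $\A^t\mu([v]) = \mu(\A^{-t}([v])) = \sum_{u\in P^t_\A(v)}\mu([u]_{-rt})$ is a finite sum of measures of cylinders; since $\mu$ is computable, this real number can be approximated to within any rational precision by a total computable function of $v$, $t$ and the precision parameter (the predecessor set $P^t_\A(v)$ is itself computable from $v$ and $t$). Hence the predicate ``$\A^t\mu([v]) > \tfrac1n$'' is, up to the usual care with approximations, a $\Delta_2^0$ (indeed essentially $\Sigma_1^0\vee\Pi_1^0$) predicate of $(v,t,n)$. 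Now $v\in L_\mu(\A)$ iff $\limsup_t \A^t\mu([v]) > 0$, i.e.
\[
\exists n\ \forall T\ \exists t\geq T:\ \A^t\mu([v]) > \tfrac1n,
\]
which is $\exists\forall\exists$ over a computable-in-the-limit matrix, giving a $\Sigma_3^0$ description. I would spell out the approximation bookkeeping (replace ``$>\tfrac1n$'' by ``$>\tfrac1n$ on a rational approximation good to $\tfrac{1}{2n}$'', absorbing the error) so that the matrix is genuinely decidable relative to a $\Sigma_1^0$ oracle and the whole thing collapses cleanly to $\Sigma_3^0$.

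For the second item, the plan is to invoke Proposition~\ref{prop:ergowall}: when $\mu$ is $\sigma$-ergodic with full support and $\A$ has equicontinuity points, $\A$ admits a wall $w$ (by the cited characterization of Kůrka, equicontinuity points $\iff$ walls), and then $L_\mu(\A)$ is exactly the set of words occurring in the temporal period of the orbit of some spatially periodic configuration of period $wu$. This description is manifestly recursively enumerable: one enumerates all finite words $u$, all $t,p\geq 1$, and all candidate decompositions $v_1 v v_2$; for each, the configurations ${}^\omega(wu)^\omega$ and ${}^\omega(v_1vv_2)^\omega$ are periodic, so $\A^t$ and $\A^p$ applied to them are computable, and the two equalities $\A^t({}^\omega(wu)^\omega) = {}^\omega(v_1vv_2)^\omega$ and $\A^p({}^\omega(v_1vv_2)^\omega) = {}^\omega(v_1vv_2)^\omega$ are decidable; whenever both hold one outputs every subword of $v_1vv_2$ (in particular $v$). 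This semi-decision procedure enumerates exactly $L_\mu(\A)$, so the language is r.e. One should also note that a wall $w$ can be found effectively (search for a word satisfying the wall condition, which is itself a decidable property of a finite word for a given CA), or simply fix one such $w$ as a constant for the given $\A$, since the statement is about the complexity of the set for a fixed CA.

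I expect the only real obstacle to be the careful handling of approximations in the first item: making sure that ``$\A^t\mu([v])$ does not tend to $0$'' is correctly placed at level $\Sigma_3^0$ and not accidentally higher, which requires being precise that the inner predicate, after rounding, is decidable relative to the computation oracle for $\mu$, and that the $\limsup>0$ formulation only costs the outer $\exists\forall\exists$. The second item is essentially immediate once Proposition~\ref{prop:ergowall} and the Kůrka characterization are in hand; the content there was already done in proving that proposition.
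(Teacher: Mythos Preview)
Your proposal is correct and follows essentially the same approach as the paper's proof. For the first item, the paper works with the complement, writing $u\notin L_\mu(\A)\Leftrightarrow \forall\epsilon>0,\exists t_0,\forall t\geq t_0,\ g(u,\epsilon,t)\leq\epsilon$ where $g$ is a rational-valued computable approximation of $\A^t\mu([u])$; this sidesteps the $\Delta_2^0$ bookkeeping you flag by making the matrix outright decidable, but the idea is identical to yours. For the second item the paper also invokes Proposition~\ref{prop:ergowall} and simply fixes a wall $w$ for the given $\A$ (your fallback), then observes that the temporal cycle of a spatially periodic configuration is recursively bounded in the period length, which is equivalent to your enumeration over $t,p$.
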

\begin{proof}
  Since $\mu$ is computable by some function ${f:A^\ast\times\Q\rightarrow\Q}$, there is a computable function $g:A^\ast\times\Q\times\N\rightarrow\Q$ such that for any $\epsilon$, any $t\in\N$ and any $u$:
\[\bigl|\A^t\mu([u]_0)-g(u,\epsilon,t)\bigr|\leq\epsilon.\]
Indeed, it is sufficient to compute $\A^{-t}(u)$ and sum $f(v,\epsilon')$ for all elements $v$ of this set and a computably small enough $\epsilon'$.
Then, from the definition of $L_\mu(\A)$ we have
  \[u\not\in L_\mu(\A)\Leftrightarrow \forall\epsilon>0,\exists t_0, \forall t\geq t_0, g(u,\epsilon,t)\leq\epsilon.\]
Therefore $L_\mu(\A)$ is $\Sigma_3^0$.

Now suppose that $\mu$ is $\sigma$-ergodic with full support and that $\A$ has equicontinuous points. By hypothesis $\A$ admits some wall $w$ (see \cite{Kurka-1997}). Therefore Proposition~\ref{prop:ergowall} ensures that  $L_\mu(\A)$ is the set of words occuring in the (temporal) period of the orbit of some (spatially) periodic configuration of period $wu$ for some $u$. Since the temporal cycle reached from a spatially periodic initial configuration is finite and recursively bounded in the size of the spatial period, $L_\mu(\A)$ is recursively enumerable.
\end{proof}

\subsection{$\Sigma_3$-hard example}

Here we will prove that the $\mu$-limit language of a cellular automaton can have complexity $\Sigma_3$-hard. For that, with the help of the construction described in Section~\ref{sec:constr}, we will prove a reduction from a $\Sigma_3$-hard problem on Turing machines.

 \begin{definition}
A Turing machine $M$ is said to be \emph{co-finite} (and we write $M\in COF$) when there exists $i_0\in\N$ such that $M$ halts on  every input $i\geq i_0$. 
\end{definition}

The following result was proved in \cite{Odifreddi-1999}.
\begin{theorem}
The problem $COF$ has complexity $\Sigma_3$-hard.
\end{theorem}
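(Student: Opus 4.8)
The statement is a classical fact of computability theory, so the plan is to establish it directly. The matching upper bound is immediate: unfolding the definition, $M\in COF \iff \exists i_0\,\forall i\ge i_0\,\exists t\,[M\text{ halts on }i\text{ within }t\text{ steps}]$, which is a $\Sigma_3^0$ formula. Hence the real content is $\Sigma_3$-\emph{hardness}, and the plan is to exhibit, for an arbitrary $\Sigma_3^0$ set $A$, a computable many-one reduction $x\mapsto M_x$ with $x\in A \iff M_x\in COF$. First I would put $A$ in normal form: there is a decidable relation $R$ with $x\in A \iff \exists y\,\forall z\,\exists w\,R(x,y,z,w)$.

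The inner block $P(x,y):=\forall z\,\exists w\,R(x,y,z,w)$ is $\Pi_2^0$, and since $\mathrm{Tot}=\{e:\phi_e\text{ is total}\}$ is $\Pi_2$-complete, the $s$-$m$-$n$ theorem yields a computable function $e(x,y)$ with $P(x,y)\iff \phi_{e(x,y)}\text{ total}$. Thus $x\in A \iff \exists y\,[\phi_{e(x,y)}\text{ is total}]$, and the task reduces to building $M_x$ whose halting domain is cofinite if and only if \emph{at least one} machine of the computable family $(\phi_{e(x,y)})_{y\in\N}$ is total. To measure totality I would use the agreement length $\ell(y,s)$, defined as the largest $L$ such that $\phi_{e(x,y)}(v)$ halts within $s$ steps for every $v<L$; then $\phi_{e(x,y)}$ is total exactly when $\ell(y,s)\to\infty$ as $s\to\infty$.

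I would then construct $M_x$ by a finite-injury priority (movable-marker) stage construction whose guiding intuition is that the \emph{least} total index $y^\ast$ must eventually capture all sufficiently large inputs. At each stage a \emph{current leader} is chosen as the least index whose agreement length presently dominates those of all smaller indices; the leader is permitted to advance a frontier marker and enumerate fresh inputs into the domain up to its current agreement length, while indices of lower priority are restrained. When a higher-priority index revives---its agreement length surpassing the leader's---it seizes leadership. Crucially, a given input is committed to the domain only once a leader's agreement length actually reaches it, so the markers guarantee that only finitely many inputs are ever ``wasted'' on false candidates below $y^\ast$ (an r.e.\ domain cannot retract commitments, so premature enumeration must be avoided).

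For the verification: if $x\in A$, let $y^\ast$ be least with $\phi_{e(x,y^\ast)}$ total; each index $<y^\ast$ is non-total and hence has bounded agreement length, so after some finite stage $y^\ast$ is the permanent leader, its frontier advances to infinity, and all but finitely many inputs enter the domain, giving $M_x\in COF$. If $x\notin A$, no index is total, every agreement length is bounded, leadership is reclaimed infinitely often by different indices, the frontier stalls below each bound infinitely often, and infinitely many inputs are never enumerated, giving $M_x\notin COF$. I expect the main obstacle to be exactly this cofiniteness requirement: because ``cofinite'' asks that all but finitely many inputs halt, one cannot simply allot disjoint blocks of inputs to the different $y$'s---a single witnessing index must cover cofinitely many inputs---and one must rule out the spurious scenario in which infinitely many non-total indices, each contributing finitely, jointly fill the domain. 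Separating ``some index diverges'' from ``each bounded but with unbounded bounds'' is what forces the priority argument together with the bounded-injury count.
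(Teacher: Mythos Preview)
The paper does not actually prove this theorem: it simply states it and cites Odifreddi's \emph{Classical Recursion Theory}. Your plan to give a direct computability-theoretic argument is therefore more than what the paper does, and your reduction of the problem to ``$W_{M_x}$ is cofinite iff $\exists y\,[\phi_{e(x,y)}\text{ is total}]$'' via the $\Pi_2$-completeness of $\mathrm{Tot}$ is the correct classical first move.

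There is, however, a genuine gap in the construction you sketch. With a single frontier advanced by a single leader, the $x\notin A$ direction fails. Suppose no $\phi_{e(x,y)}$ is total, so each agreement length is bounded, say $\ell(y,\cdot)\le B_y$. Nothing prevents the sequence $(B_y)_y$ from being unbounded; then, under any reasonable reading of your leader-selection rule, leadership eventually passes to larger and larger indices with larger and larger agreement lengths, and the single frontier is pushed to infinity. The domain becomes cofinite even though $x\notin A$. You explicitly flag this scenario as the main obstacle, but the mechanism you describe does not avoid it: a single monotone frontier cannot distinguish ``some index has $\ell\to\infty$'' from ``infinitely many indices with finite but unbounded limits''. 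Your verification sentence ``the frontier stalls below each bound infinitely often'' is exactly what does \emph{not} happen in this case.

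The standard repair (this is the argument one finds in Soare or Odifreddi) replaces the single frontier by a family of markers. One clean version: at each stage $s$ list the current complement $\overline{W^s}$ in increasing order as $a_0^s<a_1^s<\cdots$, and whenever $\ell(y,s)$ strictly increases, enumerate the single element $a_y^s$ into $W$. If $x\in A$ and $y^\ast$ is the least total index, then each $y<y^\ast$ acts only finitely often, so $a_0,\ldots,a_{y^\ast-1}$ stabilise, while $y^\ast$ acts infinitely often and sweeps out every later element, leaving $|\overline{W}|\le y^\ast$. If $x\notin A$, every $y$ acts only finitely often; since only indices $y\le n$ can disturb $a_n$, each $a_n$ stabilises and $\overline{W}$ is infinite. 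The crucial feature is that index $y$ can only ever remove the $y$-th element of the current complement, which immunises the argument against the unbounded-$B_y$ scenario. Your priority intuition is right, but the marker bookkeeping must be per-index, not a single shared frontier.
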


Now we can prove that:
\begin{theorem}\label{thm:sigma3}
There exists a cellular automaton $\A$ such that $\Lmu(\A)$ is $\Sigma_3$-complete for every fully supported Bernoulli measure $\mu$.
\end{theorem}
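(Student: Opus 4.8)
The plan is to reduce the $\Sigma_3$-hard problem $COF$ to membership in $\Lmu(\A)$, using Theorem~\ref{thm:maintheorem} as the main black box. Given a Turing machine $M$, I want to build a growing computable sequence $w^M = (w^M_i)_{i\in\N}$ over a fixed finite alphabet $Q_0$ (say $Q_0 = \{0,1,\#\}$, with a distinguished ``marker'' letter) such that the persistent language $L_{w^M}$ encodes whether $M\in COF$. The sequence $w^M_i$ should be computed by running $M$ in a dovetailed fashion on inputs $0,1,\dots$: concretely, I would let $w^M_i$ be a word of length roughly $i$ that contains a high density of the marker letter $\#$ precisely when ``enough'' of the inputs up to some threshold have been seen to halt, and a vanishing density of $\#$ otherwise. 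The key design goal is: the marker letter $\#$ (or some fixed short word $u_0$) belongs to $L_{w^M}$ if and only if $M\in COF$, i.e. if and only if $d_{w^M_i}(u_0)\not\to 0$. Then Theorem~\ref{thm:maintheorem}(1) yields a CA $\A$ (depending on $M$) with $\Lmu(\A) = L_{w^M}$ for every full-support Bernoulli measure $\mu$, and membership of $u_0$ in $\Lmu(\A)$ decides $COF$, giving $\Sigma_3$-hardness; the $\Sigma_3$ upper bound is Theorem~\ref{thm:equica}. To get a \emph{single} CA (rather than one per $M$) that is $\Sigma_3$-complete, I would fix a universal machine and encode the instance as part of the configuration, or more simply observe the statement as phrased allows $\A$ to depend on the reduction — rereading, since the theorem says ``there exists a CA $\A$'', I actually need to hardwire a $\Sigma_3$-complete instance: take $M$ to be a fixed machine such that deciding $M\in COF$-like behavior is $\Sigma_3$-complete, or better, encode the $\Sigma_3$-complete set $\{e : W_e \text{ is cofinite}\}$ by letting the sequence dovetail over all machines $\varphi_e$ and using words $u_0 = \#^e\$$ or similar so that $u_0^{(e)}\in L_w \iff \varphi_e$ cofinite.

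Here is the concrete construction of $w^M$ that I expect to work. Maintain, while computing $w^M_i$, a simulation that runs machine $M$ for $i$ steps on each input $j \le i$ (a standard dovetailing). Say input $j$ is ``settled by stage $i$'' if $M$ halts on $j$ within $i$ steps. Now set
\[
w^M_i = \begin{cases} \#^{\,i} & \text{if every } j \le m_i \text{ is settled by stage } i,\\[2pt] 0^{\,i} & \text{otherwise,}\end{cases}
\]
where $m_i$ is a very slowly growing threshold, e.g. $m_i = \lfloor \log\log i\rfloor$, chosen so that: if $M\in COF$ with threshold $i_0$, then (since the finitely many inputs below $i_0$ that ever halt do so, and all inputs $\ge i_0$ halt) for all large $i$ every $j\le m_i$ is settled, hence $w^M_i = \#^i$ for cofinitely many $i$, so $d_{w^M_i}(\#)\to 1 \ne 0$; whereas if $M\notin COF$ there are infinitely many $j$ on which $M$ never halts, so for infinitely many $i$ the smallest such $j$ is $\le m_i$ and unsettled, giving $w^M_i = 0^i$ infinitely often — but this alone only gives ``$\liminf = 0$'', not ``$\to 0$''. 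To force $d_{w^M_i}(\#)\to 0$ in the non-cofinite case, I instead interleave: make $w^M_i$ equal to $0^i$ whenever input $j^\ast$ (the least unsettled-so-far input $\le m_i$) exists, and otherwise $\#^i$; because in the non-cofinite case there is a \emph{fixed} least non-halting input $j_0$, and $m_i\ge j_0$ for all large $i$, input $j_0$ is unsettled at \emph{every} large stage $i$, so $w^M_i = 0^i$ for all large $i$ and $d_{w^M_i}(\#)\to 0$. This is exactly the dichotomy I want, and it is visibly computable, with $|w^M_i| = i \to \infty$, so $w^M \in \W(Q_0)$.

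With this in hand the proof assembles as follows. First, verify $w^M \in \W(Q_0)$: computability is clear (dovetail $M$ for $i$ steps on inputs $0,\dots,i$, compute $m_i$, output $\#^i$ or $0^i$), and lengths tend to infinity. Second, prove the equivalence $M\in COF \iff \#\in L_{w^M}$ via the dichotomy above: in the cofinite case $w^M_i = \#^i$ for all large $i$ so $d_{w^M_i}(\#)\to 1$; in the non-cofinite case $w^M_i = 0^i$ for all large $i$ so $d_{w^M_i}(\#)\to 0$, and $d_{w^M_i}(u)\to 0$ for any $u$ containing a $\#$ as well, so the entire persistent language is $\{0\}^*$ in that case. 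Third, apply Theorem~\ref{thm:maintheorem}(1) to get a CA $\A_M$ over an alphabet containing $Q_0$ with $\Lmu(\A_M) = L_{w^M}$ for every full-support Bernoulli $\mu$; hence $\#\in L_\mu(\A_M) \iff M\in COF$. Fourth, to obtain one fixed automaton rather than a family, redo the construction with the dovetailing ranging over \emph{all} partial computable functions $\varphi_e$ and with the ``witness'' word for index $e$ being $u_0^{(e)} := \# 1^e \#$: arrange that $d_{w_i}(\# 1^e \#)\not\to 0$ iff $W_e$ is cofinite. Then the single CA $\A$ produced by Theorem~\ref{thm:maintheorem} satisfies $\# 1^e \# \in \Lmu(\A) \iff e\in \{e : W_e\ \text{cofinite}\}$, which is $\Sigma_3$-complete, so $\Lmu(\A)$ is $\Sigma_3$-hard; combined with the $\Sigma_3^0$ upper bound of Theorem~\ref{thm:equica} (as $\mu$ is computable), $\Lmu(\A)$ is $\Sigma_3$-complete. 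The main obstacle is the second step: getting genuine convergence $d_{w_i}(u)\to 0$ (not merely $\liminf = 0$) in the non-cofinite case, which is why the threshold $m_i$ must grow \emph{unboundedly but slowly} and why I track the \emph{least} unsettled input — so that a single permanently-unsettled input eventually dominates every stage. A secondary subtlety is making sure that in the cofinite case the density of the witness word stays bounded away from $0$ uniformly (handled here by making $w_i$ literally a power of the marker, so the density is essentially $1$); when juggling many indices $e$ simultaneously one must partition the word $w_i$ into blocks dedicated to each relevant $e$ and argue density blockwise, which is routine but needs care with the block sizes so that each $d_{w_i}(u_0^{(e)})$ has the right limiting behavior independently.
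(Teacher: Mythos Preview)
Your per-machine construction does not encode $COF$: it encodes (at best) totality. The flaw is in the cofinite case. If $M\in COF$ with threshold $i_0$ but $M$ fails to halt on some input $j_0 < i_0$ --- which the definition of $COF$ permits --- then for every $i$ with $m_i \ge j_0$ the input $j_0$ remains unsettled, so $w^M_i = 0^i$ for all large $i$ and $d_{w^M_i}(\#)\to 0$. Your parenthetical ``since the finitely many inputs below $i_0$ that ever halt do so'' quietly discards the inputs below $i_0$ that \emph{never} halt, but those are precisely what break the test ``every $j\le m_i$ is settled''. Hence $\#\in L_{w^M}$ only when $M$ halts on \emph{all} inputs, a $\Pi_2$ condition, and no $\Sigma_3$-hardness follows. (There is a secondary issue even in the total case: the halting times $s_j$ can grow faster than any fixed computable bound, so no uniform choice of a slowly-growing $m_i$ guarantees that all $j\le m_i$ settle within $i$ steps.)

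What is missing is the outer existential quantifier in $COF \equiv \exists k\,\forall l\,\exists t(\cdots)$. One must \emph{guess} a candidate threshold $k$ and test whether the machine halts on $k,k+1,\ldots,k+l$ for larger and larger $l$. The paper enumerates triples $(j,k,l)$; on the first $i$ witnessing success for a given triple it sets $w_i$ to a power of a pattern of length $j+k+3$ containing the witness word $u_j$ exactly once, so that $d_{w_i}(u_j)\approx \frac{1}{j+k+3}$. If $\phi_j$ is cofinite with threshold $k$, infinitely many $l$'s succeed with that fixed $k$, and the density stays bounded below by a positive constant depending only on $(j,k)$. If $\phi_j$ is not cofinite, every tail $\{k,k+1,\ldots\}$ contains a non-halting input, so for each fixed $k$ only finitely many triples $(j,k,l)$ ever succeed; eventually only large values of $k$ can contribute, and then $\frac{1}{j+k+3}\to 0$. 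Your scheme has no parameter playing the role of $k$, and without one the cases ``cofinite but with a non-halting input'' and ``not cofinite'' are indistinguishable by your test. The hand-waved ``fourth step'' inherits the same defect, since it is just the per-machine construction run in parallel over all indices $e$.
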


\begin{proof}
  We already know that this problem is $\Sigma_3$ at most. We will use Theorem~\ref{thm:maintheorem} to prove the completeness. Let us describe the growing computable sequence $(w_i)_{i\in\N}$ that will be used.

  First consider a computable enumeration $f$ of $\N^3$, such that for any $(j,k,l)\in\N^3$ there exist infinitely many $i\in\N$ with $f(i)=(j,k,l)$. (Any such enumeration will do.) Let $(\phi_n)_{n\in\N}$ be a computable enumeration of Turing machines. We describe the Turing machine $T$ such that $T$ outputs $w_i$ when given the input $i$.

 Take $i\in\N$, there are $j,k,l\in\N$ such that $f(i)=(j,k,l)$. The idea is to simulate the computation of the Turing machine $\phi_j$ on some particular sequence of consecutive inputs then choose $w_i$ according to the results of the computations, i.e. depending whether the machine halts on each input in this sequence or not.  We will say that $i$ is successful if the machine does halt on each input. Indeed saying that the machine is co-finite means that there exists $k\in\N$ such that the computation ends on all sequences $\{k,k+1,\dots,k+l\}$. Thus, $w_i$ will be a witness (taking the form of a prefix of $(\stdz \stw^j\stdu \stw^k\stdd)^{\omega}$) of the success of a sequence starting at $k$. We will have to avoid writing a witness for $k$ more than once for each $l$.

More formally, $T$ does the following on input $i$:
\begin{itemize}
\item Compute $(j,k,l)=f(i)$.
\item Compute $i_0=\max\{i'<i,f(i')=(j,k,l)\}$.
\item At the same time, simulate the machine $\phi_j$ with successive inputs $k,k+1,\dots,k+l$. If one of these simulations does not halt, then stop the sequence of simulations after $2^{i}$ steps (the bound is purely arbitrary). In this case, $i$ is said to be failed.
\item If the machine $\phi_j$ does halt on all these inputs before timestep  $2^{i}$, then denote $\tau$ the exact time used for the whole computation.
\item If $\tau\leq 2^{i_0}$, then $i$ is said to be failed again, since in this case, some smaller integer was declared successful with the same sequence.
\item In the remaining case, $i$ is said to be successful and $w_i=(\stdz \stw^j\stdu \stw^k\stdd)^{i}$.
\item If $i$ is failed, $w_i=\stw^{i(j+k+3)}$.
  
\end{itemize}

Consider the word $u_{j}=\stdz \stw^j\stdu$, we will prove that:
$$u_{j}\in L_w \Leftrightarrow  \phi_j\in COF$$

\begin{claim}
$\phi_j\in COF \Rightarrow u_j\in L_w$
\end{claim}
\begin{claimproof}

 First suppose that $\phi_j\in COF$ for some $j\in \N$. In this case, there exists $k\in \N$ such that $\forall l\geq k$, $\phi_j$ halts on input $l$. This means that for any $(j,k,l), l\in\N$, there exists $i\in\N$ such that $f(i)=(j,k,l)$ and $2^{i}$ timesteps are enough to simulate $\phi_j$ on inputs $k,k+1,\dots,k+l$ and verify that it halts in each case. Thus for every triplet $(j,k,l),l\in\N$, there exists a successful $i_l\in\N$ with $f(i_l)=(j,k,l)$.

 Hence, $w_{i_l}=(\stdz \stw^j\stdu \stw^k\stdd)^{i_l}$. For $l$, and thus $i_l$, large enough, the density of the word $u_j$  in every $w_{i_l}$ is larger than $\frac{1}{2}\frac{1}{|j|+|k|}$ which is a constant.

\end{claimproof}

\begin{claim}
$\phi_j\notin COF \Rightarrow u_j\notin L_w$ 
\end{claim}
\begin{claimproof}
Here, with $j$ fixed, if we take an infinite number of successful integers, they necessarily concern unbounded values of the starting point $k$. We will use the fact that the density of $u_j$ in $(\stdz \stw^j\stdu \stw^k\stdd)^{\omega}$ decreases when $k$ increases.

 Suppose $\phi_j\notin COF$ for $j\in\N$.  Take $\epsilon >0$. 
For any $i\in\N$, if $f(i)=(j',k,l)$ with $j\neq j'$, then $d_{w_i}(u_j)=0$.

 There exists $k_0\in\N$, such that $\frac{1}{|j|+|k_0|}<\epsilon$. As $\phi_j$ is not co-finite, there exists $l_0\geq k_0$ such that $\phi_j$ does not halt on input $l_0$. Thus, there are at most ${l_0}^2$ triplets $(j,k,l)\in\N^3$ with $k\leq k_0$ and $l\leq l_0$. 
 There exists $i_0\in\N$ such that for any  $(j,k,l),k\leq k_0, l\leq l_0$:
\begin{itemize}
\item either every $i\in\N$ with $f(i)=(j,k,l)$ is failed;
\item or there exists $i < i_0$ such that $i$ is successful.
\end{itemize}

Now take $i\geq i_0$.
\begin{itemize}
\item If $f(i)=(j',k,l), j'\neq j$ then  we have $d_{w_i}(u_j)=0$.
\item If $f(i)=(j,k,l), k\geq k_0$ then $d_{w_i}(u_j)\leq\frac{1}{|j|+|k_0|}<\epsilon$.
\item If $f(i)=(j,k,l), k\leq k_0$, then $d_{w_i}(u_j)=0$ since $i\geq i_0$.
\end{itemize}
\end{claimproof}

\end{proof}

\subsection{Descriptive complexity}

In this section we will use Theorem~\ref{thm:maintheorem} to construct
cellular automata whose $\mu$-limit sets are constrained to be in a
specific subshift. The following proposition shows that we can build a
$\mu$-limit inside any effective subshift. However, let's recall that
there are very simple effective subshits which can not be the
$\mu$-limit set of some CA as shown in Example~\ref{ex:max}. The
question of what kind of subshift can appear as $\mu$-limit sets has
been specifically adressed in \cite{Boyer-Delacourt-Sablik-2010}.

\begin{prop}
	\label{prop:effective}
	Given a non-empty effective subshift $S$ over an alphabet $Q_0$, there exists a CA whose $\mu$-limit set is included in $S$ for every fully supported Bernoulli measure $\mu$.
\end{prop}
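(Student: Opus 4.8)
The plan is to apply Theorem~\ref{thm:maintheorem}(1) with a growing computable sequence $w = (w_i)_{i\in\N}$ built so that $L_w \subseteq S$ while $L_w$ is non-empty (so that the resulting CA is not trivially degenerate). The idea is to enumerate longer and longer \emph{globally admissible} words of $S$, i.e. finite words that extend to a bi-infinite configuration of $S$, and to let $w_i$ be essentially such a word, padded in a way that does not introduce forbidden patterns. Then every word occurring with non-vanishing density in the sequence will be a subword of some configuration of $S$, hence $L_w \subseteq S$, and Theorem~\ref{thm:maintheorem}(1) produces a CA $\A$ over an alphabet $Q \supseteq Q_0$ with $\Lmu(\A) = L_w \subseteq S$ for every full-support Bernoulli measure $\mu$.

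First I would make precise the notion of admissible word: since $S$ is a non-empty effective subshift with recursively enumerable forbidden language $\mathcal{F}$, the set of words that appear in some configuration of $S$ is $\Pi^0_1$ but the set of words that do \emph{not} appear is recursively enumerable (a word $u$ fails to appear iff some power/extension forces a forbidden factor — more carefully, one uses compactness: $u$ appears in $S$ iff for every $n$ there is a word of length $|u|+2n$ containing $u$ centrally and avoiding $\mathcal{F}$-patterns, and non-appearance is semi-decidable by a König's-lemma-style search). To get a genuinely \emph{computable} sequence I would instead work with a computable sequence of words $(u_i)$ with $|u_i|\to\infty$ that is \emph{guaranteed} to consist of admissible words: for instance fix one configuration is not available effectively, so instead build $u_i$ by a dovetailing construction that searches for ever-longer words avoiding the first $i$ enumerated forbidden patterns and extendable on both sides by at least $i$ symbols while also avoiding those patterns; such a word exists for every $i$ because $S\neq\emptyset$, and the search halts, so $(u_i)$ is computable with $|u_i|\to\infty$. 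Each $u_i$ then lies in $L(S)$: any forbidden pattern eventually gets enumerated, and from that point on all $u_i$ avoid it.

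Next I would verify $L_w \subseteq S$. A word $v\in L_w$ has $d_{u_i}(v)\not\to 0$, so $v$ occurs in infinitely many $u_i$; since $u_i\in L(S)$ and $S$ is a subshift (closed, shift-invariant), $v\in L(S)$, hence $v$ is not forbidden, hence any configuration with language contained in $L_w$ has language contained in $L(S)$, i.e. lies in $S$. Then $\Lmu(\A) = L_w$ gives $\Lmu(\A)\subseteq S$. (If one wants $\Lmu(\A)$ non-empty, note $L_w$ always contains at least one letter appearing infinitely often with bounded-below density, which can be arranged by padding, but the statement only asks for inclusion, so even $\Lmu(\A)=\emptyset$ would formally do; I would nonetheless remark on non-triviality.)

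The main obstacle is the effectivity bookkeeping: reconciling the fact that $L(S)$ itself is only co-r.e. with the requirement that $w$ be a genuinely \emph{computable} growing sequence. The resolution is exactly the dovetailed search described above — we never need to decide admissibility, only to \emph{produce} some admissible word of growing length, which is possible precisely because $S$ is non-empty (so arbitrarily long admissible words exist) and $\mathcal{F}$ is r.e. (so the finite truncations we avoid are computable). Once this is set up, the remaining steps are immediate from Theorem~\ref{thm:maintheorem} and the definition of $L_w$.
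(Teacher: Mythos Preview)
Your approach is the same as the paper's: apply Theorem~\ref{thm:maintheorem} to a growing computable sequence where $w_i$ is a word of length $i$ avoiding the first $i$ enumerated forbidden patterns (the paper simply takes the lexicographically first such word over $Q_0$), and conclude that no forbidden word lies in $L_w$, hence $\Lambda_\mu(\A)\subseteq S$.

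One point needs correction, though. Your claim that ``each $u_i$ lies in $L(S)$'' is not correct, and it is not what your own justification proves. The word $u_i$ avoids only the first $i$ enumerated forbidden patterns and may well contain a pattern enumerated later, so in general $u_i\notin L(S)$. What your argument actually establishes --- and what suffices --- is that for each fixed forbidden word $f$ (say the $j$th one enumerated), all $u_i$ with $i\geq j$ avoid $f$, so $d_{u_i}(f)=0$ eventually and $f\notin L_w$. From this you conclude directly: any $c\in\Lambda_\mu(\A)$ has $L(c)\subseteq L_w$, hence $L(c)$ avoids the entire forbidden language, hence $c\in S$. You should drop the intermediate claim ``$u_i\in L(S)$'' and argue this way instead. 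The extendability-by-$i$-symbols condition you impose is likewise unnecessary; the simple search for a length-$i$ word avoiding the first $i$ forbidden patterns already terminates because $S\neq\emptyset$.
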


\begin{proof}
	Because the subshift $S$ is effective, it can be characterized by a recursively enumerable set of forbidden words. We will use Theorem~\ref{thm:maintheorem} with a growing computable sequence in which the word $w_i$ does not contain any of the first $i$ forbidden words.

 Let us describe the Turing machine $T$ that computes the sequence $(w_i)_i$.        
\begin{itemize}
	\item On input $i$, $T$ enumerates and stores the first $i$ forbidden words of $S$.
	\item All possible words of length $i$ over $Q_0$ are then enumerated in lexicographical order, and $w_i$ is the first one that does not contain any of the forbidden words previously enumerated (there exists one because the subshift is non-empty).
\end{itemize}

We now apply Theorem~\ref{thm:maintheorem} with the growing computable sequence $(w_i)_i$ hence it is enough to prove that $L_w$ contains only words in $\Sigma^*$ and none of the forbidden words.

Now let us consider a forbidden word $v$ in the recursively enumerable set that characterizes the subshift $S$. It is the $i^{\text{th}}$ word enumerated for some $i\in\N$, hence it does not appear in $w_j, j\geq i$.
\end{proof}

This proposition does not allow to describe the $\mu$-limit set obtained, except if the subshift is minimal. A subshift is said to be \emph{minimal} (\cite{Lind-Marcus-1995}) when it does not contain a proper subshift. Hence the proposition implies that:

\begin{corollary}
Given a non-empty minimal effective subshift $S$, there exists a cellular automaton whose $\mu$-limit set is $S$ for every fully supported Bernoulli measure $\mu$.
\end{corollary}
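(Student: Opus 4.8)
The plan is to combine Proposition~\ref{prop:effective} with the defining property of minimal subshifts, namely that a minimal subshift $S$ has no proper nonempty subsubshift. First I would apply Proposition~\ref{prop:effective} to the given non-empty minimal effective subshift $S$, obtaining a cellular automaton $\A$ (over some alphabet $Q\supseteq Q_0$) such that $\Lmu(\A)\subseteq S$ for every fully supported Bernoulli measure $\mu$. It remains only to upgrade the inclusion to an equality.

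The key observation is that $\Lmu(\A)$ is always itself a subshift (it is defined by a language of forbidden words, hence is closed and shift-invariant), and it is non-empty whenever $\A$ has at least one configuration surviving in density — but more to the point, $\Lmu(\A)$ is non-empty for \emph{any} CA and any $\mu$, since the persistent language $\Lmu(\A)$ is always non-empty (it is extendable: there is always some word whose probability does not vanish, as probabilities of cylinders of a fixed length sum to $1$ at every time step, so at each length some word persists along a common subsequence of times; a compactness/König's-lemma argument then produces a configuration in $\Lmu(\A)$). Thus $\Lmu(\A)$ is a non-empty subshift contained in the minimal subshift $S$. By minimality of $S$, the only non-empty subshift contained in $S$ is $S$ itself, so $\Lmu(\A)=S$.

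The step I expect to require the most care is the claim that $\Lmu(\A)$ is always a non-empty subshift — one has to be slightly careful that the $\limsup$/limit in the definition of the persistent language behaves well under taking extensions of words, so that the forbidden language really does define a subshift whose language is exactly $\Lmu(\A)$, and that this language is non-empty. Concretely: if $u\in\Lmu(\A)$ then any subword of $u$ is in $\Lmu(\A)$ (a subword occurs at least as often), and for each $n$ there is at least one word of length $n$ whose probability does not tend to $0$ along a fixed subsequence of times (by pigeonhole on the finitely many words of length $n$, whose cylinder probabilities sum to $1$ at each time); a diagonal extraction over $n$ gives a single subsequence of times along which, for every $n$, some word of length $n$ persists, and these words can be chosen consistently (each a subword of the next) by König's lemma, yielding a configuration all of whose subwords lie in $\Lmu(\A)$. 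Hence $\Lmu(\A)\neq\emptyset$, and everything else is immediate from Proposition~\ref{prop:effective} and the definition of minimality.

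\begin{proof}
Apply Proposition~\ref{prop:effective} to the non-empty minimal effective subshift $S$: there is a cellular automaton $\A$ such that $\Lmu(\A)\subseteq S$ for every fully supported Bernoulli measure $\mu$. By definition $\Lmu(\A)$ is a subshift (its language $\Lmu(\A)$ is closed under taking subwords, and it is defined by a set of forbidden words, hence shift-invariant and closed). Moreover $\Lmu(\A)$ is non-empty: for every length $n$, since the cylinder probabilities of words of length $n$ sum to $1$ at every time step, there is at least one word of length $n$ whose probability does not tend to $0$; a diagonal extraction over $n$ together with König's lemma yields a configuration all of whose finite subwords belong to $\Lmu(\A)$, i.e.\ a point of $\Lmu(\A)$. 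Thus $\Lmu(\A)$ is a non-empty subshift included in $S$. Since $S$ is minimal, it contains no non-empty proper subshift, so $\Lmu(\A)=S$.
\end{proof}
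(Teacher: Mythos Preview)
Your proof is correct and follows exactly the approach the paper intends: apply Proposition~\ref{prop:effective} to get $\Lambda_\mu(\A)\subseteq S$, observe that $\Lambda_\mu(\A)$ is a non-empty subshift, and conclude by minimality of $S$. The paper treats the corollary as an immediate consequence and does not spell out the non-emptiness of $\Lambda_\mu(\A)$; you have simply been more thorough in justifying that standard fact.
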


We will see now how the previous proposition implies the existence of a cellular automaton whose $\mu$-limit set contains only configurations of high Kolmogorov complexity.

\begin{definition}
	Given a recursive function $f:\{0,1\}^*\rightarrow \{0,1\}^*$, the Kolmogorov complexity relative to $f$ of a string $x\in\{0,1\}^*$ is defined as $K_f(x) = \min\{|y|,\ f(y)=x\}$.
\end{definition}

As such, the definition of Kolmogorov complexity depends heavily on the choice of the function $f$ and it is not properly defined for words $x$ such that $\{|y| \st f(y)=x\}$ is empty. However, it can be shown that there exists a recursive function $U$ such that, for any recursive function $f$, there is a constant $c_f\in \N$ such that, for any string $x\in\{0,1\}^*$ such that $K_f(x)$ is defined, we have $K_U(x)\leq K_f(x)+c_f$. This also implies that $K_U(x)$ is properly defined for all $x$. The Kolmogorov complexity of a string $x$ is then defined as $K(x)=K_U(x)$ for some such \emph{additively optimal} $U$.

Informally, the Kolmogorov complexity of a word is the length of a
shortest program which outputs that word.

\begin{definition}[$\alpha$-complexity]
  Given a constant $\alpha > 0$, a word of length $n$ on the alphabet $\{0, 1\}^*$ is said to be \emph{$\alpha$-complex} if its Kolmogorov complexity is greater than $\alpha n$. A word that is not $\alpha$-complex is said to be \emph{$\alpha$-simple}.
\end{definition}

\begin{corollary}[of Proposition~\ref{prop:effective}]
	\label{coro:kolmo}
	For any $\alpha < 1$, there exists a constant $n_\alpha$ and a cellular automaton whose $\mu$-limit set contains only configurations whose factors of length greater than $n_\alpha$ are all $\alpha$-complex for every fully supported Bernoulli measure $\mu$.
\end{corollary}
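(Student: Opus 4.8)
The plan is to apply Proposition~\ref{prop:effective} to a carefully chosen effective subshift $S$ that contains only configurations whose long factors are $\alpha$-complex. The main point is to show that such an effective subshift is non-empty; once we have it, the corollary follows immediately by producing a CA whose $\mu$-limit set is contained in $S$.

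First I would fix $\alpha<1$ and consider, for each $n$, the set $B_n$ of words of length $n$ over $\{0,1\}$ that are $\alpha$-simple, i.e. have Kolmogorov complexity at most $\alpha n$. Since there are at most $2^{\alpha n+1}$ programs of length $\leq\alpha n$, we have $|B_n|\leq 2^{\alpha n+1}$, which is exponentially smaller than $2^n$. By a counting argument (a Lov\'asz-local-lemma-free, purely volumetric estimate in the style of the standard proof that there exist sequences of positive entropy avoiding a sparse set of factors), for $n$ large enough — say $n\geq n_\alpha$ — there exist bi-infinite configurations in $\{0,1\}^\Z$ none of whose factors of length $n$ lie in $B_n$; indeed one shows that the number of length-$m$ words avoiding all of $B_n$ grows exponentially in $m$, so the corresponding subshift is non-empty. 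Then I would define $S$ to be the subshift of $\{0,1\}^\Z$ whose forbidden words are exactly the elements of $\bigcup_{n\geq n_\alpha}B_n$. This set of forbidden words is recursively enumerable: given a candidate word $v$ of length $n\geq n_\alpha$, one dovetails the universal machine $U$ over all inputs $y$ with $|y|\leq\alpha n$ and enumerates $v$ as forbidden if and when some such $y$ outputs $v$. Hence $S$ is an effective subshift, and it is non-empty by the counting argument.

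Having established that $S$ is a non-empty effective subshift, I would apply Proposition~\ref{prop:effective} to obtain a CA $\A$ whose $\mu$-limit set is included in $S$ for every fully supported Bernoulli measure $\mu$. It then remains to check that every configuration in $S$ has the desired property: by construction of $S$, no configuration $c\in S$ contains any factor in $\bigcup_{n\geq n_\alpha}B_n$, which is exactly the statement that every factor of $c$ of length at least $n_\alpha$ is $\alpha$-complex (a factor of length $m\geq n_\alpha$ that were $\alpha$-simple would be an element of $B_m$, hence forbidden). Since $\Lmu(\A)\subseteq S$, the same holds for every configuration in $\Lmu(\A)$, which is the claim.

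The main obstacle is the combinatorial non-emptiness argument: one must be careful that $\alpha<1$ leaves enough room. The cleanest way is to note that the number $N_m$ of words of length $m$ over $\{0,1\}$ that avoid every word of $B_n$ as a factor satisfies, for $m\geq n$, a recurrence of the form $N_{m}\geq 2 N_{m-1}-|B_n|\,N_{m-n}$ (deleting the last letter, then subtracting the at most $|B_n|$ ways the forbidden factor could end at the last position), and since $|B_n|\leq 2^{\alpha n+1}=o(2^n)$ one gets $N_m\geq \lambda^m$ for some $\lambda>1$ once $n$ (hence $n_\alpha$) is large enough; a non-empty subshift then exists by compactness. One also has to make a trivial but necessary observation: since $K$ is defined only up to an additive constant $c_U$, "$\alpha$-complex" is robust for $n$ large, which is why the threshold $n_\alpha$ appears in the statement, and this is harmlessly absorbed into the counting bound.
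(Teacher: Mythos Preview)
Your overall strategy matches the paper's exactly: define $S$ as the subshift over $\{0,1\}$ forbidding all $\alpha$-simple words of length $\geq n_\alpha$, check effectiveness by dovetailing $U$ over short programs, check non-emptiness, and invoke Proposition~\ref{prop:effective}. The effectiveness argument is essentially identical to the paper's.

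The gap is in the non-emptiness step. Your counting argument, as written, treats only a \emph{single} length $n$: the recurrence $N_m\geq 2N_{m-1}-|B_n|\,N_{m-n}$ counts words avoiding $B_n$ for one fixed $n$, and from it you conclude that for each $n\geq n_\alpha$ there is a configuration with no factor of length $n$ in $B_n$. But $S$ forbids $\bigcup_{n\geq n_\alpha}B_n$, i.e.\ $\alpha$-simple words of \emph{every} length $\geq n_\alpha$ simultaneously, and non-emptiness of each single-length subshift does not give a common point of their intersection. Concretely, a long $\alpha$-simple word need not contain any $\alpha$-simple factor of length $n_\alpha$ (e.g.\ long powers of a Kolmogorov-random block of length $n_\alpha$), so forbidding only one length is genuinely weaker. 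The paper handles exactly this point by invoking \cite{RumUsh-2006}, which establishes the existence of sequences all of whose sufficiently long factors are $\alpha$-complex.

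Your approach can be repaired without that citation, but you must run the recurrence over all forbidden lengths at once: if $N_m$ counts length-$m$ words avoiding $\bigcup_{k\geq n_\alpha}B_k$, then
\[
N_m \;\geq\; 2N_{m-1}\;-\;\sum_{k=n_\alpha}^{m}|B_k|\,N_{m-k},
\]
and using $|B_k|\leq 2^{\alpha k+1}$ one checks that for $n_\alpha$ large enough there is $\lambda\in(2^\alpha,2)$ with $N_m\geq c\lambda^m$, since $\sum_{k\geq n_\alpha}(2^\alpha/\lambda)^k$ can be made arbitrarily small. Compactness then yields a point of $S$. With this correction your argument is complete and, unlike the paper's, self-contained.
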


\begin{proof}
	To use Proposition~\ref{prop:effective} we need to show that for some $n_\alpha$ the subshift of configurations over $\{0, 1\}$ that contain no $\alpha$-simple word of length greater than $n_\alpha$ is effective and non-empty.
	
	As for the effectiveness, a word $x$ is $\alpha$-simple if and only if there exists $y$ such that $U(y) = x$ and $|y| \leq \alpha |x|$. We can enumerate all such words by dovetailing the computations of $U(y)$ for all possible $y$ and checking if the resulting word is $\alpha$-simple by comparing its length to that of the input $y$. Therefore the set of $\alpha$-simple words $\{x,\ K(x)\leq \alpha |x|\}$ is recursively enumerable, and so is the set of such words of length greater than $n_\alpha$.

	The existence of $n_\alpha$ and a configuration containing no $\alpha$-simple factor of length greater than $n_\alpha$ is a consequence of the main result in \cite{RumUsh-2006} since there exist at most $2^{\alpha n}$ forbidden words of length $n$ and complexity less than $\alpha n$.
\end{proof}

\begin{corollary}[of Corollary~\ref{coro:kolmo}]
  There exists a CA whose $\mu$-limit set contains only non-recursive configurations for every fully supported Bernoulli measure $\mu$.
\end{corollary}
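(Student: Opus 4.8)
The plan is to derive this immediately from Corollary~\ref{coro:kolmo} by choosing a suitable $\alpha$ and observing that configurations all of whose long factors are $\alpha$-complex cannot be recursive. First I would fix any $\alpha$ with $0<\alpha<1$ (say $\alpha=1/2$) and invoke Corollary~\ref{coro:kolmo} to obtain a constant $n_\alpha$ and a cellular automaton $\A$ such that, for every fully supported Bernoulli measure $\mu$, every configuration in $\Lmu(\A)$ has all its factors of length greater than $n_\alpha$ $\alpha$-complex.

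Next I would argue that such a configuration $c\in\{0,1\}^{\Z}$ cannot be recursive. Suppose for contradiction that $c$ is recursive, i.e. there is a Turing machine computing $c_z$ on input $z$. Then for each $n$ the factor $x_n=c_{[0,n-1]}$ is computable from the single integer $n$ together with the fixed program for $c$; hence $K(x_n)\leq \log_2 n + O(1)$. For $n$ large enough this is smaller than $\alpha n$, contradicting the $\alpha$-complexity of $x_n$ (which is a factor of $c$ of length greater than $n_\alpha$). Therefore $c$ is not recursive. Since this holds for every $c\in\Lmu(\A)$, the $\mu$-limit set of $\A$ contains only non-recursive configurations, and this holds simultaneously for every fully supported Bernoulli measure $\mu$, as guaranteed by Corollary~\ref{coro:kolmo}.

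There is essentially no obstacle here: the only mild subtlety is making sure the notion of ``recursive configuration'' is the expected one (a configuration $c\in\{0,1\}^{\Z}$ such that the map $z\mapsto c_z$ is computable) and that a recursive configuration indeed has factors of logarithmic Kolmogorov complexity, which is the standard fact used above. One should also note that $\Lmu(\A)$ is non-empty (it is a non-empty subshift, being a $\mu$-limit set), so the statement is not vacuous. This completes the argument.
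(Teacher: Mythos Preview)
Your proof is correct and follows essentially the same approach as the paper: invoke Corollary~\ref{coro:kolmo} for some $\alpha<1$, then observe that a recursive configuration has factors $c_{[0,n-1]}$ of Kolmogorov complexity $O(\log n)$, which for large $n$ violates $\alpha$-complexity. The paper's proof is just a terser version of exactly this argument.
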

\begin{proof}
  In a recursive configuration $c$, the word $c_{[0,n]}$ starting at position $0$ and of length $n$ has complexity ${O\bigl(\log(n)\bigr)}$. Therefore no recursive configuration can be $\alpha$-complex in the sense defined above. Corollary~\ref{coro:kolmo} concludes the proof.
\end{proof}

As a last application of Proposition~\ref{prop:effective}, we will show that the quasi-periodicity of a $\mu$-limit set can be highly non-trivial using a result of~\cite{Ballier-Jaendel-2010}. A configuration $c$ is said \emph{quasi-periodic} if any pattern occurring in $c$ occurs in any large enough pattern of $c$. Any subshift contains a quasi-periodic configuration \cite{birkhoff}. For such configurations the quasi-periodicity can be quantified through the \emph{quasi-periodicity function}.

\begin{definition}
  Let $c$ be a quasi-periodic configuration. We associate to $c$ the \emph{quasi-periodic function} ${\rho_c:\N\rightarrow\N}$ defined by:
  \[\rho_c(n) = \max_{u\in L(c), |u|=n}\min\{p : \text{any pattern of size $p$ of $c$ contains $u$}\}\]
\end{definition}

\begin{corollary}[of Proposition~\ref{prop:effective}]
  \label{coro:quasiper}
  There exists a cellular automaton such that for any quasi-periodic configuration $c$ of its $\mu$-limit set, the function $\rho_c$ can not be bounded by any recursive function for every fully supported Bernoulli measure $\mu$.
\end{corollary}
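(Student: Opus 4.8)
The plan is to apply Proposition~\ref{prop:effective} to a carefully chosen effective subshift built from the result of~\cite{Ballier-Jaendel-2010}. That paper produces, for any recursive function, a (non-empty) effective subshift all of whose quasi-periodic configurations have a quasi-periodicity function that dominates the given one; more precisely it gives a \emph{single} effective subshift $S_0$ in which every quasi-periodic configuration has $\rho_c$ not bounded by any recursive function (a ``universal'' construction, obtained by diagonalizing against all recursive bounds). So the first step is simply to invoke that theorem to obtain such an $S_0$, check it is non-empty and effective, and fix its alphabet $Q_0$.

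Next, I would feed $S_0$ into Proposition~\ref{prop:effective}: this yields a CA $\A$ over some $Q\supseteq Q_0$ such that, for every fully supported Bernoulli measure $\mu$, the $\mu$-limit set $\Lmu(\A)$ is included in $S_0$. Since $\Lmu(\A)\subseteq S_0$, any configuration $c\in\Lmu(\A)$ is a configuration of $S_0$; in particular any quasi-periodic configuration of $\Lmu(\A)$ is a quasi-periodic configuration of $S_0$ (the notion of quasi-periodicity and the function $\rho_c$ depend only on $c$ itself, not on the ambient subshift), hence by the property of $S_0$ its quasi-periodicity function $\rho_c$ cannot be bounded by any recursive function. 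This gives exactly the statement of Corollary~\ref{coro:quasiper}.

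The one point that needs a little care — and the main obstacle — is making sure the relevant version of the Ballier--Jeandel result really provides a \emph{fixed} effective subshift that works against \emph{all} recursive functions simultaneously, rather than one subshift per recursive function. If the cited result is only stated in the latter (parametrized) form, I would recover the uniform version by a standard dovetailing/diagonalization argument: enumerate all partial recursive functions $(g_e)_e$, and design the effective subshift so that, inside any quasi-periodic configuration, for each $e$ there is a ``zone'' that forces $\rho_c$ to exceed $g_e$ at some point whenever $g_e$ is total — this only requires the forbidden language to be recursively enumerable, which is preserved. Once $S_0$ is in hand everything else is a direct substitution into the already-proved Proposition~\ref{prop:effective}, with no further computation needed.
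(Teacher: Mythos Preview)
Your proposal is correct and follows essentially the same approach as the paper: apply Proposition~\ref{prop:effective} to the effective subshift provided by~\cite{Ballier-Jaendel-2010} (the paper simply cites corollary~3.4 there, which already gives the single ``universal'' subshift defeating all recursive bounds, so your contingency diagonalization is not needed). Your additional observation that $\rho_c$ depends only on $c$, hence passes from $\Lmu(\A)\subseteq S_0$ to $S_0$, is the right justification and is left implicit in the paper.
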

\begin{proof}
  It is a direct application of Proposition~\ref{prop:effective} with the effective subshift obtained by corollary 3.4 of \cite{Ballier-Jaendel-2010}.
\end{proof}

\section{Complexity of properties of $\mu$-limit sets}
\label{sec:properties}

  \subsection{A Rice theorem for $\mu$-limit sets}
 In the case of the limit set of cellular automata, J. Kari~\cite{Kari-1994} proved a result equivalent to Rice theorem, meaning that any non trivial property of limit sets of cellular automata is undecidable. Using certain aspects of his technique, we will prove here that any non trivial property of $\mu$-limit sets of cellular automata has a higher complexity than the negation of the problem of being co-finite for a Turing machine. Since we will deal with different cellular automata in this section, the considered measures will be the uniform ones on each alphabet.

\subsubsection{Properties of $\mu$-limit sets}

Intuitively, a property of the $\mu$-limit set is a property $\mathcal{P}$ which depends only on the $\mu$-limit set: if two CA have the same $\mu$-limit set, then either both have property $\mathcal{P}$ or none has property $\mathcal{P}$.
We use the same formalism as J. Kari for limit sets. Recall that we
have since the beginning consider a countable set
$\allalphabets=\{q_0,q_1,\dots\}$ from which we take finite subsets to
define alphabets.

\begin{definition}
A property $\mathcal{P}$ of $\mu$-limit sets of cellular automata is a subset of the powerset $\mathscr{P}(\allalphabets^{\Z})$. A $\mu$-limit set of some cellular automaton is said to have property $\mathcal{P}$ if it is included in $\mathcal{P}$.
\end{definition}

For example, $\mu$-nilpotency is given by the family $\{{q_i^{\Z}},i\in\N\}$. We will talk equivalently of properties of $\mu$-limit sets and $\mu$-limit languages, but a property of cellular automata \emph{concerning} the $\mu$-limit set is not necessarily a property \emph{of} $\mu$-limit sets. Surjectivity is the classical example to show that both differ. Indeed surjectivity refers to the set of states of the automaton and not necessarily only to those appearing in the $\mu$-limit set. Note also that there is no obvious relationship between properties of $\mu$-limit sets and properties of limit sets:
\begin{itemize}
\item nilpotency is a property of limit sets but not a $\mu$-limit property (\emph{e.g.} for $\mu$ the uniform Bernoulli measure, any CA with a spreading state is $\mu$-nilpotent but can be nilpotent or not);
\item conversely, $\mu$-nilpotency is a property of $\mu$-limit sets, but it is not known whether it is a property of limit sets.
\end{itemize}

 A property is said to be \emph{trivial} when either it contains all $\mu$-limit sets or none.

\subsubsection{Computing a weakly generic configuration}
In order to prove this Rice theorem, we will need to be able to compute the prefixes of some weakly generic configuration, we will then refer to the following proposition proved in \cite{fk-1977}:
\begin{prop}\label{prop:constrgeneric}
There exists a computable weakly generic configuration $c_{WG}$ on the finite alphabet $X$ such that there exist $A,B>0$ such that for any $l\in\N$, $u\in X^l$ and $L\geq |X|^{2l}$, we have:
  $$A|X|^{-l}\leq d_{{c_{WG}}_{[0,L-1]}}(u)\leq B|X|^{-l}$$
\end{prop}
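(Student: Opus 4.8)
# Proof Plan for Proposition~\ref{prop:constrgeneric}

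The plan is to build $c_{WG}$ as an explicit nested concatenation of "De Bruijn-like" blocks that exhausts all words of each length in a controlled, uniform fashion, and then to verify the density bounds by a counting argument. Concretely, for each $k \geq 1$ I would fix an enumeration $W_k$ of all words of $X^{*}$ of length $\leq k$ (or, more conveniently, a cyclic De Bruijn word $B_k$ of order $k$ over $X$, which has length $|X|^k$ and contains every word of length $k$ exactly once as a cyclic factor). Then I would define a sequence of finite words $C_1, C_2, \dots$ where $C_1 = B_1$ and $C_{k+1}$ is obtained by concatenating sufficiently many copies of $B_{k+1}$ after $C_k$ so that $B_{k+1}$'s contribution dominates, i.e. $|C_k| / |C_{k+1}| \to 0$ and moreover each $C_{k+1}$ "looks uniform at scale $\leq k+1$". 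Finally $c_{WG}$ is the right-infinite (or two-sided, by reflecting) configuration $\lim_k C_k$; computability is clear since each $B_k$ is computable from $k$ and the concatenation schedule is explicit.

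The key steps, in order, are: (1) record the basic combinatorial fact that in a De Bruijn word $B_k$ over $X$, and hence in any long concatenation of copies of it, every word $u$ of length $l \leq k$ has density exactly $|X|^{-l}$ up to a $O(l / |X|^k)$ error coming from boundary effects at the seams between copies and at the ends; (2) choose the number $m_{k+1}$ of copies of $B_{k+1}$ appended at stage $k+1$ large enough that, in the prefix $C_{k+1}$, the old prefix $C_k$ and the seam words contribute a vanishing fraction, so that for every $u$ with $|u| = l \leq k+1$ one gets $d_{C_{k+1}}(u) = |X|^{-l}(1 + o(1))$ with an explicit, controllable error; (3) given an arbitrary prefix length $L$ with $L \geq |X|^{2l}$, locate the stage $k$ with $|C_k| \leq L < |C_{k+1}|$; since $L \geq |X|^{2l}$ forces $k \geq 2l$ (the blocks grow like $|X|^k$, so a prefix of length $\geq |X|^{2l}$ necessarily reaches into blocks of order at least $\sim 2l$), the word $u$ of length $l$ is being counted inside structures that resolve all length-$l$ statistics; (4) bound $d_{c_{WG,[0,L-1]}}(u)$ above and below by splitting the prefix into the completed part $C_k$ (handled by step 2) plus a partial run of $B_{k+1}$-copies (handled by step 1), and collect the errors into absolute constants $A, B$ independent of $l$, $u$, $L$.

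The main obstacle I expect is step (3)–(4): making the error bounds uniform in $l$ while only assuming $L \geq |X|^{2l}$. The subtlety is that the additive boundary error when counting occurrences of a length-$l$ word in a concatenation of $N$ blocks each of length $|X|^k$ is of order $N \cdot l$ (one possibly-miscounted window of width $l$ per seam) out of a total of $\approx N |X|^k$ windows, giving a relative error $\approx l / |X|^k$; the hypothesis $L \geq |X|^{2l}$ is exactly what guarantees $k \gtrsim 2l$ and hence $l/|X|^k \leq l/|X|^{2l} \to 0$, keeping the multiplicative distortion bounded by a universal constant (one can even push $A, B$ close to $1$ for large $l$, but the statement only needs some fixed $A, B$). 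I would also need to handle the incomplete trailing block of length $< |X|^{k+1}$ at the very end of the prefix: its own internal statistics may be arbitrary, but its length is at most $|X|^{k+1} \leq |X| \cdot |C_k|^{\text{-ish}}$, and one arranges the copy-count schedule $m_k$ so that a single block is always a bounded fraction of the prefix it sits in (e.g. by insisting $m_{k+1} \geq |X|$, or more), so that even a "bad" partial block cannot move the density outside a constant factor of $|X|^{-l}$. Once the schedule constants are pinned down, assembling the final two-sided inequality with absolute constants $A$ and $B$ is a routine collection of the above estimates.
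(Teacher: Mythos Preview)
The paper does not give its own proof of this proposition: it is quoted verbatim from \cite{fk-1977} and used as a black box. So there is nothing to compare your argument against within the paper itself.

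That said, your De~Bruijn-block approach is the standard way such ``normal'' or ``disjunctive'' sequences are constructed, and is almost certainly close in spirit to the cited source. One point deserves tightening: you simultaneously ask that $m_{k+1}$ be large enough for $|C_k|/|C_{k+1}|\to 0$, and that $|C_k|$ grow like $|X|^k$ so that $L\geq |X|^{2l}$ forces $k\gtrsim 2l$. These two requirements are in tension (superlinear copy counts make $|C_k|$ grow faster than $|X|^k$). In fact you do not need the first: taking $m_{k+1}$ bounded (even $m_{k+1}=1$, i.e.\ $c_{WG}=B_1B_2B_3\cdots$) already works, because by induction each $C_k$ has the correct density $|X|^{-l}$ up to a constant for all $l\leq k$, and a partial block $B_{k+1}$ of length $m$ contributes at most $\min(m,|X|^{k+1-l})$ occurrences of $u$, which one checks is $O(|X|^{-l}L)$ once $L\geq |C_k|\gtrsim |X|^k$. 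With that simplification your steps (1)--(4) go through and yield absolute constants $A,B$ depending only on $|X|$.
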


\begin{remark}\label{rem:imggen}
The property over the densities of prefixes can be extended to images of $c_{WG}$ by a cellular automaton, for $k\in\N$ and $L\geq 2k$:
  $$A/2d_{\A^k(c_{WG})}(u) \leq d_{\A^k(c_{WG})_{[0,L-1]}}(u)\leq 2B d_{\A^k(c_{WG})}(u)$$
\end{remark}

\subsubsection{Construction}

\begin{theorem}\label{thm:rice}
Given a property $\mathcal{P}$ of $\mu$-limit sets, either $\mathcal{P}$ is trivial or $\mathcal{P}$ is $\Pi_3$-hard.
\end{theorem}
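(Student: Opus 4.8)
The plan is to reduce from the problem $COF$ (and its complement), which we know to be $\Sigma_3$-hard (so $\neg COF$ is $\Pi_3$-hard), following the spirit of Kari's Rice theorem for limit sets but adapted to the $\mu$-limit setting via Theorem~\ref{thm:maintheorem}. Since $\mathcal{P}$ is non-trivial, there exist CA $\A_1$ and $\A_2$ with $\Lmu(\A_1)\in\mathcal{P}$ and $\Lmu(\A_2)\notin\mathcal{P}$. The idea is, given a Turing machine $M$, to build a CA $\A_M$ whose $\mu$-limit set is ``$\Lmu(\A_1)$ if $M\in COF$'' and ``$\Lmu(\A_2)$ if $M\notin COF$'' (or the other way around); then $M\in\neg COF \iff \Lmu(\A_M)\notin\mathcal{P}$, giving the reduction. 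The asymmetry between the two behaviours is what forces us to be careful: we want the ``default'' behaviour to be robust and the ``exceptional'' one to take over only when $M$ is co-finite.

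First I would set up the growing computable sequence that Theorem~\ref{thm:maintheorem} consumes. Fix computable generic-prefix data for $\Lmu(\A_1)$ and $\Lmu(\A_2)$: by the corollary to Theorem~\ref{thm:maintheorem}, each $\Lmu(\A_k)$ is of the form $L_{v^{(k)}}$ for some growing computable sequence $v^{(k)}$; concretely one takes $v^{(k)}_t$ to be the length-$t$ central factor of $\A_k^t(c_{WG})$ using the computable weakly generic configuration of Proposition~\ref{prop:constrgeneric}, so that $d_{v^{(k)}_t}(u)$ tracks $d_{\A_k^t(c_{WG})}(u)$ (here Remark~\ref{rem:imggen} controls the finite-prefix error). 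Now define $w_i$ by: simulate $M$ on the first $i$ inputs for $i$ steps (or some arbitrary bound); if all have halted, output a chunk of $v^{(2)}$; otherwise output a chunk of $v^{(1)}$. More precisely, to make the density argument work we interleave this with an enumeration so that: if $M\in COF$ then for all large $i$ the test succeeds and $w_i$ is essentially $v^{(2)}_{g(i)}$ for an unbounded $g$, whereas if $M\notin COF$ the test fails infinitely often — but here is the subtlety — we need it to fail \emph{cofinitely} in the density sense, i.e. the $v^{(2)}$-chunks must not pollute $L_w$. The trick, exactly as in the proof of Theorem~\ref{thm:sigma3}, is to use the $f(i)=(j,k,l)$-style enumeration with unbounded witnesses so that a single non-halting input kills all but finitely many successful $i$, and to pad the failed $w_i$ with long $v^{(1)}$-chunks so that $L_w = L_{v^{(1)}}$ when $M\notin COF$ and $L_w = L_{v^{(2)}}$ when $M\in COF$. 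One must double-check that mixing the two sequences on the ``success'' side still gives exactly $L_{v^{(2)}}$: this needs the chunks to grow fast enough that each $v^{(2)}_m$ appears with bounded-below density in infinitely many $w_i$ while contributing nothing spurious — the same kind of bookkeeping as in Lemma~\ref{lem:incrseq} and the claims following Theorem~\ref{thm:sigma3}.

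Then I would invoke Theorem~\ref{thm:maintheorem}(1): there is a CA $\A_M$ over an alphabet containing $Q_0=Q_{\A_1}\cup Q_{\A_2}$ with $\Lmu(\A_M)=L_w$ for every full-support Bernoulli measure $\mu$, in particular the uniform one. By the construction of $w$ we then have $\Lmu(\A_M)=\Lmu(\A_1)$ if $M\in COF$ and $\Lmu(\A_M)=\Lmu(\A_2)$ otherwise. Hence $M\notin COF \iff \Lmu(\A_M)\notin\mathcal{P} \iff \A_M$ does not have property $\mathcal{P}$; since the map $M\mapsto\A_M$ is computable and $\neg COF$ is $\Pi_3$-hard, $\mathcal{P}$ is $\Pi_3$-hard. (If instead $\Lmu(\A_1)\notin\mathcal{P}$ and $\Lmu(\A_2)\in\mathcal{P}$, swap the roles and reduce from $COF$ directly, which is $\Sigma_3$-hard hence a fortiori gives $\Pi_3$-hardness is not immediate — so one really picks the orientation to reduce from $\neg COF$; since $\mathcal{P}$ non-trivial we may always label the two CA so that the co-finite case lands outside $\mathcal{P}$).

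The main obstacle I expect is making the ``exceptional'' behaviour genuinely dominate in density without being spoiled by the infinitely many ``default'' steps, and symmetrically ensuring the default behaviour's language is exactly $L_{v^{(1)}}$ and not accidentally enlarged by rare $v^{(2)}$-chunks — i.e. getting the quantifier structure of $COF$ to line up precisely with the $\limsup$-to-$0$ condition defining $L_w$. This is the same difficulty already confronted in Theorem~\ref{thm:sigma3}, and the resolution is the same: choose $f$ enumerating triples with infinitely many repetitions, use a non-halting input to bound the successful indices, and pad failures with long blocks of the other sequence so that densities of the unwanted words vanish. A secondary, more routine point is checking that $\Lmu(\A_1)$ and $\Lmu(\A_2)$ can be realized as $L_{v^{(k)}}$ with the $v^{(k)}$ simultaneously satisfying the resource bounds required to feed them into Theorem~\ref{thm:maintheorem}; this is handled by Lemma~\ref{lem:incrseq} combined with the computable weakly generic configuration of Proposition~\ref{prop:constrgeneric}.
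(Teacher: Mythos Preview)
Your proposal contains a genuine gap that the paper's proof circumvents. You aim to build a single CA $\A_M$ whose $\mu$-limit set equals $\Lambda_\mu(\A_2)$ when $M\in COF$ and $\Lambda_\mu(\A_1)$ otherwise, but this clean dichotomy is not achievable with the construction you sketch. The obstruction is on the $COF$ side: even when $M$ is co-finite, the test ``$M$ halts on inputs $k,\dots,k+l$ within the time budget'' still fails for infinitely many indices $i$ (take any $(k,l)$ with $k$ below a non-halting input and $k+l$ above it). Those failed $w_i$ are pure $v^{(1)}$-chunks at full density, so $L_w\supseteq L_{v^{(1)}}$ regardless. You therefore do not get $L_w=L_{v^{(2)}}$ in the $COF$ case but at best $L_w=L_{v^{(1)}}\cup L_{v^{(2)}}$, and you have no control over whether this union lies in $\mathcal{P}$. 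Your assertion ``if $M\in COF$ then for all large $i$ the test succeeds'' is simply false for either the naive test or the $(k,l)$-style one, and the appeal to ``the same bookkeeping as in Theorem~\ref{thm:sigma3}'' does not rescue it: there the target in the $COF$ case was that a \emph{specific} word persist, not that an entire arbitrary language be realized exactly.

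The paper proceeds differently, via Proposition~\ref{prop:rice}. Given a single CA $\A$ and a machine $\phi$, every $w_i$ is built from images $\A^{\nu(i)}(c)$ of the computable weakly generic configuration (so $L_w\supseteq L_\mu(\A)$ always), and on \emph{success} a raw prefix of $c$ itself is additionally inserted with relative weight $1/(k+1)$. In the $COF$ case, infinitely many successes occur with bounded $k$, so the raw prefix is non-negligible and forces $L_w=Q_\A^*$; in the non-$COF$ case the successful $k$'s are unbounded, the raw prefix has vanishing density, and $L_w=L_\mu(\A)$. The asymmetry is essential: the $COF$ outcome is the \emph{full shift}, which absorbs the unavoidable mixture. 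Since one cannot know a priori whether the full shift over the common alphabet lies in $\mathcal{P}$, the paper applies this to \emph{both} $\A_1$ and $\A_2$ to get $\B_1,\B_2$, and decides $COF$ by querying the $\mathcal{P}$-oracle on each and comparing the answers. This is a two-query (truth-table) reduction rather than the many-one reduction you attempt; the second query is exactly what sidesteps the obstacle above.
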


To prove this theorem, we will use a reduction to the  problem of being co-finite for a Turing machine  which is $\Sigma_3^0$-complete.

The general idea of the proof is close to what J. Kari did for limit sets, using the following proposition:

\begin{prop}\label{prop:rice}
There is an algorithm that, given a cellular automaton $\A$ and a Turing machine $\phi$, produces a cellular automaton $\B$ such that:
\begin{itemize}
\item if $\phi\in COF$ then $\mul(\B)=Q_{\A}^{\Z}$;
\item else $\mul(\B)=\mul(\A)$.
\end{itemize}
\end{prop}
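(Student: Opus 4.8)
The plan is to build $\B$ by running, in parallel and essentially independently, two sub-automata on two layers of the configuration and combining them in a way that makes the $\mu$-limit set of $\B$ either blow up to the full shift $Q_\A^\Z$ (when $\phi\in COF$) or collapse down to $\mul(\A)$ (otherwise). The first layer is a copy of $\A$, so that its contribution to the $\mu$-limit language is exactly $\mul(\A)$. The second layer hosts a controller built from Theorem~\ref{thm:maintheorem} via a suitably chosen growing computable sequence $(w_i)$ depending on $\phi$: the idea is to mimic the $\Sigma_3$-hard reduction of Theorem~\ref{thm:sigma3}, so that on this layer some designated word $u_\phi$ is persistent if and only if $\phi\in COF$. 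The key design step is the coupling between the two layers: the controller layer should, whenever its computations ``confirm'' $COF$-like behavior for larger and larger thresholds, \emph{flood} the $\A$-layer of the surrounding region with a growing-size block in which every word of $Q_\A^*$ occurs (an enumeration of all words of length $\leq i$ written densely), thereby forcing every word of $Q_\A^*$ into the persistent set; if $\phi\notin COF$ this flooding only ever happens for boundedly many thresholds and hence with vanishing density, so the $\A$-layer's own dynamics prevails and $\mul(\B)=\mul(\A)$.

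Concretely, I would work inside the construction toolbox of Section~\ref{sec:constr} rather than treating Theorem~\ref{thm:maintheorem} as a black box, because we need the controller not merely to produce a persistent language but to \emph{act} on a second layer. So: reuse the cleaning-out and centralization machinery verbatim (Lemmas~\ref{lem:reliable}, \ref{lem:notsegment}, \ref{lem:toobigseg}, Proposition~\ref{prop:outwellseg}), so that the density of cells inside well-sized good segments tends to $1$. Inside a good segment at time $t_i$, instead of (or in addition to) writing copies of $w_i$, the segment runs the Turing machine $T$ of the $COF$-reduction: it decodes $f(i)=(j,k,l)$, simulates $\phi$ on inputs $k,\dots,k+l$ for $\leq 2^i$ steps, and if the run is ``successful'' in the sense of Theorem~\ref{thm:sigma3} it overwrites the $Q_\A$-layer of the whole segment with a dense listing of all words of $Q_\A$ up to some length (call this the ``full'' content); if $i$ is failed it instead lets the $\A$-layer simply iterate $\A$ on the segment's $Q_\A$-content, reinitialized from the generic configuration restricted to that segment. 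One must be slightly careful that iterating $\A$ locally inside a finite segment approximates iterating $\A$ globally well enough in density; this is handled exactly as in Lemma~\ref{lem:generic}/Lemma~\ref{lem:segmentsdothejob} by bookkeeping the density of cells near segment boundaries, which is $o(1)$ since segments have size $\geq i \to\infty$.

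The two directions of the conclusion then follow the pattern of the claims in Section~\ref{sec:constr}. If $\phi\in COF$: as in the claim ``$\phi_j\in COF\Rightarrow u_j\in L_w$'', for every target length $n$ there are infinitely many successful $i$, so for arbitrarily large times a positive-density fraction of cells lies in a ``full'' segment whose $Q_\A$-layer contains every word of $Q_\A$ of length $n$; hence every such word is persistent and $\mul(\B)=Q_\A^\Z$. (Here we also need the $\A$-layer of \emph{non}-full segments, plus the segment-boundary and non-good-segment junk, to not forbid anything — but they cannot forbid, they can only add words, and in any case the full segments alone already force all of $Q_\A^*$ into the persistent set.) If $\phi\notin COF$: as in the claim ``$\phi_j\notin COF\Rightarrow u_j\notin L_w$'', successful $i$'s force the starting point $k$ to be unbounded, and one checks — exactly as in that claim, via the decreasing-density argument — that the density of cells in ``full'' segments tends to $0$; therefore, with density $\to 1$, the $Q_\A$-layer inside good segments is running (a local approximation of) $\A$ from a generic configuration, so by Lemma~\ref{lem:segmentsdothejob}-type reasoning $\mul(\B)=\mul(\A)$. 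Finally, the map $(\A,\phi)\mapsto\B$ is clearly algorithmic. The main obstacle I expect is \textbf{the faithfulness of the $\A$-layer simulation inside finite, dynamically-resized segments}: one must ensure that merging events and the periodic reinitialization do not destroy the correspondence between $d_{\B^t(c)}(u)$ restricted to the $\A$-layer and $\A^t\mu([u])$, which requires choosing the reinitialization schedule and segment size bounds ($K_i=i^3$ vs.\ $t_i=\lceil K^{\sqrt i}\rceil$) so that each segment has time to run $\A$ long enough for the density of $u$ to stabilize near $d_{\A^t(c)}(u)$ before the next merge — the same quantitative balancing already done in Section~\ref{sec:constr}, but now applied to an arbitrary $\A$ rather than to a fixed writing routine.
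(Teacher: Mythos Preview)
Your overall reduction strategy --- success/failure based on simulating $\phi$ on blocks of consecutive inputs, exactly as in Theorem~\ref{thm:sigma3} --- matches the paper's. But you take a harder route than necessary, and the gap you yourself flag as ``the main obstacle'' is real and is not resolved by parameter tuning alone.

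The paper does \emph{not} open up the Section~\ref{sec:constr} toolbox; it uses Theorem~\ref{thm:maintheorem} purely as a black box. The idea you are missing is Proposition~\ref{prop:constrgeneric}: there is a \emph{computable} weakly generic configuration $c_{WG}$. This lets one put the simulation of $\A$ \emph{inside the computation of the word $w_i$} rather than running $\A$ live on a separate layer. Concretely, with $\nu(i)=\lfloor\log i\rfloor$, $u_i=(c_{WG})_{[0,\nu(i)-1]}$ and $v_i=\bigl(\A^{\nu(i)}(c_{WG})\bigr)_{[0,\nu(i)-1]}$, the paper sets $w_i=u_iv_i^{k}$ when $i$ is successful (where $f(i)=(k,l)$) and $w_i=v_i^{k+1}$ when $i$ is failed. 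By Lemma~\ref{lem:generic} and Remark~\ref{rem:imggen} the words $v_i$ alone already reproduce $L_\mu(\A)$, while the prefixes $u_i$ of a generic configuration contain every word over $Q_\A$ with controlled density; the $k$-dependent weighting then makes the $u_i$ contribution persistent iff successful $i$'s occur with bounded $k$, i.e.\ iff $\phi\in COF$. No second layer, no live $\A$-simulation, no merging-versus-reinitialization headache.

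Your two-layer scheme, by contrast, has a structural hole that is not merely quantitative: once a segment has been \emph{flooded}, the original $\A$-layer data is gone, so there is nothing to ``reinitialize from'' at the next failed $i$. Saying ``reinitialized from the generic configuration restricted to that segment'' does not say where that configuration is stored. You would need either a third, write-once layer remembering each cell's initial $Q_\A$-state (plus an argument that running $\A$ for $\Theta(i)$ steps inside a size-$\leq i^3$ segment from that stored data approximates $d_{\A^t(c)}$ well enough --- doable, but absent from your outline), or else to compute a generic configuration from scratch inside each segment --- which is precisely the paper's trick via $c_{WG}$, and at that point the separate $\A$-layer is superfluous.
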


Using this property, whose proof will follow, we can prove Theorem~\ref{thm:rice}. 

\begin{proof}[Proof of Theorem~\ref{thm:rice}]
  Given some non trivial property $\mathcal{P}$ of $\mu$-limit sets,
  consider a Turing machine $\phi$ and cellular automata $\A_1$ and
  $\A_2$ such that exactly one among $\A_1$ and $\A_2$ has property
  $\mathcal{P}$. We consider they have a common alphabet, which is
  always possible by increasing their alphabets if necessary.  We
  reduce the decision problem $\mathcal{P}$ to $COF$ as follows.
  First denote $\B_1$ and $\B_2$ the cellular automata given by
  Proposition~\ref{prop:rice} for respectively $\phi$ and $\A_1$ and
  $\phi$ and $\A_2$. Then using the oracle for $\mathcal{P}$ on $\B_1$
  and $\B_2$, we can decide if the answer is the same or not. The
  first case corresponds necessarily to $\phi\notin COF$ and the
  second to $\phi\in COF$. So we decided $COF$ on $\phi$. 
\end{proof}

Now we prove Proposition~\ref{prop:rice}.

\begin{proof}[Proof of Proposition~\ref{prop:rice}]

The proof will have similarities with the one of Theorem~\ref{thm:sigma3}. Denote for this proof $c=c_{WG}$.

It mainly relies on Theorem~\ref{thm:maintheorem}. Again, we make a reduction to the problem of being co-finite for a Turing machine which is $\Sigma_3^0$-complete. Let us describe the computable sequence $w=(w_i)_{i\in\N}$ associated to it. First consider a computable enumeration $f$ of $\N^2$, such that for any $(k,l)\in\N^2$ there exist infinitely many $i\in\N$ with $f(i)=(k,l)$. Denote $T$ the Turing machine that produces $w$.

 For $(k,l)=f(i), i\in\N$, the idea is to simulate the computation of $\phi$ on some sequence of consecutive inputs ($\{k,k+1,\dots,k+l\}$) and output different $w_i$'s whether the machine halts on each input in this sequence or not.  We will say that the sequence is successful if the machine does halt on each input and failed in the other case. Indeed saying that the machine is co-finite means that there exists $k\in\N$ such that all sequences $\{k,k+1,\dots,k+l\}$ are successful. Thus, we will write a witness  of the success of a sequence starting at $k$. We will have to avoid writing a witness for $k$ more than once for each $l$.\\

 More formally, $T$ does the following on input $i$:
\begin{itemize}
\item Compute $(k,l)=f(i)$ and $\nu(i)=\lfloor\log{i}\rfloor$.
\item Compute $i_0=\max\{i'<i,f(i')=(k,l)\}$.
\item Simulate the machine $\phi$ with successive inputs $k,k+1,\dots,k+l$. If one of these simulations does not halt, then stop the simulation after $2^{i}$ steps. In this case, $i$ is said to be failed.
\item If the machine $\phi$ does halt on all these inputs before timestep  $2^{i}$, then denote $\tau$  the exact time used for the whole computation.
\item If $\tau\leq 2^{i_0}$, the whole computation necessarily ended for $i_0$ and again $i$ is said to be failed.
\item Compute $u_i=c_{[o..(\nu(i)-1)]}$ and $v_i=\left(\A^{\nu(i)}(c)\right)_{[o..(\nu(i)-1)]}$.
\item If $i$ is failed, define $w_i=v_i^{k+1}$.
\item In the other case, $i$ is said to be successful: define $w_i=u_iv_i^{k}$.
\end{itemize}

\begin{claim}\label{cla:ricecof}
If $\phi\in COF$ then $L_w=Q_{\A}^{\Z}$.
\end{claim}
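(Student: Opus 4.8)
The plan is to show that when $\phi \in COF$, the growing computable sequence $w = (w_i)$ produces the persistent language $L_w = Q_{\A}^{\Z}$, or more precisely $L_w$ contains every word over $Q_{\A}$, so that after applying Theorem~\ref{thm:maintheorem} we get $\mul(\B) = Q_{\A}^{\Z}$. The key observation is that each successful $i$ contributes a word of the form $w_i = u_i v_i^{k}$ where $u_i = c_{[0..\nu(i)-1]}$ is a prefix of the weakly generic configuration $c = c_{WG}$ of length $\nu(i) = \lfloor\log i\rfloor$, and $v_i$ is a prefix of the same length of $\A^{\nu(i)}(c)$. Since $\phi$ is co-finite, there is some $k_0$ such that $\phi$ halts on all inputs $\geq k_0$; then for $k = k_0$ and every $l \in \N$, infinitely many $i$ satisfy $f(i) = (k_0, l)$, and for $i$ large enough the bound $2^i$ on simulation steps is met and the ``witness already written'' condition $\tau \le 2^{i_0}$ fails (because $i_0$ is the previous index with the same $(k_0,l)$, and $\tau$ is fixed while $2^{i_0}$ grows — wait, one must check the direction carefully). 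So infinitely many $i$ are successful, and along those $i$ we have $w_i = u_i v_i^{k_0}$ with $|u_i| = |v_i| = \nu(i) \to \infty$.

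First I would pin down the set of successful indices. Fix $k_0$ witnessing $\phi \in COF$. For each fixed $l$, the indices $i$ with $f(i) = (k_0, l)$ form an infinite set $i^{(1)} < i^{(2)} < \cdots$; for $i = i^{(m)}$ the value $i_0$ is $i^{(m-1)}$. The computation of $\phi$ on inputs $k_0, \dots, k_0 + l$ halts in some fixed time $\tau_0 = \tau_0(k_0, l)$ independent of $i$, so once $2^i \geq \tau_0$ (true for all large $i^{(m)}$) the simulation completes. The index $i^{(m)}$ is then declared failed precisely when $\tau_0 \le 2^{i^{(m-1)}}$, which happens for all $m$ large enough; so actually only finitely many $i^{(m)}$ are successful for each fixed $l$. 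That is fine — what matters is that there is \emph{at least one} successful index per pair $(k_0, l)$, and since $l$ ranges over all of $\N$, there are infinitely many successful indices overall, with $\nu(i) \to \infty$ along them (any infinite set of integers has $\lfloor \log i \rfloor \to \infty$).

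Next I would show that along these successful indices the words $w_i = u_i v_i^{k_0}$ force every finite word $u \in Q_{\A}^*$ into $L_w$. It suffices to show $d_{w_i}(u) \not\to 0$ along a suitable subsequence of successful indices. Here I would use the defining property of the weakly generic configuration $c_{WG}$ from Proposition~\ref{prop:constrgeneric} and Remark~\ref{rem:imggen}: for any fixed word $u$ and any $k$, the density $d_{\A^k(c)}(u)$ is bounded below and above by constants times $|Q_{\A}|^{-|u|}$ — in particular every word of $L(c)$, hence every word over $Q_{\A}$, has strictly positive density in long prefixes of $c$ and, by Remark~\ref{rem:imggen}, in long prefixes of $\A^k(c)$ for every $k$. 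Since $v_i$ is a prefix of $\A^{\nu(i)}(c)$ of length $\nu(i) \to \infty$, for any fixed $u$ and $i$ large enough (so that $\nu(i) \geq |Q_{\A}|^{2|u|}$) we get $d_{v_i}(u) \geq A|Q_{\A}|^{-|u|}$, a constant. The word $w_i = u_i v_i^{k_0}$ is a concatenation of $k_0 + 1$ blocks each of length $\nu(i)$, so $d_{w_i}(u) \geq \frac{k_0}{k_0+1} \cdot d_{v_i}(u) - o(1) \geq \frac{1}{2} A |Q_{\A}|^{-|u|}$ for $i$ large; hence $d_{w_i}(u) \not\to 0$, so $u \in L_w$.

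The main obstacle — really the only delicate point — is the bookkeeping around which indices are successful: one must confirm that the ``failed again if $\tau \le 2^{i_0}$'' clause does not eliminate \emph{all} successful indices for a given $(k_0, l)$, i.e. that the \emph{first} index with $f(i) = (k_0, l)$ whose step budget $2^i$ suffices to complete the computation is declared successful (its $i_0$ being the previous, smaller-budget index for which the simulation timed out, so $\tau > 2^{i_0}$). Once that is confirmed, the density estimate is routine given Proposition~\ref{prop:constrgeneric} and Remark~\ref{rem:imggen}. Finally, since $w_i \in Q_{\A}^*$ for successful $i$ and $\nu(i)$ is negligible compared to the block length in the failed case too, one also checks $L_w \subseteq Q_{\A}^*$, so $L_w = Q_{\A}^*$ and therefore, by Theorem~\ref{thm:maintheorem}, $\mul(\B) = Q_{\A}^{\Z}$.
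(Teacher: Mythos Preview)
Your bookkeeping on the successful indices is correct and in fact more careful than the paper's, which simply asserts that there are infinitely many successful $i$ with $f(i)=(k,l)$ for some $l$. However, the density estimate contains a genuine error.

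You assert that ``for any fixed word $u$ and any $k$, the density $d_{\A^k(c)}(u)$ is bounded below and above by constants times $|Q_{\A}|^{-|u|}$'', and then use this to bound $d_{v_i}(u)$ from below. This is false. Proposition~\ref{prop:constrgeneric} bounds densities in prefixes of $c$ itself, while Remark~\ref{rem:imggen} only relates the density in a prefix of $\A^k(c)$ to the full density $d_{\A^k(c)}(u)$, which depends on $k$ and need not be bounded below uniformly in $k$. Indeed, if $u\notin L_\mu(\A)$ then $d_{\A^k(c)}(u)\to 0$ as $k\to\infty$ by Lemma~\ref{lem:generic}, and hence $d_{v_i}(u)\to 0$ along your successful sequence since $\nu(i)\to\infty$. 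Concretely, if $\A$ is $\mu$-nilpotent with limit state $\std$ (exactly the case used later in the proof of Theorem~\ref{thm:simplyconvergent}), then $d_{v_i}(u)\to 0$ for every $u$ not of the form $\std^n$, and your lower bound $\tfrac{k_0}{k_0+1}\,d_{v_i}(u)$ yields nothing.

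The fix is to use the block $u_i$ rather than $v_i$. This is precisely why the construction puts $u_i=c_{[0..\nu(i)-1]}$ in $w_i$ in the successful case: since $u_i$ is a prefix of the weakly generic configuration $c$ itself, Proposition~\ref{prop:constrgeneric} gives $d_{u_i}(u)\geq A|Q_\A|^{-|u|}$ once $\nu(i)\geq |Q_\A|^{2|u|}$. The block $u_i$ occupies a $\tfrac{1}{k_0+1}$ fraction of $w_i=u_iv_i^{k_0}$, so this yields $d_{w_i}(u)\geq \tfrac{A}{2(k_0+1)}|Q_\A|^{-|u|}$, which is the paper's bound.
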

\begin{claimproof}

  In this case, there exists $k\in \N$ such that $\forall l\geq k$, $\phi$ halts on input $l$. This means that for any $l\in\N$, there exists $i_l\in\N$ such that $f(i_l)=(k,l)$ and $2^{i_l}$ timesteps are enough to simulate $\phi$ on inputs $k,k+1,\dots,k+l$ and verify that it halts in each case. Thus there are infinitely many successful $i$'s with $f(i)=(k,l)$ for some $l\in\N$.

 Take any word $u\in Q_{\A}^{*}$. For any such large enough successful $i$ ($\nu(i)\geq |Q_{\A}|^{2|u|}$), we hence have  $d_{w_i}(u)\geq\frac{A}{2(k+1)|Q_{\A}|^{|u|}}$ (with $A$ from Proposition~\ref{prop:constrgeneric}) which is a constant as $k$ is fixed.

\end{claimproof}

\begin{claim}\label{cla:ricenoncof}
If $\phi\notin COF$ then $L_w=\Lmu(\A)$.
\end{claim}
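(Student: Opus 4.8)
The plan is to show that when $\phi\notin COF$, the persistent language $L_w$ coincides with $\Lmu(\A)$ by bounding, for each word $u$, the asymptotic density of $u$ in the sequence $(w_i)$. The key structural observation is that every $w_i$ is either of the form $v_i^{k+1}$ (failed case) or $u_iv_i^k$ (successful case), where $v_i=\bigl(\A^{\nu(i)}(c)\bigr)_{[0,\nu(i)-1]}$ is a prefix of an image of the computable weakly generic configuration $c=c_{WG}$, and $u_i=c_{[0,\nu(i)-1]}$ is a prefix of $c$ itself. Since $\nu(i)=\lfloor\log i\rfloor\to\infty$, for any fixed word $u$ the density $d_{v_i}(u)$ is, up to bounded multiplicative error (Remark~\ref{rem:imggen}), equal to $d_{\A^{\nu(i)}(c)}(u)$, and similarly $d_{u_i}(u)$ is comparable to $d_c(u)$. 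So the behaviour of $d_{w_i}(u)$ is governed by these two densities plus a vanishing boundary term coming from the junctions between the $k+1$ (or $k$) copies.

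First I would prove the inclusion $\Lmu(\A)\subseteq L_w$. Take $u\in\Lmu(\A)=L_\mu(\A)$; by Lemma~\ref{lem:generic} applied to the weakly generic $c$, we have $d_{\A^t(c)}(u)\not\to 0$, so there is $\epsilon>0$ and infinitely many $t$ with $d_{\A^t(c)}(u)\ge\epsilon$. The point is to realize such a $t$ as $\nu(i)$ for infinitely many \emph{failed} indices $i$ with $f(i)=(k,l)$ for suitable $(k,l)$. Since $\phi\notin COF$, for every $k$ there is some $l\ge k$ on which $\phi$ does not halt, so the triple-counting enumeration $f$ guarantees infinitely many failed $i$; among these, $\nu(i)=\lfloor\log i\rfloor$ takes every sufficiently large value. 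For a failed $i$ with $f(i)=(k,l)$, $w_i=v_i^{k+1}$, so $d_{w_i}(u)\ge \tfrac{1}{k+1}d_{v_i}(u)$ minus a boundary correction of order $|u|/\nu(i)$; using Remark~\ref{rem:imggen} this is at least $\tfrac{A}{4(k+1)}d_{\A^{\nu(i)}(c)}(u)$ for $\nu(i)$ large, hence bounded below by a positive constant for infinitely many $i$ (fixing one such $k$). Therefore $d_{w_i}(u)\not\to 0$, i.e. $u\in L_w$.

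Next, the reverse inclusion $L_w\subseteq\Lmu(\A)$: suppose $u\notin L_\mu(\A)$, so $d_{\A^t(c)}(u)\to 0$ and also (trivially) $d_{\A^0(c)}(u)=d_c(u)\to 0$ is not what we need — instead I note $d_c(u)$ is just some fixed number, but $u_i$ only occurs \emph{once} inside $w_i=u_iv_i^k$ in the successful case, so $d_{u_i}(u)$ contributes at most $|u_i|/|w_i|\le 1/(k+1)$ to $d_{w_i}(u)$, which still need not vanish when $k=0$. Here is where care is needed: when $k=0$ and $i$ successful, $w_i=u_i$ is a prefix of $c$, and we must check $d_{u_i}(u)\to 0$; this does \emph{not} follow from $u\notin L_\mu(\A)$ in general. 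The resolution is that $k=0$ successful indices force $\phi$ to halt on input $0$ within a bounded comparison, and more importantly, the ``avoid writing a witness for $k$ more than once for each $l$'' mechanism via $i_0=\max\{i'<i,f(i')=(k,l)\}$ and the test $\tau\le 2^{i_0}$ ensures that for each pair $(k,l)$ at most one $i$ is declared successful. So the successful indices $i$ with $f(i)=(k,l)$, summed over all $l$, give at most one contribution per $l$; but for $u$ to persist we would need $d_{w_i}(u)\ge\epsilon$ for infinitely many $i$, forcing infinitely many successful indices, hence (with $k$ ranging) either unbounded $\nu(i)$ with $d_{\A^{\nu(i)}(c)}(u)$ or $d_{c}(u)$-type contributions controlled by $1/(k+1)$. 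Splitting into the cases $f(i)=(k,l)$ with $k$ bounded versus $k$ large: for $k$ large, $d_{w_i}(u)\le \tfrac{1}{k+1}+O(|u|/\nu(i))<\epsilon$; for $k$ bounded, only finitely many successful indices exist per $(k,l)$ and only finitely many $l$ can matter before $\phi$ fails, as in the proof of Theorem~\ref{thm:sigma3}; and for failed $i$, $d_{w_i}(u)\le d_{v_i}(u)+O(|u|/\nu(i))\le 2B\,d_{\A^{\nu(i)}(c)}(u)+o(1)\to 0$. Combining, $d_{w_i}(u)\to 0$, so $u\notin L_w$.

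The main obstacle is the subtlety in the reverse inclusion: isolating exactly why the single prefix-of-$c$ block $u_i$ in the successful case cannot cause $u$ to persist, and why the ``one witness per $(k,l)$'' bookkeeping is essential — without it, a fixed successful $(k,l)$ recurring infinitely often could pump a fixed-density word into $L_w$ that is not in $\Lmu(\A)$. I would structure the argument exactly parallel to the $k$-bounded/$k$-large case split of Theorem~\ref{thm:sigma3}, with the extra input that $d_{\A^{\nu(i)}(c)}(u)\to 0$ (from $u\notin L_\mu(\A)$ and Lemma~\ref{lem:generic}) kills the $v_i$-blocks uniformly once $\nu(i)$ is large.
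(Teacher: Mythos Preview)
Your overall strategy matches the paper's: the $k$-bounded versus $k$-large split for $L_w\subseteq\Lmu(\A)$, together with the observation that for bounded $k$ only finitely many indices can be successful (because $\phi\notin COF$ forces a non-halting input $l_0\ge k_0$, and the $\tau\le 2^{i_0}$ bookkeeping leaves at most one success per pair). However, two concrete points need fixing.

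\textbf{Hard direction, $k$ large.} Your bound $d_{w_i}(u)\le\frac{1}{k+1}+O(|u|/\nu(i))$ is wrong as stated. Even for successful $i$, one has $w_i=u_iv_i^{k}$, so
\[
d_{w_i}(u)\le \frac{1}{k+1}+d_{v_i}(u)+O\!\left(\frac{|u|}{\nu(i)}\right),
\]
and the $d_{v_i}(u)$ term does not vanish just because $k$ is large. The paper's bound for $k\ge k_0$ is exactly $\frac{1}{k_0+1}+d_{\A^{\nu(i)}(c)_{[0,\nu(i)-1]}}(u)$, and both summands are then made small: the first by choice of $k_0$, the second because $u\notin\Lmu(\A)$ gives $d_{\A^{\nu(i)}(c)}(u)\to 0$ (and Remark~\ref{rem:imggen} transfers this to the prefix). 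Your case~3 already contains this second ingredient; you just have to feed it into case~1 as well.

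\textbf{Easy direction.} Restricting to \emph{failed} indices is an unnecessary detour and your claim ``among these, $\nu(i)$ takes every sufficiently large value'' is not justified by the hypotheses on $f$ (the construction only requires that every pair has infinite preimage, with no control on how $\nu$ distributes over those preimages). The paper avoids this entirely: for \emph{every} $i$ (failed or successful with $k\ge 1$), the $v_i$-blocks occupy at least half of $w_i$, so
\[
d_{w_i}(u)\ge \tfrac{1}{2}\,d_{\A^{\nu(i)}(c)_{[0,\nu(i)-1]}}(u)
\]
up to boundary terms. Since $\nu(i)=\lfloor\log i\rfloor$ is surjective onto all large integers over the full sequence, the right-hand side does not tend to $0$ when $u\in\Lmu(\A)$. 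The only exception is successful $i$ with $k=0$, but as you observed there are at most finitely many of those (at most one per $l<l_0$). So drop the restriction to failed indices and argue uniformly in $i$.
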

\begin{claimproof}
Here, if we take an infinite number of successful sequences, they necessarily concern unbounded values of the starting point $k$. We will use the fact that the space covered by prefixes of an image of $c$ decreases when $k$ increases.

 Take $u\notin\Lmu(\A)$ and $\epsilon >0$. 

 There exists $k_0\in\N$, such that $\frac{1}{k_0+1}<\epsilon/2$. As $\phi$ is not co-finite, there exists $l_0\geq k_0$ such that $\phi$ does not halt on input $l_0$. There are less than ${l_0}^2$ pairs $(k,l)\in\N^2$ with $k\leq k_0$ and $k+l\leq l_0$. 
 Denote $i_0$ the smallest integer such that for any $f(i)=(k,l)$ with $i\geq i_0$ and $k\leq k_0$:
\begin{itemize}
\item either $\phi$ does not halt on some input between $k$ and $k+l$;
\item or there exists $i'\leq i_0$ with $f(i')=(k,l)$ such that $\phi$ halts on all these inputs in less than $2^{i'}$ timesteps.
\end{itemize}

Thus, when $f(i)=(k,l)$ with $i\geq i_0$ and $k\leq k_0$,  $i$ is failed and $w_i=(\left(\A^{\nu(i)}(c)\right)_{[o..(\nu(i)-1)]})^{k+1}$.

Take $i_1\geq i_0$ such that $\forall i\geq i_1,d_{\A^{\nu(i)}(c)}(u)<\epsilon/(4B)$, which exists since $u\notin\Lmu(\A)$.

Now take $i\geq i_1$ with $f(i)=(k,l)\in\N^2$.
\begin{itemize}
\item If $k\geq k_0$ then  $d_{w_i}(u)<\frac{1}{k_0+1}+d_{\A^{\nu(i)}(c)_{[0..\nu(i)]}}(u)<\epsilon$.
\item If $k\leq k_0$, then $i$ is failed and $d_{w_i}(u)<d_{\A^{\nu(i)}(c)_{[0..\nu(i)]}}(u)<\epsilon$.
\end{itemize}

Hence, thanks to Remark~\ref{rem:imggen}, we conclude that $L_w\subseteq\Lmu(\A)$.

The other direction is easier: for $u\in\Lmu(\A)$ and $i\in\N$, $d_{w_i}(u)\geq \frac{1}{2}d_{\A^{\nu(i)}(c)_[0..\nu(i)]}(u)$ which does not tend to $0$.

\end{claimproof}

\end{proof}

In the next section, we will deal more specifically with $\mu$-nilpotency. We leave open the question of properties of higher complexity. For example, being a shift of finite type, a sofic shift or containing a weakly generic configuration\dots{} In particular, it is not known whether there exist properties of arbitrarily high complexity.

\subsection{$\mu$-nilpotency }

Recall that a CA is $\mu$-nilpotent if and only if its $\mu$-limit set is a singleton.

\begin{prop}
  Let $\mu$ be a computable measure. The set of $\mu$-nilpotent CA is $\Pi_3^0$.
\end{prop}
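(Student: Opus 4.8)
The plan is to express $\mu$-nilpotency as an arithmetical predicate of the appropriate complexity, reusing the upper bound machinery already established in Theorem~\ref{thm:equica}. Recall that for a computable measure $\mu$ there is a computable function $g$ with $\bigl|\A^t\mu([u]_0)-g(u,\epsilon,t)\bigr|\leq\epsilon$ for all $u,\epsilon,t$, and that $u\notin L_\mu(\A) \iff \forall\epsilon>0,\exists t_0,\forall t\geq t_0,\ g(u,\epsilon,t)\leq\epsilon$, which is a $\Pi_2^0$ predicate in $u$ (and uniformly computable in a code for $\A$). Consequently $u\in L_\mu(\A)$ is $\Sigma_2^0$, and the statement ``$L_\mu(\A)$ is a $\Sigma_3^0$ set'' from Theorem~\ref{thm:equica} is exactly this observation.

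Next I would write out what it means for $\A$ to be $\mu$-nilpotent: by definition $\A$ is $\mu$-nilpotent iff $L_\mu(\A)=a^*$ for some state $a\in Q_\A$, equivalently iff there exists $a\in Q_\A$ such that (i) $a\in L_\mu(\A)$ — actually this is automatic since $L_\mu(\A)$ is always nonempty and shift-invariant, but one can include it harmlessly — and (ii) every word $u$ containing a letter different from $a$ satisfies $u\notin L_\mu(\A)$. Since $Q_\A$ is finite, the quantifier ``$\exists a\in Q_\A$'' is bounded and costs nothing. The real content is clause (ii): it reads
\[
\exists a\in Q_\A\ \forall u\in Q_\A^*\ \bigl(u\notin a^* \implies u\notin L_\mu(\A)\bigr).
\]
The implication's hypothesis $u\notin a^*$ is decidable, and $u\notin L_\mu(\A)$ is $\Pi_2^0$ as recalled above; so $\forall u\,(\text{decidable} \implies \Pi_2^0)$ is $\Pi_2^0$, and prefixing the bounded $\exists a$ keeps it $\Pi_2^0$. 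That would give a $\Pi_2^0$ bound, which is stronger than claimed, so I should double-check whether $\mu$-nilpotency really needs to also assert that $a^*\subseteq L_\mu(\A)$, i.e. that $a$ itself persists. If one does require the exact equality $L_\mu(\A)=a^*$ rather than merely $L_\mu(\A)\subseteq a^*$, then one must also say $a\in L_\mu(\A)$, which is $\Sigma_2^0$; combined with the $\Pi_2^0$ part under the bounded $\exists a$, the predicate becomes $\exists a\,(\Sigma_2^0 \wedge \Pi_2^0)$, which lies in $\Sigma_3^0\cap\Pi_3^0$ and in particular is $\Pi_3^0$. This is the safe route and matches the stated bound exactly; I would present it this way, noting the $\Sigma_2^0$ contribution is the obstacle to a cleaner bound.

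So the proof I would write is short: recall from (the proof of) Theorem~\ref{thm:equica} that ``$u\notin L_\mu(\A)$'' is $\Pi_2^0$ uniformly in $\A$ and $u$, and hence ``$u\in L_\mu(\A)$'' is $\Sigma_2^0$; then observe
\[
\A\text{ is }\mu\text{-nilpotent} \iff \exists a\in Q_\A\ \Bigl(a\in L_\mu(\A)\ \wedge\ \forall u\in Q_\A^*\bigl(u\notin a^*\implies u\notin L_\mu(\A)\bigr)\Bigr);
\]
the first conjunct is $\Sigma_2^0$, the second is $\Pi_2^0$ (a decidable-premise implication with $\Pi_2^0$ conclusion, universally quantified over the recursive set $Q_\A^*$), their conjunction is $\Sigma_3^0\cap\Pi_3^0$, and the bounded quantifier $\exists a\in Q_\A$ does not increase complexity; therefore the set of $\mu$-nilpotent CA is $\Pi_3^0$ (indeed also $\Sigma_3^0$). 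The only point requiring care — the ``main obstacle'' such as it is — is making sure the computable-measure hypothesis is genuinely used to get the uniform $g$, and being precise that all the bounds are uniform in an index for $\A$ so that quantifying over CA is legitimate; everything else is bookkeeping with the arithmetical hierarchy.
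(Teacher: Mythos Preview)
Your argument contains a quantifier-counting error that propagates through the whole analysis. The formula you quote,
\[
u\notin L_\mu(\A) \iff \forall\epsilon>0,\ \exists t_0,\ \forall t\geq t_0,\ g(u,\epsilon,t)\leq\epsilon,
\]
has three alternating unbounded quantifiers $\forall\,\exists\,\forall$, so it is $\Pi_3^0$, not $\Pi_2^0$. This is precisely why Theorem~\ref{thm:equica} says $L_\mu(\A)$ is $\Sigma_3^0$, not $\Sigma_2^0$; your sentence ``Consequently $u\in L_\mu(\A)$ is $\Sigma_2^0$, and the statement `$L_\mu(\A)$ is a $\Sigma_3^0$ set' \ldots\ is exactly this observation'' is internally inconsistent. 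Your own unease (``That would give a $\Pi_2^0$ bound, which is stronger than claimed'') was the right instinct: a $\Pi_2^0$ bound would contradict the $\Pi_3^0$-hardness established later in Theorem~\ref{thm:hardnilpo}.

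Once the count is corrected, your route does work: $\forall u\,(u\notin a^*\Rightarrow u\notin L_\mu(\A))$ is a universal quantifier in front of a $\Pi_3^0$ predicate, hence $\Pi_3^0$, and the bounded $\exists a\in Q_\A$ keeps it $\Pi_3^0$. The extra conjunct $a\in L_\mu(\A)$ is unnecessary (if every $b\neq a$ has $\A^t\mu([b])\to 0$, then $\A^t\mu([a])\to 1$ since the single-letter probabilities sum to $1$), so you can drop it. The paper takes a more direct path: it observes that $\mu$-nilpotency is equivalent to
\[
\forall\epsilon>0,\ \exists t_0,\ \forall t\geq t_0,\ \exists q_0\in Q_\A,\ \A^t\mu([q_0])\geq 1-\epsilon,
\]
where the innermost $\exists q_0$ is bounded and the whole thing is visibly $\Pi_3^0$. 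This avoids passing through the $L_\mu$ membership predicate altogether.
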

\begin{proof}
  If a CA is $\mu$-nilpotent then the only configuration in the $\mu$-limit set is necessarily of the form ${}^\omega q^\omega$ for some state $q$. Hence, $\mu$-nilpotency is equivalent to the following property:
  \[\forall \epsilon>0, \exists t_0, \forall t\geq t_0, \exists q_0,\ F^t\mu(q_0)\geq 1-\epsilon\]
  Since $\mu$ is computable and the number of states of a CA is finite, the predicate ``${\exists q_0, F^t\mu(q_0)\geq 1-\epsilon }$'' (depending on $t$, $F$ and $\epsilon$) is recursive, which concludes the proof.
\end{proof}

The following theorem is a direct consequence of the Rice Theorem (\ref{thm:rice}) proved earlier.

\begin{theorem}
  \label{thm:hardnilpo}
Let $\mu$ be some Bernoulli measure on the fullshift, the problem of being $\mu$-nilpotent for a cellular automaton is $\Pi_3^0$-complete.
\end{theorem}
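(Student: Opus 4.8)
The plan is to combine the two preceding results: the $\Pi_3^0$ upper bound from the previous proposition together with the Rice-type lower bound of Theorem~\ref{thm:rice}. First I would observe that $\mu$-nilpotency is indeed a property of $\mu$-limit sets in the sense of the formalism introduced above: it is precisely the family $\{q_i^{\Z} : i\in\N\}$, so whether a CA is $\mu$-nilpotent depends only on its $\mu$-limit set. Hence Theorem~\ref{thm:rice} applies, provided we check that this property is nontrivial.

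Next I would argue nontriviality. On one hand, the identity CA (or any CA with a spreading quiescent state) is $\mu$-nilpotent for a Bernoulli measure, so some $\mu$-limit set has the property; more carefully, a CA with a spreading state $s$ to which $\mu$ gives positive weight has $\mu$-limit set $\{{}^\omega s^\omega\}$ (see the example in Section~\ref{sec:mulimites}). On the other hand, Theorem~\ref{thm:maintheorem} (or the earlier examples, e.g. the elementary CA $184$) produces a CA whose $\mu$-limit set is not a singleton, witnessing that some $\mu$-limit set fails the property. Since the property is neither always true nor always false, it is nontrivial, and Theorem~\ref{thm:rice} gives that $\mu$-nilpotency is $\Pi_3^0$-hard for the uniform Bernoulli measure. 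For an arbitrary full-support Bernoulli measure $\mu$ on the fullshift, the construction behind Theorem~\ref{thm:rice} already works for every such $\mu$ (Proposition~\ref{prop:rice} and Theorem~\ref{thm:maintheorem} are stated uniformly over full-support Bernoulli measures), so the hardness is uniform in $\mu$.

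Finally, the matching upper bound: the previous proposition shows that for any computable measure $\mu$ the set of $\mu$-nilpotent CA is $\Pi_3^0$, and every Bernoulli measure with rational (or computable) coefficients is computable. Combining the $\Pi_3^0$ membership with the $\Pi_3^0$-hardness yields $\Pi_3^0$-completeness, which is the claim. The only genuine subtlety — and the step I would present with the most care — is verifying that $\mu$-nilpotency really is a \emph{property of $\mu$-limit sets} (not merely a property \emph{concerning} them, in the distinction drawn in the discussion after the definition of properties), and that its nontriviality holds for the specific class of measures in the statement; everything else is an immediate invocation of Theorem~\ref{thm:rice} and the preceding proposition.
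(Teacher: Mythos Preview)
Your proposal is correct and follows exactly the paper's approach: the paper states the theorem as ``a direct consequence of the Rice Theorem'' (Theorem~\ref{thm:rice}), and you supply the natural details (nontriviality of the property, plus the $\Pi_3^0$ upper bound from the preceding proposition). One small slip to clean up: the identity CA is \emph{not} $\mu$-nilpotent (its $\mu$-limit set is the full shift since $\A^t\mu=\mu$), but your spreading-state example is a valid witness, so the argument stands.
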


\begin{prop}\label{prop:equica}
  Let $\mu$ be a $\sigma$-ergodic measure of full support. Then we have:
  \begin{itemize}
  \item the set of $\mu$-nilpotent CA with a persistent state is co-recursively enumerable;
  \item the set of $\mu$-nilpotent CA with equicontinuous points is $\Sigma_2^0$.
  \end{itemize}
\end{prop}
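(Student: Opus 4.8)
The plan is to derive both items from Proposition~\ref{prop:ergowall} (applicable since $\mu$ is $\sigma$-ergodic with full support), combined with the fact \cite{Kurka-1997} that a one-dimensional CA has equicontinuity points if and only if it admits a wall. In each case I would produce or guess a wall $w$, use Proposition~\ref{prop:ergowall} to describe $\Lmu(F)$ via the (finite) limit cycles of the periodic configurations ${}^\omega(wu)^\omega$, and then just count quantifiers; the whole difference between the two items is whether the wall is handed over by the local rule (permanent state) or merely known to exist (equicontinuity points).

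\textbf{First item.} If $a$ is a permanent state of $F$ --- a decidable condition on the local rule --- I would first observe that $w=a^{\max(1,r)}$ (with $r$ the radius of $F$) is a wall: by permanence the cells under $w$ stay in state $a$ forever, and as they form a block of width at least $r$ no neighbourhood of a cell strictly on one side of the block ever reaches the other side, so each side evolves from itself and $w$ only. Thus $F$ has equicontinuity points and, by Theorem~\ref{thm:equica} (equivalently, directly by Proposition~\ref{prop:ergowall}), $\Lmu(F)$ is recursively enumerable. Next, since a cell in state $a$ stays in state $a$, one has $[a^n]_0\subseteq F^{-t}([a^n]_0)$ for all $t$, so (as $\mu$ has full support) $F^t\mu([a^n]_0)\geq\mu([a^n]_0)>0$ and hence $a^n\in\Lmu(F)$ for every $n$, i.e.\ $a^*\subseteq\Lmu(F)$. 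Consequently, if $F$ is $\mu$-nilpotent then $\Lmu(F)$ is a singleton language containing $a$, which forces $\Lmu(F)=a^*$ (in particular $a$ is the only permanent state). Therefore, for a CA with a permanent state $a$, $\mu$-nilpotency amounts to: $v\notin\Lmu(F)$ for every $v\in Q^*\setminus a^*$. This is a computable universal quantification over the $\Pi_1^0$ predicate ``$v\notin\Lmu(F)$'', hence $\Pi_1^0$; intersecting with the decidable class of CA with a permanent state keeps it co-recursively enumerable.

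\textbf{Second item.} Now a wall is not effectively available and there is no permanent state to pin down a candidate quiescent state, so the central step I would prove, again from Proposition~\ref{prop:ergowall}, is the characterization: if $w$ is a wall for $F$, then $F$ is $\mu$-nilpotent iff there is a \emph{quiescent} state $a$ such that for every word $u$ one has $F^t\bigl({}^\omega(wu)^\omega\bigr)=a^\Z$ for some $t$. For the forward direction, Proposition~\ref{prop:ergowall} says every word of $\Lmu(F)$ occurs in the limit cycle of some ${}^\omega(wu)^\omega$; if $\Lmu(F)=a^*$ then for each $u$ every configuration of that cycle has language in $a^*$, hence equals $a^\Z$, so the cycle is $\{a^\Z\}$, giving that the orbit reaches $a^\Z$ and that $F(a^\Z)=a^\Z$. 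For the converse, if $a$ is quiescent and each such orbit reaches $a^\Z$, the limit cycle of every ${}^\omega(wu)^\omega$ is $\{a^\Z\}$, so Proposition~\ref{prop:ergowall} gives $\Lmu(F)\subseteq a^*$, while $a^\Z$ being an $F$-fixed configuration reachable from arbitrarily long periods $wu$ gives $a^n\in\Lmu(F)$ for all $n$, so $\Lmu(F)=a^*$. Combining this with the wall criterion, ``$F$ has equicontinuity points and is $\mu$-nilpotent'' becomes equivalent to
\[\exists w\ \bigl[\, w\text{ is a wall for }F\ \wedge\ \exists a\ \text{quiescent}\ \forall u\ \exists t\ :\ F^t\bigl({}^\omega(wu)^\omega\bigr)=a^\Z\,\bigr].\]
In this formula ``$w$ is a wall for $F$'' is $\Pi_1^0$ (for each time step $t$ and each cell $z$ on the left of $w$ one checks the decidable fact that $F^t(c)_z$ is independent of the cells at positions $\geq|w|$ once those at positions $<|w|$ are fixed, and symmetrically on the right); ``$\exists t:F^t({}^\omega(wu)^\omega)=a^\Z$'' is decidable, because the $F$-orbit of a configuration of spatial period $|wu|$ lives in a finite set and is ultimately periodic, so one can detect whether $a^\Z$ occurs; ``$\forall u$'' of a decidable predicate is $\Pi_1^0$; ``$\exists a$ quiescent'' is a finite disjunction (``quiescent'' being decidable); so the bracket is $\Pi_1^0$, and the outermost ``$\exists w$'' makes the whole predicate $\Sigma_2^0$.

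\textbf{Main obstacle.} The substantive part is the characterization lemma of the second item --- notably realizing that $\mu$-nilpotency forces the surviving state to be quiescent, and that ``every spatial period $wu$ flows to $a^\Z$'' is the right condition, decidable per $u$. The reason the second bound is $\Sigma_2^0$ while the first is co-recursively enumerable is structural: in the first item the permanent state supplies the wall, whereas in the second it must be guessed, and placing an existential quantifier over $w$ around an essentially $\Pi_1^0$ condition is exactly what costs one level. A small point to verify is that a $\mu$-nilpotent CA with a permanent state has a unique such state, so the candidate $a$ in the first item is well-defined.
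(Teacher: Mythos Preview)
Your proposal is correct and follows essentially the same approach as the paper: both items are derived from Proposition~\ref{prop:ergowall}, with the complexity gap between them coming precisely from whether the wall is computable from the local rule (permanent state) or must be existentially guessed (equicontinuity). The paper's argument is slightly more terse --- for the first item it phrases non-$\mu$-nilpotency directly as the $\Sigma_1^0$ condition ``some limit cycle of ${}^\omega(wu)^\omega$ contains two distinct words of equal length'', rather than going through the recursive enumerability of $\Lmu(F)$ as you do, and for the second item it writes the same $\exists w,q_0\bigl(\forall z,t\,R\ \wedge\ \forall v\,R'\bigr)$ formula without explicitly isolating the quiescence of the limit state --- but the content is the same, and your additional justification (that $\mu$-nilpotency forces the surviving state to be quiescent, and that reaching $a^\Z$ is decidable per period) only makes the argument cleaner.
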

\begin{proof}
  Using Proposition~\ref{prop:ergowall}, not being $\mu$-nilpotent is equivalent to the existence of different words of same size in the temporal period of some spatially periodic configuration containing a wall. For CA with a persistent state, it is sufficient to test with a wall made of $r$ adjacent persistent states ($r$ being the radius). Hence we can recursively enumerate CA with a persistent state and a pair of different words as said above. The first item of the Proposition follows.
  
  For CA with equicontinuous points, the additional difficulty is that we don't know \textit{a priori} which word is a wall. Testing this costs an additional quantifier. Formally, a CA is $\mu$-nilpotent with an equicontinuous point if and only if
  \[\exists w, q_0\bigl(\forall z,t\ R(w,z,t)\ \wedge\ \forall v\ R'(q_0,w,v)\bigr)\]
  where predicates $R$ and $R'$ are recursive and such that:
  \begin{itemize}
  \item $R(w,z,t)$ checks that $w$ is a wall up to time $t$ and position $z$ and $-z$ (see definition in Section~\ref{sec:complexupper})
  \item $R'(q_0,w,v)$ checks that periodic configuration $wv$ converges to the $q_0$-uniform configuration (exponential time bound is enough to check)
  \end{itemize}
  The second item of the Proposition follows.
\end{proof}

 The definition of $\mu$-nilpotency has been chosen analogously as the definition of nilpotency. But in the case of nilpotent CA, we can show that the limit set contains either a unique uniform configuration or an infinite number of distinct configurations. As this property is false for $\mu$-limit sets, a notion of \emph{weak $\mu$-nilpotency} can be defined. The most natural way is to say a CA is weakly $\mu$-nilpotent when its $\mu$-limit set is finite. Still, some refinements can be considered, such as $\mu$-limit sets containing only uniform configurations or the shift-orbit of one unique periodic configuration.

 In terms of complexity, the alphabet being finite, the second definition (only uniform configurations) is equivalent to classical $\mu$-nilpotency. Thanks to Rice Theorem, other ones are at least as complex, but we need other quantifiers to describe the finite $\mu$-limit set.

\section{Types of convergence towards the limit}
\label{sec:convergence}
\subsection{Simple convergence}

By definition words which are not in the $\mu$-limit language are those whose probability goes to zero as time increases. However, this probability does not always converge for words which are in the $\mu$-limit language. As a consequence, contrary to the limit set, the $\mu$-limit set is generally changed when taking iterates of a given CA.

\begin{theorem}\label{theo:fdeux}
  For any fully supported Bernoulli measure $\mu$, there exist $F$ such that $F$ and $F^2$ do not have the same $\mu$-limit set.
\end{theorem}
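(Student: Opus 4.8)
The plan is to use Theorem~\ref{thm:maintheorem} to engineer a CA $F$ whose $\mu$-limit set differs from that of $F^2$ by exploiting the fact that membership in $L_\mu$ depends on $\limsup$-type behavior and that the construction gives us precise control over \emph{which} words appear at \emph{which} times. Concretely, I would pick two distinct fresh symbols, say $a$ and $b$, in $Q_0$, and design a growing computable sequence $w=(w_i)_i$ so that for even $i$ the word $w_i$ contains many copies of $a$ but no $b$, and for odd $i$ the word $w_i$ contains many copies of $b$ but no $a$ — for instance $w_i = a^i$ when $i$ is even and $w_i=b^i$ when $i$ is odd. Then by Theorem~\ref{thm:maintheorem}(1), there is a CA $\A$ over some $Q\supseteq Q_0$ with $\Lmu(\A)=L_w$, and both $a$ and $b$ are persistent words for $\A$ (their density in $w_i$ fails to go to $0$, staying roughly constant along the even resp.\ odd subsequence).

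Next I would examine the time structure of the construction. In the toolbox CA, during the time window $[t_i,t_{i+1})$ the good segments are filled essentially with copies of $w_i$ (more precisely, $w'_{i-1}$ then $w_i$ then $w'_i$, but with $w=w'$ here everything is $w_i$ up to negligible density). So roughly: for $t$ in the $i$-th window, the density profile of $\A^t(c)$ is dominated by whatever $w_i$ contains, which is all-$a$ for even $i$ and all-$b$ for odd $i$. The key point is that the windows $[t_i,t_{i+1})$ are long (of length $\sim K^{\sqrt{i+1}}-K^{\sqrt{i}}$, which $\to\infty$), so one can arrange that when one iterates $\A$ — i.e.\ takes $F=\A$ and looks at $F^2=\A^2$ — the relevant $\limsup$ still picks up both $a$ and $b$, since doubling the time index leaves us, for infinitely many $t$, inside an even window and, for infinitely many $t$, inside an odd window. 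That alone does \emph{not} separate $\Lmu(F)$ from $\Lmu(F^2)$.

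To actually force a difference, I would instead make the even/odd alternation happen on a \emph{faster} scale relative to the window lengths is the wrong idea; the clean approach is to break the symmetry between the two parity classes. The simplest device: use Theorem~\ref{thm:maintheorem} to build $\A$ realizing $L_w$ where $w$ is designed so that $a$ appears (with non-vanishing density) in $w_i$ exactly when $i$ is even, and $b$ appears exactly when $i$ is odd, and additionally arrange — by inserting into the construction an auxiliary ``parity bit'' tied to the age counter that is already present in every segment — that $\A$ itself, as a map, sends an even-window configuration to (something dominated by) an odd-window content in \emph{one} step whenever $t$ crosses a window boundary, and stays put otherwise. Then along the sequence $t=t_i$ (window boundaries), $\A^{t}(c)$ alternates parity with $i$, so $d_{F^t(c)}(a)$ and $d_{F^t(c)}(b)$ both have $\limsup>0$: both $a,b\in\Lmu(F)$. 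But $F^2=\A^2$ advances two windows-worth at the boundaries, and if one chooses the window lengths $t_{i+1}-t_i$ so that \emph{every} even index is skipped over by the squared dynamics when restricted to a suitable cofinal set of times (e.g.\ by making the odd windows vastly longer than the even ones, so that the orbit of $F^2$ spends, along a cofinal set of times, only in odd windows), then $d_{F^{2t}(c)}(a)\to 0$ while $d_{F^{2t}(c)}(b)\not\to 0$, giving $a\notin\Lmu(F^2)$ but $b\in\Lmu(F^2)\subseteq\Lmu(F)$. Hence $\Lmu(F)\neq\Lmu(F^2)$. The verification that densities behave as claimed goes through Lemma~\ref{lem:generic} and Proposition~\ref{prop:outwellseg} exactly as in the proof of Theorem~\ref{thm:maintheorem}.

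The main obstacle I anticipate is making the informal ``$F^2$ skips a parity class along a cofinal set of times'' rigorous: one must choose the sequence $(t_i)$ — equivalently, replace $t_i=\lceil K^{\sqrt i}\rceil$ by a suitable faster-growing, still-computable sequence compatible with all the resource bounds used in Section~\ref{sec:constr} — so that the set $\{t : t \text{ is in an even window}\}$ and the set $\{2t : t\in\N\}$ have the property that $\{t: 2t\in\text{even window}\}$ has density-zero-type intersection with the cofinal witnesses, while $\{t : t\in\text{even window}\}$ does not. In practice the cleanest route is probably to \emph{not} mess with $(t_i)$ at all, but rather to realize, via Theorem~\ref{thm:maintheorem}, a sequence $w$ in which $a$ appears in $w_i$ for $i$ in a computable set $E$ and $b$ for $i$ in its complement, chosen so that the ``time-doubling'' map on window-indices maps a cofinal subset of $E$ outside $E$ — e.g.\ $E=\{i : \lfloor\log_2 i\rfloor \text{ even}\}$, whose preimage under $i\mapsto $ (index of the window containing $2\cdot(\text{time in window } i)$) can be made to miss a cofinal set — and then check by hand that $a\in\Lmu(F)\setminus\Lmu(F^2)$. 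Ironing out this index bookkeeping, and confirming it survives the $O(1)$-density error terms from the segments that are not good, is where the real work lies; everything else is a direct appeal to the toolbox theorem and Lemma~\ref{lem:generic}.
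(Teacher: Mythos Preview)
Your approach has a genuine gap that cannot be repaired by index bookkeeping. The CA produced by Theorem~\ref{thm:maintheorem} has the following feature: the density profile of $\A^t(c)$ depends essentially only on which window $[t_i,t_{i+1})$ the time $t$ falls into (and on which writing sub-phase within that window), up to terms going to $0$. Each such window, and each sub-phase inside it, is an \emph{interval} of consecutive integers whose length tends to infinity. Consequently the set of even times $\{2s:s\in\N\}$ meets every window and every sub-phase, and the $\limsup$ over even times of $d_{\A^t(c)}(u)$ equals the $\limsup$ over all times. In other words, any CA obtained as a black box from Theorem~\ref{thm:maintheorem} automatically satisfies $\Lmu(F)=\Lmu(F^2)$. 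Changing the relative lengths of ``even'' versus ``odd'' windows, or choosing a fancy index set $E$, does not help: as long as the windows are intervals of length $\geq 2$, even times hit all of them. Your later idea of a ``parity bit tied to the age counter'' is a step in the right direction, but as formulated it still keys on window boundaries rather than on the parity of $t$, so it does not escape the problem.

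The paper's proof avoids this entirely by \emph{not} invoking Theorem~\ref{thm:maintheorem}. It uses only the counter/cleaning layer of Section~\ref{sub:cleaning} to carve out protected cones, and inside each cone the content alternates with the \emph{actual time parity}: all-black at odd $t$, all-white at even $t$. Since all cones born from a $\sti$ on the initial line share the same age, they are all synchronized, and two colliding cones of the same age simply merge (they necessarily agree on colour). By (a simplified version of) Lemma~\ref{lem:reliable}, the density of cells that are synchronized in this sense goes to $1$. Hence $F^2$ is $\mu$-nilpotent (its $\mu$-limit set is $\{{}^\omega\text{white}^\omega\}$), while the $\mu$-limit set of $F$ contains both the all-black and the all-white configurations. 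The crucial difference with your attempt is that the alternation is tied to $t\bmod 2$ itself, not to a slowly evolving window index; this is what makes $F$ and $F^2$ see different things.
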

\begin{proof}
   To construct such an $F$ it is sufficient to use the counter construction from the proof of Theorem~\ref{thm:maintheorem}, \textit{i.e.} the initialization step. We just use the trick of unary counters to build a growing uniform ``protected area'' alternating between two states: all black (odd steps), or all white (even steps). We keep the same collision rule described in the proof of Theorem~\ref{thm:maintheorem}:
  \begin{itemize}
  \item when two areas of different ages collide, the older is destroyed by the younger;
  \item when two areas of same age collide, they simply merge (it is possible since, having the same age, they have the same uniform content).
  \end{itemize}
  We say a cell is \emph{synchronized} at time $t_0$ if for any ${t\geq t_0}$ it is black when $t$ is odd and white when $t$ is even. Then, using a simplified version of Lemma~\ref{lem:reliable}, we can prove the following:
  \begin{center}
    \begin{minipage}{0.8\linewidth}
      \textbf{Claim.} Starting from a generic configuration, the density of cells which are synchronized at time $t$ goes to $1$ when $t$ grows.
    \end{minipage}
  \end{center}
  It follows that $F^2$ is $\mu$-nilpotent whereas the $\mu$-limit set of $F$ contains two configurations: the ``all black'' and the ``all white''.
\end{proof}

 We say that a CA is \emph{simply convergent for $\mu$} if the probability of appearence of a word $u$ converges for any $u$, \textit{i.e.}
\[\forall u\in A^\ast, \exists \alpha\in\mathbb{R}, \forall \epsilon>0, \exists t_0, \forall t\geq t_0 :\ \bigl|\A^t\mu([u]_0) - \alpha\bigr|\leq\epsilon.\]

Examples of simply convergent CA are $\mu$-nilpotent CA. Indeed, the probability of apparence of any word goes to $0$ except for one word of each size for which it necessarily goes to $1$.

If $F$ is simply convergent for $\mu$ then, for any ${t\geq 1}$, $F^t$ is simply convergent and $F$ and $F^t$ have the same $\mu$-limit set. The Theorem~\ref{theo:fdeux} above gives an example of CA which is not simply convergent.

As shown by the following theorem, the simple convergence assumption simplifies the $\mu$-limit set as well as some decision problems on it (to be compared to Theorems~\ref{thm:sigma3} and~\ref{thm:hardnilpo}).

\begin{theorem}
  \label{thm:simplyconvergent}
  Let $\mu$ be a computable translation invariant measure.
  \begin{itemize}
  \item if $\A$ is simply convergent for $\mu$ then $L_\mu(\A)$ is a $\Sigma_2^0$ set;
  \item there exists a $\Pi_2^0$ predicate that characterizes $\mu$-nilpotent CA among simply convergent CA;
  \item the set of simply convergent CA is $\Pi_3^0$ and it is $\Pi_3$-hard when $\mu$ is the uniform Bernoulli measure.
  \end{itemize}
\end{theorem}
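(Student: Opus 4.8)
The three items rest on the computable rational approximation $g$ of $\A^t\mu([u]_0)$ built in the proof of Theorem~\ref{thm:equica}, which makes ``$\A^t\mu([u]_0)\ge q$'' a $\Pi_1^0$ predicate of $(u,q,t)$. For the first item, when $\A$ is simply convergent the limit $\alpha_u=\lim_t\A^t\mu([u]_0)$ exists, so $u\in L_\mu(\A)$ iff $\alpha_u>0$ iff there are a positive rational $q$ and a threshold $t_0$ with $\A^t\mu([u]_0)\ge q$ for all $t\ge t_0$ — simple convergence being used only for the forward implication — and this is a $\Sigma_2^0$ description of $L_\mu(\A)$. Write $\Phi(\A,u)$ for this $\Sigma_2^0$ formula; it is uniform in $\A$ and $u$. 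For the second item, note that $L_\mu(\A)$ is always nonempty (any weak-$*$ limit point of $(\A^t\mu)_t$ is a probability measure supported in $\Lambda_\mu(\A)$, so $L_\mu(\A)$ has a word of every length) and closed under factors (since $[u]_0\subseteq[v]_i$ when $v$ is a factor of $u$ and $\A^t\mu$ is translation invariant); hence $\A$ is $\mu$-nilpotent iff $L_\mu(\A)$ contains no two distinct words of equal length, and on simply convergent CA this is captured by the $\Pi_2^0$ predicate $\neg(\exists n\,\exists u\neq v\in Q_\A^n\,(\Phi(\A,u)\wedge\Phi(\A,v)))$ (inside the negation, $\exists n$ is followed by a finite disjunction of $\Sigma_2^0$ statements, hence the body is $\Sigma_2^0$).

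For the third item the upper bound is immediate: $\A$ is simply convergent iff $\forall u\,\forall k\,\exists t_0\,\forall t,t'\ge t_0\;|\A^t\mu([u]_0)-\A^{t'}\mu([u]_0)|\le 1/k$, a $\Pi_3^0$ formula since the innermost relation is $\Pi_1^0$. For $\Pi_3$-hardness I reduce from $\overline{COF}$ — Turing machines halting on infinitely many inputs — which is $\Pi_3$-hard because $COF$ is $\Sigma_3$-hard. Given $\phi$, fix a computable $f:\N\to\N^2$ hitting each pair infinitely often and define a growing computable sequence $w$ over $Q_0=\{\mathtt B,\mathtt N\}$: on input $i$ compute $(k,l)=f(i)$ and $i_0=\max\{i'<i:f(i')=(k,l)\}$, run $\phi$ on each of $k,\dots,k+l$ for $2^i$ steps, and — exactly as in the proof of Theorem~\ref{thm:sigma3} — declare $i$ \emph{successful} when they all halt in total time $\tau>2^{i_0}$; then let $w_i$ be a length-$\lfloor\sqrt i\rfloor$ prefix of $(\mathtt B\mathtt N^k)^\omega$ if $i$ is successful, and $w_i=\mathtt N^{\lfloor\sqrt i\rfloor}$ otherwise. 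Applying Theorem~\ref{thm:maintheorem}(1) to $w$ with the uniform measure produces, computably in $\phi$, a CA $\A_\phi$.

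The whole point lies in the $\mathtt B$-density: a successful word with run start $k$ has $d_{w_i}(\mathtt B)\approx\frac1{k+1}$, a failed word has $d_{w_i}(\mathtt B)=0$. If $\phi\notin COF$, then for each $k$ only finitely many runs $(k,l)$ are all-halting, so the successful indices involve unbounded $k$; hence $d_{w_i}(u)\to\mathbf 1[u=\mathtt N^{|u|}]$ for every $u$, $L_w=\mathtt N^*$, and $\A_\phi$ is $\mu$-nilpotent, hence simply convergent. If $\phi\in COF$, with threshold $k^\ast$, then every run $(k^\ast,l)$ is all-halting and therefore witnessed, giving infinitely many successful indices with run start $k^\ast$ along which $d_{w_i}(\mathtt B)\to\frac1{k^\ast+1}>0$, while the ``$\tau>2^{i_0}$'' bookkeeping leaves infinitely many failed indices with $d_{w_i}(\mathtt B)=0$; combining Proposition~\ref{prop:outwellseg}, Remark~\ref{rem:struct_seg} and Lemma~\ref{lem:generic}, $\A^t\mu([\mathtt B]_0)$ is then bounded below by a positive constant at the end of each of the former eras and tends to $0$ along the latter, so $\A_\phi$ is not simply convergent. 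This yields a many-one reduction of $\overline{COF}$ to the set of simply convergent CA.

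The main obstacle is precisely this dichotomy: forcing non-simple-convergence to coincide with co-finiteness — rather than with the much weaker ``$\phi$ halts on some input'' — requires the oscillating symbol to be \emph{diluted} as the run start $k$ grows, so that a persistent oscillation can only be sustained by an infinite family of all-halting runs sharing a common, bounded start; this is the density mechanism already at work in Theorem~\ref{thm:sigma3}, and it is also what makes the ``$\tau>2^{i_0}$'' bookkeeping necessary so that each run contributes at most one witness word. One last routine check: the re-timing of Lemma~\ref{lem:incrseq} (needed so the words $w_i$ fit the time/space budget of the construction of Theorem~\ref{thm:maintheorem}) only repeats and delays terms of $w$, so it preserves both the limit in the first case and the oscillation in the second.
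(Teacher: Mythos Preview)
Your argument is correct and, for the first two items and the $\Pi_3^0$ upper bound, is essentially identical to the paper's. One cosmetic slip: you describe $\overline{COF}$ as ``Turing machines halting on infinitely many inputs'', but $\overline{COF}$ is the set of machines that \emph{fail} to halt on infinitely many inputs; your actual reduction uses the correct set, so this is only a wording error.

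For the $\Pi_3$-hardness you take a slightly different route. The paper does not build a fresh growing sequence: it simply feeds a $\mu$-nilpotent automaton $\A$ (so $L_\mu(\A)=\std^\ast$) into the machinery of Proposition~\ref{prop:rice} and observes that the resulting $\B$ is $\mu$-nilpotent (hence simply convergent) when $\phi\notin COF$, while when $\phi\in COF$ the subsequence $f(i_j)=(0,0)$ forces $d_{w_{i_j}}(\std)\to 1$ even though $L_w=Q_\A^\ast$, which rules out simple convergence. Your construction over $\{\mathtt B,\mathtt N\}$ is instead a direct specialisation of the mechanism in Theorem~\ref{thm:sigma3}: you replace the generic-configuration prefixes $u_i,v_i$ of Proposition~\ref{prop:rice} by the explicit words $(\mathtt B\mathtt N^k)^\ast$ and $\mathtt N^\ast$. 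The dilution-in-$k$ idea and the ``$\tau>2^{i_0}$'' bookkeeping are exactly those of the paper; what you gain is a self-contained two-letter argument that makes the oscillation of $\A^t\mu([\mathtt B]_0)$ completely transparent, what the paper gains is brevity by reusing Proposition~\ref{prop:rice} verbatim. Your final remark that Lemma~\ref{lem:incrseq} only repeats terms of $w$---hence preserves both convergence and oscillation---is a point the paper leaves implicit but is indeed needed.
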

\begin{proof}
  If $\A$ is simply convergent for $\mu$, we have the following characterization of $L_\mu$:
  \[u\in L_\mu \Leftrightarrow \exists t_0, \exists \epsilon, \forall t>t_0 :\ F^t\mu([u]_0)>\epsilon.\]
  We deduce the first item of the theorem.

  $\A$ is not $\mu$-nilpotent exactly when there are two different words of equal size in $L_\mu$. With the hypothesis of simple convergence, it can be written:
  \[\exists u,v, |u|=|v|,u\not=v, \exists t_0, \exists \epsilon_u, \exists \epsilon_v, \forall t>t_0 :\ F^t\mu([u]_0)>\epsilon_u\wedge F^t\mu([v]_0)>\epsilon_v\]
  and the second item of the theorem follows directly.

  To show the third item, let us first remark that simple convergence can be expressed by a $\Pi_3^0$ formula saying that the sequence of probabilities of appearance along time of each word is a Cauchy sequence:
  \[\forall u\in A^\ast, \forall\epsilon>0, \exists N, \forall p,q > N:\ \bigl|\A^p\mu([u]_0) - \A^q\mu([u]_0)\bigr|\leq\epsilon.\]
  Finally, for $\Pi_3^0$-hardness it is sufficient to verify that a subset of the CA constructed in the proof of Proposition~\ref{prop:rice} are either $\mu$-nilpotent (hence simply convergent), or not simply convergent. More precisely, in the construction, consider a $\mu$-nilpotent CA $\A$ ($\Lmu(\A)=\std^*$ for some special state $\std$). First note that in Theorem~\ref{thm:maintheorem} the simple convergence of the CA is equivalent to the simple convergence of the densities of words in the computable sequence $(w_i)_i$. If you take a machine $\phi \notin COF$, then $\B$ is $\mu$-nilpotent as shown in Claim~\ref{cla:ricenoncof}. In the other case, with $\phi \in COF$, we still have ${\liminf_i d_{w_i}(\std) = 1}$. Indeed, you get the result with a sequence $(i_j)_j$ where $f(i_j)=(0,0)$. Hence, as in this case $L_w=Q_{\A}^*$ (Claim~\ref{cla:ricecof}), the convergence cannot be simple.

\end{proof}

Complexity considerations allow to prove that some $\mu$-limit sets are impossible to obtain with simple convergence. We currently do not know any direct proof of this fact.

\begin{corollary}
  There exists a CA whose $\mu$-limit set can not be the $\mu$-limit set of any simply convergent CA.
\end{corollary}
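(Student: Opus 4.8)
The plan is to extract a counting/complexity obstruction from Theorem~\ref{thm:simplyconvergent}. The first item there says that if $\A$ is simply convergent for a computable measure $\mu$, then $L_\mu(\A)$ is a $\Sigma_2^0$ set. On the other hand, Theorem~\ref{thm:sigma3} produces a CA $\A_0$ such that $\Lmu(\A_0)$ is $\Sigma_3$-complete for every fully supported Bernoulli measure (in particular for the uniform measure, which is computable). Since the language of a $\mu$-limit set determines the $\mu$-limit set and vice versa, and $\Sigma_3$-completeness is strictly above $\Sigma_2^0$ in the arithmetical hierarchy, no simply convergent CA can have the same $\mu$-limit language as $\A_0$. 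Hence $\A_0$ itself is the desired witness.

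Concretely, I would argue as follows. Fix $\mu$ to be the uniform Bernoulli measure on the appropriate alphabet; it is computable and translation invariant, so the hypotheses of Theorem~\ref{thm:simplyconvergent} apply to any CA over that alphabet. Let $\A_0$ be the CA of Theorem~\ref{thm:sigma3}, whose $\mu$-limit language $\Lmu(\A_0)$ is $\Sigma_3$-complete. Suppose for contradiction that there is a simply convergent CA $\B$ with $\Lmu(\B)=\Lmu(\A_0)$ (equality of $\mu$-limit sets, equivalently of their languages). By the first item of Theorem~\ref{thm:simplyconvergent}, $L_\mu(\B)$ is $\Sigma_2^0$. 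But $L_\mu(\B)=\Lmu(\A_0)$ is $\Sigma_3$-hard, and a $\Sigma_3$-hard set cannot be $\Sigma_2^0$ (otherwise every $\Sigma_3^0$ set would be $\Sigma_2^0$, collapsing the hierarchy, which is false). This contradiction shows no such $\B$ exists, proving the corollary.

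One subtlety to address is the alphabet: two CA with the same $\mu$-limit set need not a priori be defined over the same alphabet, but a $\mu$-limit set, being a subshift over some finite $Q\subseteq\allalphabets$, determines its alphabet as (a subset of) the symbols actually appearing, so comparing languages is unambiguous; and the complexity bound of Theorem~\ref{thm:simplyconvergent} holds for whatever alphabet $\B$ uses, as long as $\mu$ restricted to that alphabet is computable — which holds for the uniform measure on any finite alphabet. I do not expect a real obstacle here: the proof is essentially a one-line diagonalization against the arithmetical hierarchy, and the only care needed is to cite the right prior results (Theorems~\ref{thm:sigma3} and~\ref{thm:simplyconvergent}) and to note the non-collapse of the hierarchy.
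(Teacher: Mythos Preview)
Your proof is correct and follows essentially the same approach as the paper: invoke Theorem~\ref{thm:sigma3} to obtain a CA with $\Sigma_3^0$-hard $\mu$-limit language, then use the first item of Theorem~\ref{thm:simplyconvergent} to see that any simply convergent CA has a $\Sigma_2^0$ $\mu$-limit language, concluding by non-collapse of the arithmetical hierarchy. Your additional remarks on the alphabet issue are reasonable but not strictly needed for the argument.
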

\begin{proof}
  By Theorem~\ref{thm:sigma3} there exists a $\Sigma_3^0$-hard $\mu$-limit set. However, Theorem~\ref{thm:simplyconvergent} shows that simply convergent CA produce $\mu$-limit sets which are only $\Sigma_2^0$.
\end{proof}

\subsection{Cesaro mean}

The construction from Theorem~\ref{thm:maintheorem} allowed us to
build complex $\mu$-limit sets. It also shows that this complexity
can be completely wiped out when considering the Cesaro mean.

\begin{theorem}
  For any cellular automaton, there exists another one with the same
  $\mu$-limit set (possibly up to a single uniform configuration) but which is $\mu$-Cesaro nilpotent.
\end{theorem}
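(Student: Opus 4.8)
The plan is to derive everything from the construction toolbox, namely Theorem~\ref{thm:maintheorem}(2): that statement already produces, from two growing computable sequences $w,w'$, a CA whose ordinary persistent language is $L_w\cup L_{w'}$ while its Cesaro-persistent language is only $L_{w'}$. So the whole point is to feed it, as the second sequence $w'$, something whose persistent language is exactly the language of a single uniform configuration; then the Cesaro-limit set collapses to that configuration while the ordinary $\mu$-limit set is unchanged except for that configuration.

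Concretely, I would fix a CA $\A$ over alphabet $Q_\A$ together with a full-support Bernoulli measure $\mu$ with computable coefficients (e.g. the uniform one). First, by the corollary of Theorem~\ref{thm:maintheorem} (together with the remark after it, which extends the corollary to any full-support Bernoulli measure with computable coefficients), there is a growing computable sequence $w=(w_i)_i\in\W(Q_\A)$ with $L_\mu(\A)=L_w$. Next, pick a fresh state $a\notin Q_\A$, put $Q_0=Q_\A\cup\{a\}$, view $w$ as an element of $\W(Q_0)$ (this does not change $L_w$, since any word containing $a$ has zero density in every $w_i$), and set $w'=(a^{i+1})_i\in\W(Q_0)$. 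A one-line computation gives $L_{w'}=a^*$: for $u=a^k$ one has $d_{a^{i+1}}(a^k)=\frac{i-k+2}{i+1-k}\to 1$, while $|a^{i+1}|_u=0$ as soon as $u$ contains a letter other than $a$. Now apply Theorem~\ref{thm:maintheorem}(2) to $w,w'\in\W(Q_0)$: it yields a CA $\B$ over some alphabet $Q\supseteq Q_0$ with $L_\mu(\B)=L_w\cup L_{w'}=L_\mu(\A)\cup a^*$ and $C_\mu(\B)=L_{w'}=a^*$ for every full-support Bernoulli measure on $Q$. The second identity says $\Lambda C_\mu(\B)=\{a^\Z\}$, i.e.\ $\B$ is $\mu$-Cesaro nilpotent. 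For the first: since $L_\mu(\A)\subseteq Q_\A^*$ and $a\notin Q_\A$, the only words in $L_\mu(\A)\cup a^*$ containing the letter $a$ are powers of $a$, so a configuration whose language lies in $L_\mu(\A)\cup a^*$ either avoids $a$ entirely — hence belongs to $\mul(\A)$ — or equals $a^\Z$. Thus $\mul(\B)=\mul(\A)\cup\{a^\Z\}$, the $\mu$-limit set of $\A$ up to the single uniform configuration $a^\Z$, which is exactly the conclusion.

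There is essentially no obstacle beyond bookkeeping, since all the work is already packaged in Theorem~\ref{thm:maintheorem}; the only points that deserve a word of care are that the corollary (hence the identity $L_\mu(\A)=L_w$) needs $\mu$ to have computable coefficients, which is why I restrict to such measures, and that the alphabet enlargement inherent in Theorem~\ref{thm:maintheorem} forces the final statement to be read for a full-support Bernoulli measure on the enlarged alphabet $Q$ (e.g.\ the uniform one). One may also note that the ``up to a single uniform configuration'' clause is harmless here: $a$ being a fresh state, $a^\Z$ never belonged to $\mul(\A)$ in the first place.
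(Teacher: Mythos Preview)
Your proof is correct and follows exactly the approach of the paper: apply Theorem~\ref{thm:maintheorem}(2) with $w$ realizing $L_\mu(\A)$ (via the corollary) and $w'$ a constant-letter sequence so that $L_{w'}$ is a single-letter language. You have simply spelled out the details (choice of the fresh letter $a$, verification that $L_{w'}=a^*$, and the passage from languages to subshifts) that the paper leaves implicit.
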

\begin{proof}
  It's a direct application of Theorem~\ref{thm:maintheorem} where
  $w$ is chosen so that $L_w$ is the $\mu$-limit set of the given CA
  and $w'$ chosen so that $L_{w'}$ contains only $1$ letter.
\end{proof}

\subsection{Non-recursive convergence time}
 
Here we want to point out the fact that convergence to the $\mu$-limit language may actually be really late, in particular the next proposition states that the convergence rate may be slower than any recursive function.

\begin{prop}
Given an enumeration of Turing machines, denote $T_m$ the halting time of machine $m\in\N$ on input $0$. If $m$ does not halt on $0$, $T_m=0$.

There exists a cellular automaton $F$ (with $0\in Q_F$) such that:
\begin{itemize}
\item $\forall n\in\N, \exists t_n \geq \max\left\{T_m : m\leq n\right\}, F^{t_n}\mu([0])\geq \frac{1}{2n}$;
\item $0\notin \Lmu(F)$.
\end{itemize}

\end{prop}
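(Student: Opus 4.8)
The plan is to apply Theorem~\ref{thm:maintheorem} (its first part) to a carefully designed growing computable sequence $w=(w_i)_{i\in\N}$ over $Q_0=\{0,1\}$, obtaining a CA $F$ with $\Lmu(F)=L_w$ for every fully supported Bernoulli measure $\mu$. I would design $w$ so that, on the one hand, $d_{w_i}(0)\to 0$ as $i\to\infty$ (which immediately gives $0\notin L_w=\Lmu(F)$, i.e.\ $F^t\mu([0])\to 0$), and on the other hand, for every machine $m$ halting on input $0$ there is an index $i_m$ with $d_{w_{i_m}}(0)\ge\frac{1}{m+1}$ and $t_{i_m}\ge T_m$, where $t_i=\lceil K^{\sqrt i}\rceil$ are the merging times of the construction of Section~\ref{sec:constr}. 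Since that construction internally slows the sequence down through Lemma~\ref{lem:incrseq}, the second property has to be read as: the word $w_{i_m}$ is actually displayed by $F$ over a time window that starts \emph{no earlier than} $t_{i_m}$, and at some time $t$ of that window one has $F^t\mu([0])\ge(1-o(1))\,d_{w_{i_m}}(0)$ — which is precisely what Remark~\ref{rem:struct_seg}, Proposition~\ref{prop:outwellseg} and Lemma~\ref{lem:generic} give when evaluated in the middle of the window (when copies of $w_{i_m}$, separated by delimiters, fill almost all good segments). Granting this, the statement follows: for $n\ge1$ pick $m^*\le n$ maximizing $T_m$ among the halting machines $m\le n$; then the bump for $m^*$ occurs at some time $t_n\ge t_{i_{m^*}}\ge T_{m^*}=\max\{T_m:m\le n\}$ with $F^{t_n}\mu([0])\ge(1-o(1))\frac{1}{n+1}\ge\frac{1}{2n}$ for $n$ large (the finitely many remaining $n$, and the case where no machine $\le n$ halts, are handled at the end).

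To build $w$, fix a threshold $I_0$ with $t_j\ge j$ for all $j\ge I_0$ (e.g.\ $I_0=16$ when $K=2$). On input $i$: if $i<I_0$ output $1^i$; otherwise simulate machines $0,1,\dots,i$ on input $0$ for $i$ steps each, form $H_i=\{m\le i:\ m\text{ halts on }0\text{ in at most }i\text{ steps}\}$, recompute $H_{i-1}$ the same way, and set $N_i=H_i\setminus H_{i-1}$. If $N_i=\emptyset$ output $w_i=1^i$; otherwise put $m_i=\min N_i$ and output $w_i=(01^{m_i})^{\lceil i/(m_i+1)\rceil}$, so that $d_{w_i}(0)=\frac{1}{m_i+1}$ and $|w_i|\ge i$. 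This $w$ is a growing computable sequence. Each halting machine $m$ lies in exactly one $N_i$, namely $i=\max(m,T_m,I_0)$, and non-halting machines lie in none; hence for every $M$ only finitely many $i$ satisfy $m_i\le M$, so $m_i\to\infty$ along $\{i:N_i\ne\emptyset\}$ and $d_{w_i}(0)\to 0$ — the first required property. For the second, if $m$ halts then $m\in N_i$ with $i=\max(m,T_m,I_0)$, whence $m_i\le m$ so $d_{w_i}(0)=\frac{1}{m_i+1}\ge\frac{1}{m+1}$, and $i\ge I_0$ forces $t_i\ge i\ge T_m$.

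To also cover the finitely many $n$ for which no machine $\le n$ halts on input $0$, I would superimpose on $w$ a sparse recursive schedule of \emph{safety} bumps: at each index of the form $2^{2^k}$, force $d_{w_i}(0)$ to be at least $\frac{1}{k+1}$ (keeping the larger height if a detection bump is already present there). These extra bumps still tend to $0$, so $0\notin L_w$ is unaffected; and for such an $n$ the bump at index $2^{2^{n-1}}$ is displayed at a (very large) time $t_n\ge 0=\max\{T_m:m\le n\}$ with $F^{t_n}\mu([0])\ge(1-o(1))\frac{1}{n}\ge\frac{1}{2n}$.

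The main point — and the only genuinely delicate step — is checking that the internal slow-down of Theorem~\ref{thm:maintheorem} only \emph{delays} the moment each $w_i$ is displayed by $F$, which is a direct reading of the proof of Lemma~\ref{lem:incrseq} (the word output on input $i'$ is always some $w_j$ with $j\le i'$, and every $w_j$ does get output), so that a bump engineered to appear no earlier than $t_i$ indeed appears no earlier than $t_i$; everything else is a routine instance of the density bookkeeping already carried out for Theorem~\ref{thm:maintheorem}. The conceptual idea to keep in mind is that assigning machine $m$ a bump of relative height $\asymp\frac{1}{m+1}$ located past time $T_m$ simultaneously lets the density of $0$ vanish (the ``newly-detected'' trick forces $m_i\to\infty$) and makes this vanishing non-recursively slow.
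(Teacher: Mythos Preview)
Your proof is correct and follows essentially the same strategy as the paper: apply Theorem~\ref{thm:maintheorem} to a sequence $(w_i)$ in which $d_{w_i}(0)\to 0$, but with a ``bump'' of height $\asymp 1/m$ scheduled at some index $i\ge T_m$ for each halting machine $m$, and then use that the construction (including the Lemma~\ref{lem:incrseq} slow-down) can only \emph{delay} the time at which $w_i$ is displayed, so that $t_i\ge i$ transfers the index bound into the required time bound. The paper implements the scheduling via an enumeration $f$ with infinite preimages and $f(i)\le i$, marking the first $i\in f^{-1}(m)$ at which the simulation halts as ``successful''; your newly-detected set $N_i=H_i\setminus H_{i-1}$ is just another way to ensure each halting machine triggers exactly one bump at an index $\ge T_m$, and your explicit handling of the slow-down is a point the paper leaves implicit. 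One small wrinkle: your formula $i=\max(m,T_m,I_0)$ for the detection index is not literally correct when $\max(m,T_m)<I_0$ (such an $m$ already lies in $H_{I_0-1}$ and so never enters any $N_i$), but this affects only finitely many machines and is absorbed by the safety bumps you add; the paper's enumeration avoids this bookkeeping altogether.
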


\begin{proof}
  To prove it we use again Theorem~\ref{thm:maintheorem} with some growing computable sequence $w\in\mathbb{W}(Q)(0\in Q)$ that we will describe. Note first that, due to the construction used in the proof of Theorem~\ref{thm:maintheorem}, the first point of the theorem is implied by
  \[\forall n\in\N, \exists i_n \geq \max\left\{T_m : m\leq n\right\}, d_{w_{i_n}}(0)\geq \frac{1}{n}\]

  Take an  enumeration $f$ of the integers such that the preimage of any integer is infinite and $\forall i\in\N, f(i)\geq i$. For $i\in\N$  with $f(i)=m$, simulate the computation of machine $m$ with input $0$ during $i$ steps. As in the proofs of Theorem~\ref{thm:sigma3} and Proposition~\ref{prop:rice}, for each $m\in\N$, the smallest $i$ such that the computation reaches its end is said to be successful and failed in the other case. If $i\in f^{-1}(m)$ is successful, take  $w_i=(1^{m-1}0)^i$, else $w_i=1^{im}$.

 Thus $0$ has density $\frac{1}{m}$ in the writing layers of segments only when the simulation of the machine halts for the smallest $i\in f^{-1}(m)$, and $0$ otherwise. The two points of the result are now easily verified.

 For the first point, given $n\in\N$, consider $m_n$ such that $T_{m_n}=\max\left\{T_m :  m\leq n\right\}$. There exists $i_n\in f^{-1}(m_n)$ successful, hence $d_{w_{i_n}}(0)\geq \frac{1}{n}$. 
 
 For the second point, given $n\in\N$, there exists $i_n\in\N$ such that every machine $m\leq n$ halts in less than $i_n$ steps or never. Now, take $j_n\geq i_n$ such that every $m\leq n$ has been enumerated between $i_n$ and $j_n$. Thus, for every  $i\geq j_n$ the density of the word $0$ is  $d_{w_i}(0)\leq\frac{1}{n}$.

\end{proof}

\section{Recap of results}

In this section $\mu$ denotes the uniform Bernoulli measure. First we give comparative recap of complexity of properties or problems concerning limit sets and $\mu$-limit sets.

\newcommand\resref[2]{
  \begin{tabular}{c}
    #1\\{\small (see #2)}
  \end{tabular}
}

\begin{center}
\begin{tabular}{r|c|c}
\bfseries Problem or property  & \bfseries Limit Set & \bfseries $\mu$-Limit Set\\ \toprule
\textit{Being a singleton} & \resref{$\Sigma^0_1$-complete}{\cite{Kari-1992}} & \resref{$\Pi_3^0$-complete}{Thm.~\ref{thm:hardnilpo}}\\
 \midrule
\textit{Any non-trivial property} & \resref{$\Sigma_1^0$-hard}{\cite{Kari-1994}}  & \resref{$\Pi_3^0$-hard}{Thm.~\ref{thm:rice}} \\
 \midrule
\textit{Worst-case language} & \resref{$\Pi^0_1$-complete}{\cite{Hurd-1987}} & \resref{$\Sigma_3^0$-complete}{Thm.~\ref{thm:sigma3}}\\
 \midrule
\textit{Simplest configuration} & always uniform & \resref{can be $\alpha$-complex}{Cor.~\ref{coro:kolmo}}\\
 \midrule
\textit{Simplest quasi-periodicity} & always periodicity & \resref{can be not recursively bounded}{Cor.~\ref{coro:quasiper}}\\
 \bottomrule
\end{tabular}
\end{center}

Below is a recap on how the complexity of some problems is affected by adding hypotheses on the input CA.

\begin{center}
\begin{tabular}{r|c|c}
\bfseries Type of input CA  & \bfseries Worst $L_\mu$ & \bfseries $\mu$-Nilpotency\\ \toprule
\textit{General case} & \resref{$\Sigma_3^0$-complete}{Thm.~\ref{thm:sigma3}} & \resref{$\Pi_3^0$-complete}{Thm.~\ref{thm:hardnilpo}}\\
 \midrule
\textit{Equicontinuous} & \resref{$\Sigma_1^0$}{Thm.~\ref{thm:equica}}  & \resref{$\Sigma_2^0$}{Prop.~\ref{prop:equica}} \\
 \midrule
\textit{Simply convergent} & \resref{$\Sigma^0_2$}{Thm.~\ref{thm:simplyconvergent}} & \resref{$\Pi_2^0$}{Thm.~\ref{thm:simplyconvergent}} \\
 \bottomrule
\end{tabular}
\end{center}

  As shown in~\cite{Hellouin-Sablik-2013} it is certainly possible to generalize the results obtained here for large sets of measures (which was not the purpose of the present paper). In this context, it becomes relevant to consider the particular case of surjective CA. Indeed, as the uniform Bernoulli measure is preserved by surjective CA, the $\mu$-limit set is the full shift, but for another measure, the question is open.

 Naturally, the extension of these results can be discussed for higher dimensions. In particular, some of them should be reached given an equivalent construction in higher dimensions.

\section*{Acknowledgment}
We are grateful for the time spent by the anonymous referees on the
first version of this paper and for the incitative to write a better
version through their numerous comments.

\bibliographystyle{alpha}
\bibliography{paper_mulimit}

\end{document}